\def\dOi{10(3:24)2014}
\subjclass{F.4.3}
\newcommand{\mynote}[3][]{\todo[caption={\sf #3}, color={%
    \ifnum#2=0 green!20
    \else\ifnum#2=1 orange!30
    \else\ifnum#2=2 yellow!20
    \else\ifnum#2=3 cyan!20
    \else magenta!20\fi\fi\fi\fi}, size=\tiny, #1]{\renewcommand{\baselinestretch}{1}\selectfont\sf#3}\xspace}
\newcommand\mypar[1]{\par\medskip\noindent\textbf{#1}}
\definecolor{my1}{cmyk}{0,.6,0,0}
\definecolor{my2}{cmyk}{.3,.0,.0,.0}
\newcommand\nat{\ensuremath{\mathbb{N}}\xspace}
\newcommand\As{\ensuremath{\mathcal{A}}\xspace}
\newcommand\Ps{\ensuremath{\mathcal{P}}\xspace}
\newcommand\Cs{\ensuremath{\mathcal{C}}\xspace}
\newcommand{\lt}{\textup{LT}\xspace}
\newcommand{\ltt}{\textup{LTT}\xspace}
\newcommand{\lttm}{\textup{LTT+MOD}\xspace}
\newcommand\kloop{$k$-loop\xspace}
\newcommand\kloops{$k$-loops\xspace}
\newcommand\image[1]{$#1$-image\xspace}
\newcommand\images[1]{$#1$-images\xspace}
\newcommand\kimage{\image{k}}
\newcommand\kimages{\images{k}}
\newcommand\profile[1]{$#1$-profile\xspace}
\newcommand\profiles[1]{$#1$-profiles\xspace}
\newcommand\kprofile{\profile{k}}
\newcommand\kprofiles{\profiles{k}}
\newcommand\ltteq[2]{\ensuremath{\equiv_{#1}^{#2}}\xspace}
\newcommand\kdltteq{\ltteq{k}{d}}
\newcommand\lteq[1]{\ensuremath{\equiv_{#1}}\xspace}
\newcommand\klteq{\lteq{k}}
\newcommand{\ltclos}[2]{\ensuremath{[#1]_{#2}}}
\newcommand{\lttclos}[3]{\ensuremath{[#1]_{#2}^{#3}}}
\newcommand{\fos}{\ensuremath{\textup{FO}(+1)}\xspace}
\newcommand\decop[1]{\ensuremath{#1}-decomposition\xspace}
\newcommand\decops[1]{\ensuremath{#1}-decompositions\xspace}
\newcommand\pfsdecomp{\decop{\Ps}}
\newcommand\pfsdecomps{\decops{\Ps}}
\newcommand\Sep{\ensuremath{\mathcal{S}}\xspace}
\newcommand\C{\ensuremath{\mathcal{C}}\xspace}
\newcommand\frb{\ensuremath{\mathfrak{b}}\xspace}
\newcommand\frp{\ensuremath{\mathfrak{p}}\xspace}
\newcommand\frs{\ensuremath{\mathfrak{s}}\xspace}
\let\le\leqslant
\let\leq\leqslant
\let\ge\geqslant
\let\geq\geqslant
\theoremstyle{plain}
\newtheorem*{claim}{Claim}
\title[On Separation by LT and LTT]{On Separation by Locally Testable
  and Locally Threshold Testable Languages}
\author{Thomas~Place}
\address{LaBRI, Bordeaux University, France}
\email{firstname.lastname@labri.fr}  %
\thanks{Supported by the Agence Nationale de la Recherche ANR 2010 BLAN 0202 01 FREC}
\author{Lorijn~van Rooijen}
\author{Marc~Zeitoun}
\keywords{Automata, Logics, Monoids, Locally testable, Locally
  Threshold Testable, Separation,
    Context-free Language.}
\begin{document}

\begin{abstract}
  A separator for two languages is a third language containing the
  first one and disjoint from the second one. We investigate the
  following decision problem: given two regular input languages,
  decide whether there exists a locally testable (resp.\ a locally
  threshold testable) separator. In both cases, we design a decision
  procedure based on the occurrence of special patterns in automata
  accepting the input languages.  We prove that the problem is
  computationally harder than deciding membership. The correctness
  proof of the algorithm yields a stronger result, namely a
  description of a possible separator.  Finally, we discuss the same
  problem for context-free input~languages.
\end{abstract}

\maketitle

\section{Introduction}
\label{sec:intro}

\makeatletter{}%
\mypar{Context.} The strong connection between finite state devices
and descriptive formalisms, such as first-order or monadic
second-order logic, has been a guideline in computer science since the
seminal work of
B\"uchi~\cite{Buchi:Weak-Second-Order-Arithmetic-Finite:1960:a},
Elgot~\cite{Elgot:Decision-Problems-Finite-Automata:1961:a} and
Trakhtenbrot~\cite{Trakh61}.
This bridge has continuously
been fruitful, disseminating tools and bringing a number of
applications outside of its original research area. For instance,
compiling logical specifications into various forms of automata has
become one of the most successful methods in automatic program
verification~\cite{DBLP:conf/lics/VardiW86}.

One of the challenging issues when dealing with a logical formalism is to
precisely understand its expressiveness and its limitations. While
solutions to \emph{decide} such logics often use a compilation procedure from
formulas to automata, capturing the expressive power amounts to the
opposite translation: given a language, one wants to know whether one
can reconstruct a formula that describes it. In other words, we want
 to solve an instance of the \emph{membership problem}, which asks whether
an input language belongs to some given class.

For regular languages of finite words, the main tool developed to
capture this expressive power is the syntactic
monoid~\cite{pin:hal-00143946}: this is a finite, computable, algebraic
abstraction of the language, whose properties make it possible to
decide  membership. An emblematic example is the membership problem
for the class of first-order definable languages, solved by
Sch\"utzenberger~\cite{Schutzenberger:finite-monoids-having-only:1965:a}
and McNaughton and Papert~\cite{mcnaughton&papert:1971:counter}, which
has led to the development of algebraic methods for obtaining
decidable characterizations of logical
or combinatorial~properties.

\mypar{The separation problem and its motivations.} We consider here the
\emph{separation problem} as a generalization of the membership problem.
Assume we are given two classes of languages $\C$ and \Sep. The
question is, given \emph{two} input languages from $\C$, 
 whether we can separate them by a language from~\Sep. Here, we say
that a language \emph{separates} $K$ from $L$ if it contains $K$ and
is disjoint from $L$.  An obvious necessary condition for separability
is that the input languages $K,L$ be disjoint. A separator language
\emph{witnesses} this condition.

\smallskip
One strong motivation for this problem is to understand the limits of
logics over finite words. Notice that membership reduces to separation
when \C is closed under complement, because checking that a language
belongs to \Sep amounts to testing that it is \Sep-separable from its
complement. Deciding \Sep-separation is also more difficult than
deciding membership in~\Sep, as one cannot rely on algebraic tools
tailored to the membership problem. It may also be computationally
harder, as we shall see in this paper. Thus, solving the separation
problem requires a deeper understanding of \Sep than what is
sufficient to check membership: one not only wants to decide whether
\Sep is powerful enough to \emph{describe} a language, but also to
decide whether it can \emph{discriminate} between two
input~languages. This discriminating power provides more accurate
information than the expressive power.

\mypar{Contributions.}  %
In general, elements of $\C$ cannot always be
separated by an element of~$\Sep$ and %
there is no minimal separator wrt.\  inclusion. We
are interested in the following~questions:
\begin{enumerate}[label=\({\alph*}]
\item can we \emph{decide} whether one can separate two given
  languages of $\C$ by a language of~$\Sep$?
\item what is the \emph{complexity} of this decision problem?
\item if separation is possible, can we \emph{compute} a separator, and at
  which cost?
\end{enumerate}

\smallskip We investigate the separation problem by locally and locally
threshold testable languages. A language is called \emph{locally testable
  (LT)} if membership of a word can be tested by inspecting its prefixes,
suffixes and infixes up to some length (which depends on the language).  The
membership problem for this class was raised by McNaughton and
Papert~\cite{mcnaughton&papert:1971:counter}, and solved independently by
McNaughton and
Zalcstein~\cite{Zalcstein:Locally-testable-languages:1972:a,McNaughton:Algebraic-decision-procedures-local:1974:a}
and by Brzozowski and
Simon~\cite{Brzozowski&Simon:Characterizations-locally-testable-events:1973:a}. If
the input language is given by a deterministic automaton, the membership problem to LT is 
\textsc{Ptime}~\cite{Kim&McNaughton&McCloskey:polynomial-time-algorithm-local:1989:a}.
This class has several generalizations. The most studied one is that of
\emph{locally threshold testable languages (LTT)}, where counting infixes is
allowed up to some threshold. These are the languages definable in \fos,
\emph{i.e.}, first-order logic with the successor relation (but without the
order). Again, membership is
decidable~\cite{Therien&Weiss:Graph-congruences-wreath-products:1985:a}, and
can actually be again tested in
\textsc{Ptime}~\cite{Pin:expressive-power-existential-first:1996:a,Pin:Expressive-power-existential-first-order:2005:a,Trahtman:Algorithm-Verify-Local-Threshold:2001:a}
if the input language is given by a deterministic automaton.

\smallskip
Actually, the decision problem $(a)$ has been rephrased in purely algebraic
terms~\cite{Almeida:SomeAlg:99}: solving the separation problem for a
class \Sep amounts to computing the so-called \emph{2-pointlike sets}
for the algebraic variety corresponding to \Sep. It has been shown
that both the varieties corresponding to \lt and to \ltt\footnote{These
  algebraic varieties are respectively the pseudovariety of semigroups
  \textsf{LSl} of local semilattices, and the semidirect product
  $\mathsf{Acom}*\mathsf{D}$ of commutative and aperiodic semigroups
  with right zero semigroups.} have computable pointlike sets. This is
a consequence
of~\cite{Costa&Nogueira:Complete-reducibility-pseudovariety:2009:a,Costa:Free-profinite-locally-idempotent:2001:a}
for \lt and
of~\cite{Beauquier&Pin:Languages-scanners:1991:a,Straubing:Finite-semigroup-varieties-form:1985:a,Steinberg:98,Steinberg:01}
for \ltt.
However, this approach suffers some drawbacks.
\begin{itemize}
\item First, the proofs are purely algebraic, and they provide no
    insight on the underlying class \Sep of regular languages
  itself. Instead, the proofs are based on algebraic properties of a
  profinite semigroup that depends on~\Sep, and which is in general uncountable.
\item The proofs involve difficult
  results from profinite semigroup theory, and therefore require a
  significant background in algebra and topology: this is the second
  drawback.
 The separation problem is indeed equivalent to showing
  that the topological closure of the input languages wrt.\ the
  profinite topology do intersect. Deciding nonemptiness of such an
  intersection in turn requires a deep understanding of the algebraic
  properties of this profinite semigroup.
\item Finally, this approach only provides a yes/no answer, but no
  description of what an actual separator might be.
\end{itemize}
The present paper alleviates these drawbacks, by only using elementary
pumping arguments, and providing bounds on the parameters defining the
separators.

\smallskip Our results are as follows: we show that separability of regular
languages by \lt and \ltt languages is decidable by \emph{reduction to fixed
  parameters}: for a fixed threshold, we provide a bound on the length of
infixes that define a possible separator. For \ltt-separators, we also provide
a bound for a sufficient threshold. This reduces the problem to a finite
number of candidate separators, and hence entails decidability. We further get
an equivalent formulation on NFAs in terms of forbidden patterns for the
languages to be separable, which yields an \textsc{Nexptime} algorithm. We
also obtain lower complexity bounds: even starting from DFAs, the problem is
\textsc{Np}-hard for \lt and \ltt (while membership is in
\textsc{Ptime}).  Finally, we discuss the separation problem starting from
context-free input languages rather than regular ones.

\smallskip
The main arguments rely on pumping in monoids or automata.  The core
of our proof is generic: we show that if one can find
two words, one in each input language, that are close enough wrt.\ the
class of separators, then the languages  are not separable. 
Here, ``close enough'' is defined in terms of parameters of the input
languages, such as the size of input~NFAs.

\mypar{The separation problem in other contexts.}  While our main concern is
theoretical, let us mention some motivating applications where separation
occurs as a main ingredient.  

In model checking, reachable configurations of a system can be represented by
a language. Separating it from the language representing bad configurations
proves to be effective for verifying safety of a system. Craig interpolation
is a form of separation used in this~context, as well as in type inference,
theorem proving, hardware
specification~\cite{McMillan:Applications-Craig-Interpolants-Model:2005:a,Henzinger&Jhala&Majumdar&McMillan:Abstractions-from-proofs:2004:a}.
In the same line of thought, separation is the core idea of the reachability
algorithm for vector addition systems designed by
Leroux~\cite{LEROUX-TURING100}, who greatly simplified the original
decidability proof~\cite{DBLP:journals/siamcomp/Mayr84} thanks to a difficult separation theorem: he proved that a
recursively enumerable set of separators (namely Presburger definable sets of
configurations) witnesses non-reachability. A related problem addressed also
in automatic verification is to find separators in terms of small or minimal
DFAs~\cite{Gupta08:autom} using learning
algorithms~\cite{Chen09:learning,Leucker:2006}.  Finally, questions in
database theory also motivated separation questions~\cite{martens}.  These
applications justify a systematic study of the separation problem.  It is
therefore surprising that it deserved only little attention and isolated
work~\cite{Choffrut2007274,Choffrut200627,szygram,Hunt:1982:DGP:322307.322317},
even in the restricted, yet still challenging case of regular~languages.

\mypar{Related work for other classes.}  The separation problem has recently
been shown to be \textsc{Ptime}-decidable for the class of piecewise-testable
languages, independently and with different techniques in~\cite{martens}
and~\cite{PvRZ:mfcs}. In the latter paper, it has also been solved using
elementary pumping arguments for the class of unambiguous languages, a widely
studied class that corresponds to languages definable in first-order logic
with only 2 variables. The case of full first-order logic, already shown decidable in
\cite{Henckell:1988}, has been reproved in~\cite{PZ:lics14} with the proof
canvas of the present paper, which extends to separating languages of infinite
words. Recently, the problem has also been shown decidable for separation by
$\Sigma_2$-definable languages (that is, languages definable by first-order
formulas having only 2 alternations and beginning with an existential block).
This information has been in turn used to get decidable characterizations for
higher levels in the first-order quantifier alternation
hierarchy~\cite{PZ:icalp14}.

\mypar{Paper Outline.} We present the necessary background and notation in
Section~\ref{sec:prelims}. The classes \lt and \ltt are defined in
Section~\ref{sec:classdef}. In Section~\ref{sec:boundk}, we present our
results for solving separation when the counting threshold is assumed to be fixed.
This solves separation in particular for the class \lt. In
Section~\ref{sec:ltt}, we show that separation by \ltt languages can be
decided by bounding the counting threshold. We also provide an optimality
result for the bound we obtain. In Section~\ref{sec:comp}, we show upper and lower
bounds for the separation problem by \lt and \ltt languages. In
Section~\ref{sec:cf}, we consider separation by \lt and \ltt languages when the input
languages to be separated are context-free, rather than regular. We finally present
some open problems and further work in Section~\ref{sec:conc}.

\section{Preliminaries}
\label{sec:prelims}
\makeatletter{}%
\noindent {\textbf{Words and Languages.}} We fix a finite alphabet
$A$. We denote by $A^*$ the free monoid over~$A$. The empty word is denoted by~$\varepsilon$. If $w$ is a word, we
set $|w|$ as the \emph{length}, or \emph{size} of $w$. When $w$ is nonempty, we view
$w$ as a sequence of $|w|$ positions labeled over $A$. We number
positions from $0$ (for the leftmost one) to $|w|-1$ (for the
rightmost one).

\smallskip
\noindent {\textbf{Infixes, Prefixes, Suffixes.}} An \emph{infix} of a
word $w$ is a word $w'$ such that $w=u \cdot w' \cdot v$ for some $u,v
\in A^{*}$. Moreover, if $u=\varepsilon$ (resp.~$v=\varepsilon$) we
say that $w'$ is a \emph{prefix} (resp.~\emph{suffix}) of $w$.

Let $0\le x < y \leq |w|$. We write $w[x,y]$ for the infix of $w$ starting at
position $x$ and ending at position $y-1$. By convention, we also set
$w[x,x] = \varepsilon$. Observe that by definition, when $x \le y \le
z$, we have $w[x,z] = w[x,y] \cdot w[y,z]$. 

\smallskip
\noindent {\textbf{Profiles.}} For $k\in\mathbb{N}$, let $k_\ell =
\lfloor{k/2}\rfloor$ and $k_r = k - k_\ell$. A \emph{\kprofile} is a
pair of words $(w_\ell,w_r)$ of lengths at most $k_\ell$ and $k_r$,
respectively. Given $w \in A^{*}$ and $x$ a position of $w$, the
\emph{\kprofile of $x$} is the pair $(w_\ell,w_r)$ defined as follows:
$w_\ell=w[\max(0,x-k_\ell),x]$ and $w_r=w[x,\min(x+k_r, |w|)]$
(see Figure~\ref{fig:profiled}). 
A \kprofile $(w_\ell,w_r)$ \emph{occurs in a word $w$} if there exists
some position $x$ within $w$ whose \kprofile is
$(w_\ell,w_r)$. Similarly, if $n$ is a natural number, we say that
$(w_\ell,w_r)$ \emph{occurs $n$ times in $w$} if there are $n$
distinct positions in $w$ where $(w_\ell,w_r)$ occurs.
We denote by $A_k$ the set of \kprofiles over 
$A$. Note that its size is $|A_k|=|A|^{O(k)}$.
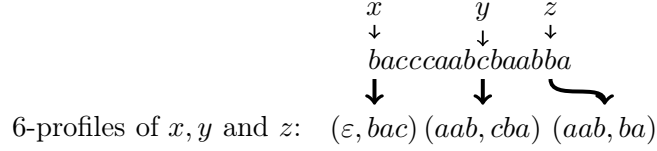
\begin{figure}[h]
  \begin{center}
    \begin{tikzpicture}
      \tikzstyle{ars}=[line width=0.7pt,->]
      \tikzstyle{arp}=[line width=1.5pt,->]
      \tikzstyle{acc}=[line width=1.0pt,snake=brace]
      
      \begin{scope}[scale=.9]
        \node[anchor=mid] (b1) at (0.0,0.0) {$b$};
        \node[anchor=mid] (b2) at (0.2,0.0) {$a$};
        \node[anchor=mid] (b3) at (0.4,0.0) {$c$};
        \node[anchor=mid] (b4) at (0.6,0.0) {$c$};
        \node[anchor=mid] (b5) at (0.8,0.0) {$c$};
        \node[anchor=mid] (b6) at (1.0,0.0) {$a$};
        \node[anchor=mid] (b7) at (1.2,0.0) {$a$};
        \node[anchor=mid] (b8) at (1.4,0.0) {$b$};
        \node[anchor=mid] (b9) at (1.6,0.0) {$c$};
        \node[anchor=mid] (ba) at (1.8,0.0) {$b$};
        \node[anchor=mid] (bb) at (2.0,0.0) {$a$};
        \node[anchor=mid] (bc) at (2.2,0.0) {$a$};
        \node[anchor=mid] (bd) at (2.4,0.0) {$b$};
        \node[anchor=mid] (be) at (2.6,0.0) {$b$};
        \node[anchor=mid] (bf) at (2.8,0.0) {$a$};

        \node[anchor=mid] (x1) at (0.0,0.8) {$x$};
        \node[anchor=mid] (x2) at (1.6,0.8) {$y$};
        \node[anchor=mid] (x3) at (2.6,0.8) {$z$};

        \draw[ars] (x1) to (b1);
        \draw[ars] (x2) to (b9);
        \draw[ars] (x3) to (be);

        \node[anchor=mid] (p1) at (0.0,-1.0) {$(\varepsilon,bac)$};
        \node[anchor=mid] (p2) at (1.6,-1.0) {$(aab,cba)$};
        \node[anchor=mid] (p3) at (3.4,-1.0) {$(aab,ba)$};

        \draw[arp] (b1) to (p1);
        \draw[arp] (b9) to (p2);
        \draw[arp] (be) to [out=-90,in=90] (p3);

        \node[anchor=west] (p0) at ($ (p1) - (5.5,0) $) {\profiles{6} of $x,y$ and $z$:};
      \end{scope}
    \end{tikzpicture}
  \end{center}
  \caption{Illustration of the notion of \kprofile for $k=6$}%
\label{fig:profiled} 
\end{figure}

Intuitively, the \kprofile is the description of the infix of $w$ that
is centered at position~$x$. Observe in particular that the \kprofiles
that occur in a word determine the prefixes and suffixes of length
$(k-1)$ of this word. This is convenient, since we only have to consider
one object instead of three as in the usual presentations of the classes
\lt and \ltt.

\smallskip
\noindent
{\textbf{Separability}.}
Given languages $L,L_1,L_2$ over $A^*$, we say that $L$ \emph{separates} $L_1$
from $L_2$ if 
\[
L_1 \subseteq L \text{ and } L_2 \cap L = \varnothing.
\]
Given a class \Sep of languages, we say that the pair $(L_1,L_2)$ is
\emph{\Sep-separable} if some language $L\in\Sep$ separates $L_1$ from
$L_2$. When \Sep is closed under complement,
$(L_1,L_2)$ is \Sep-separable if and only if $(L_2,L_1)$ is, in which
case we simply say that $L_1$ and $L_2$ are~\Sep-separable.

\smallskip
\noindent
{\textbf{Automata.}}
\label{paragraph:automata}
A \emph{nondeterministic finite automaton} (NFA) over $A$ is denoted by a
tuple $\mathcal{A}=(Q,A,\delta)$, where $Q$ is the finite set of states and
$\delta\subseteq Q\times A\times Q$ is the transition relation.  Observe that this definition
is not the standard one: we do not include the initial and final state
sets (we shall explain below why this is more convenient in our setting).
Abusing notation, we also denote by $\delta\subseteq Q\times A^*\times Q$ the
relation induced by the transition relation.  Given sets $I \subseteq Q$ and
$F\subseteq Q$, we then set
\[
L(\As,I,F) = \{w\in A^* \mid \exists q_I \in I, \exists q_F \in F,\  (q_I,w,q_F)\in \delta\}.
\]
A language $L \subseteq A^*$ is \emph{accepted}, or \emph{recognized}
by $\mathcal{A}$ if \emph{there exist}  sets $I \subseteq Q$ and $F\subseteq
Q$ such that $L=L(\As,I,F)$. In this case,
we call $I$ the set of \emph{initial states} and $F$ the set of \emph{final states for $L$}.
The reason why we do not include initial and final states in our definition of
an NFA is the following: when dealing with separation, we start from two input
languages. However, it is convenient to work with a single automaton
recognizing them both, rather than having to deal with two.  Our definition
permits this: an NFA recognizing both languages can be obtained by building
the Cartesian product of NFAs recognizing the input languages, and by then suitably
choosing initial and final states for each of the input languages.

The \emph{size} $|\mathcal{A}|$ of an automaton $\mathcal{A}$ is its number of states plus its number of
transitions. If $\delta$ is a function, then $\mathcal{A}$ is called a
\emph{deterministic} finite automaton (DFA). 

\smallskip\noindent
{\textbf{Monoids.}}
Let $L$ be a language and let $M$ be a monoid. We say that \emph{$L$ is recognized
  by $M$} if there exists a monoid morphism $\alpha: A^* \rightarrow
M$ together with a subset $F \subseteq M$ such that $L = \alpha^{-1}(F)$. We
also say in this case that $L$ is recognized by $\alpha$. It is
well known that a language is accepted by an NFA if and only if it can be
recognized by a \emph{finite monoid}. We denote by $|M|$ the size of a finite monoid.

In the same way as for NFAs, we want to work with a single monoid recognizing
both input languages. This is easy to obtain, since if $\alpha_1:A^*\to M_1$
(resp.~$\alpha_2:A^*\to M_2$) recognizes $L_1=\alpha_1^{-1}(F_1)$ (resp.~$L_2=\alpha_2^{-1}(F_2)$), then $\alpha:A^*\to M_1\times M_2$ defined by
$\alpha(w)=(\alpha_1(w),\alpha_2(w))$ recognizes both
$L_1=\alpha^{-1}(F_1\times M_2)$ and $L_2=\alpha^{-1}(M_1\times F_2)$.

Finally, it is well known that one can compute from any NFA a finite monoid
recognizing any language that this NFA can accept.  The easiest way to do so is to consider
the \emph{transition monoid} of the NFA, which is generated by Boolean
matrices $M_a\in \{0,1\}^{Q\times Q}$ for $a\in A$, where $M_a(p,q)=1$ if
$(p,a,q)\in\delta$ and $M_a(p,q)=0$ otherwise. It is straightforward to check
that this finite monoid recognizes any language accepted by $\mathcal{A}$.

\section{Locally Testable and Locally Threshold Testable Languages}
\label{sec:classdef}
\makeatletter{}%

In this paper, we investigate two classes of languages. %
Intuitively, a language is locally testable if membership of a word in
the language only depends on the \emph{set} of infixes, prefixes and
suffixes up to some fixed length that occur in the word. For a locally
threshold testable language, membership may also depend on the \emph{number}
of occurrences of such infixes, which may thus be counted up to some
fixed threshold.

\smallskip
In this section we provide specific definitions for both classes. We
start with the larger class of locally threshold testable
languages. In the following, we say that two numbers are \emph{equal
  up to threshold $d$} if either both numbers are equal, or both are
greater than or~equal~to~$d$.

\medskip
\noindent {\textbf{Locally Threshold Testable Languages.}} We say that
a language is \emph{locally threshold testable} (\ltt) if it is a
boolean combination of languages of the form:

\begin{enumerate}
\item\label{p1} $uA^*=\{w \mid \text{$u$ is a prefix of $w$}\}$, for some $u \in A^{*}$.
\item\label{p2} $A^*u=\{w \mid \text{$u$ is a suffix of $w$}\}$, for some $u \in A^{*}$.
\item\label{p3} $\{w \mid \text{$w$ has $u$ as an infix at
    least $d$ times}\}$, for some $u \in A^{*}$ and $d \in \nat$.
\end{enumerate}

\medskip\noindent Actually, \ltt languages can be defined in terms
of first-order logic:  a language is \ltt if and only if it can be
defined by an \fos formula, \emph{i.e.}, a first-order logic formula
using predicates for the equality and next position relations, but not
for the linear
order. See~\cite{Beauquier&Pin:Languages-scanners:1991:a,Thomas:Classifying-regular-events-symbolic:1982:a}.

We also define an index on \ltt languages. Usually, this index is
defined as the smallest size of infixes, prefixes and suffixes needed
to define the language. However, since we only work with \kprofiles,
we directly define an index based on the size of \kprofiles. Given a
\kprofile $(w_\ell,w_r)$, let $|w|_{(w_\ell,w_r)}$ be the number of
positions $x$ in $w$ such that $(w_\ell,w_r)$ is the \kprofile of $x$.
For $w,w'\in A^*$ and $k,d\in\nat$, we write $w \kdltteq w'$
if for every \kprofile $(w_\ell,w_r)$, the numbers $|w|_{(w_\ell,w_r)}$
and $|w'|_{(w_\ell,w_r)}$ are equal up to threshold $d$. 

One can verify that $\kdltteq$ is an equivalence relation (and
actually a congruence) of finite index.  For $k,d \in \nat$, let us denote
by $\ltt[k,d]$ the set of the finitely many languages that are unions
of \kdltteq-classes. By definition,
we have $\ltt=\bigcup_{k,d}\ltt[k,d]$. Given $L\subseteq A^*$, the
smallest $\ltt[k,d]$-language containing~$L$~is
$$\lttclos{L}{k}{d}=\{w\in A^*\mid \exists u\in L\text{ such that }u\kdltteq
w\}.$$
As it is often the case, there is no
smallest \ltt language containing a given regular language.
For instance, over $A=\{a\}$, any \ltt language containing $(aa)^*$
is of the form $a^{2p}a^*\cup F$ with $F$~finite.
Removing $a^{2p+1}$ from such a language yields a smaller \ltt
one, still containing~$(aa)^*$.

\medskip
\noindent
{\textbf{Locally Testable Languages.}} The class of locally
testable languages is the restriction of \ltt languages in which
infixes cannot be counted. A language 
is \emph{locally testable} (\lt) if it is a boolean combination of
languages of the form~\eqref{p1}, \eqref{p2} and the following
restriction of~\eqref{p3}:

\begin{enumerate}
\setcounter{enumi}{3}
\item\label{p4} $ A^* u A^* = \{w \mid \text{$w$ has $u$ as an infix}\}$, for some $u \in A^{*}$.
\end{enumerate}

No simple description of \lt in terms of first-order logic is known.
In terms of linear temporal logic, \lt languages are exactly those defined by
formulas involving only the operators \textsf{F} (eventually) and
\textsf{X} (next), with no nesting of \textsf{F} operators.

Given two words $w,w'$ and a number $k$, we write $w \klteq w'$ 
for $w \ltteq{k}{1} w'$. For all $k \in \mathbb{N}$, we denote by $\lt[k]$ the 
set of languages that are unions of \klteq-classes, and
$\lt=\bigcup_{k}\lt[k]$. Given $L\subseteq A^*$ and $k\in\nat$, the
smallest $\lt[k]$-language containing~$L$ is
$$\ltclos{L}{k}=\{w\in A^*\mid \exists u\in L\text{ such that }u\klteq
w\}.$$
However, as for \ltt, there is no smallest \lt language containing a given
regular language.

\section{Separation for a Fixed Threshold}
\label{sec:boundk}
\makeatletter{}%
In this section, we prove that if $d$ is a fixed natural number,
it is decidable %
whether two languages can be separated by an
\ltt language of counting threshold $d$ (\emph{i.e.}, by an
$\ltt[k,d]$ language for some $k$). In particular, this covers the
case of \lt, which corresponds to $d=1$. All results in this section are
for an arbitrary fixed~$d$. Our result is twofold.

\begin{itemize}
\item First, we establish a bound $k$ on the size of profiles that it 
  suffices to consider in order to separate the input languages. This
  bound only depends on the size of monoids recognizing these languages,
  and it can be computed. One can then use a brute-force algorithm
  that tests separability by all the finitely many $\ltt[k,d]$
  languages.
\item The second contribution is a criterion on the input languages to
  check separability by an $\ltt[k,d]$ language for some $k$. This
  criterion can be defined equivalently on automata or monoids
  recognizing the input languages, in terms of the absence of common patterns.
\end{itemize}

\noindent The section is divided in four subsections. Our criterion is
stated in the first one. The remaining subsections are then devoted
to the statement and proof of the theorem. %

\subsection{Patterns}

In this section, we define our criterion that two languages must satisfy
in order to be separable. The criterion can be defined equivalently on
automata or monoids recognizing the languages.

\medskip\noindent
{\bf Block Patterns.} A \emph{block} is a triple of words
$\frb = (v_\ell,u,v_r)$ where $v_\ell,v_r$ are nonempty. Similarly, a
\emph{prefix block} is a pair of words $\frp=(u,v_r)$ with $v_r$
nonempty, and a \emph{suffix block} is a pair of words
$\frs=(v_\ell,u)$ with $v_\ell$ nonempty. Let $d\in\nat$. A
\emph{$d$-pattern} \Ps is
\begin{itemize}
\item either a word $w$,
\item or a triple $(\frp,f,\frs)$ where $\frp$ and
  $\frs$ are respectively a prefix and a suffix block, and $f$ is a
  function mapping blocks to the set $\{0,\dots,d\}$, such that all
  but finitely many blocks are mapped to $0$.
\end{itemize}
\medskip
\noindent
{\bf Decompositions.} Let $w$ be a word and let $\Ps$ be a $d$-pattern. We
say that \emph{$w$ admits a \pfsdecomp} if  $w$ admits a
decomposition $w = u_0v_1u_1v_2 \cdots v_nu_n$ with $n \geq 0$ and
such that either $n=0$ and $\Ps=u_0=w$, or $\Ps=(\frp,f,\frs)$ and the
following conditions are verified:
\begin{enumerate}
\item $\frp =(u_0,v_1)$ and $\frs = (v_n,u_n)$.
\item for every block $\frb$, if $f(\frb)<d$, then 
  $|\{i\mid(v_i,u_i,v_{i+1})=\frb\}|=f(\frb)$.
\item for every block $\frb$, if $f(\frb)=d$, then
  $|\{i\mid(v_i,u_i,v_{i+1})=\frb\}|\geq d$.
\end{enumerate}

We may say \pfsdecomp to mean \pfsdecomp \emph{of some word}.  Let $\alpha :
A^* \rightarrow M$ be a morphism into a monoid~$M$, and let $s \in M$.  A
\pfsdecomp of $w$ is said to be \emph{$(\alpha,s)$-compatible} if
$\alpha(w)=s$ and $\alpha(u_0\cdots v_i)=\alpha(u_0\cdots v_i) \cdot
\alpha(v_i)$, for all $1 \le i \leq n$. Similarly, if $p,q$ are two states of
an automaton \As, a \pfsdecomp is \emph{$(p,q)$-compatible} if there is a run
from $p$ to $q$ on $w$, such that for all $1 \le i \leq n$, each infix $v_i$
labels a loop in the~run, as pictured in~\figurename~\ref{fig:compat}.
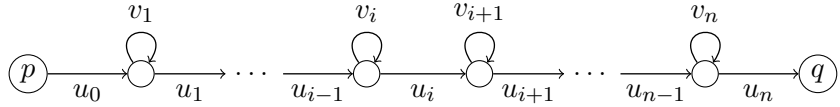
\begin{figure}[h]
  \tikzstyle{nod}=[minimum size=0.35cm,draw,circle,inner sep=2pt]
  \tikzstyle{nof}=[minimum size=0.35cm,draw,circle,double,double distance=1pt]
  \tikzstyle{ars}=[line width=0.5pt,->]
  \centering
  \begin{tikzpicture}
    \node[nod] (A1) at (0.0,0.0) {$p$};
    \node[nod] (A2) at (1.5,0.0) {};
    \node (A3) at (3.0,0.0) {\ldots};
    \node[nod] (A4) at (4.5,0.0) {};
    \node[nod] (A5) at (6.0,0.0) {};
    \node (A6) at (7.5,0.0) {\ldots};
    \node[nod] (A7) at (9.0,0.0) {};
    \node[nod] (A8) at (10.5,0.0) {$q$};

    \draw[ars] (A1) to node[below] {$u_0$} (A2);
    \draw[ars] (A2) to node[below] {$u_1$} ($(A2)!3/4!(A3)$);
    \draw[ars] (A3) to node[below] {$u_{i-1}$} (A4);
    \draw[ars] (A5) to node[below] {$u_{i+1}$} ($(A6)!1/4!(A5)$);
    \draw[ars] ($(A7)!3/4!(A6)$) to node[below] {$u_{n-1}$} (A7) ;

    \draw[ars] (A4) to node[below] {$u_i$} (A5);

    \draw[ars] (A7) to node[below] {$u_n$} (A8);

    \draw[ars] (A2) to [out=125,in=55,loop] node[above] {$v_1$} (A2);
    \draw[ars] (A4) to [out=125,in=55,loop] node[above] {$v_i$} (A4);
    \draw[ars] (A5) to [out=125,in=55,loop] node[above] {$v_{i+1}$} (A5);
    \draw[ars] (A7) to [out=125,in=55,loop] node[above] {$v_n$} (A7);

  \end{tikzpicture}
  \caption{A $(p,q)$-compatible \pfsdecomp $u_0v_1 \cdots v_nu_n$
    \smaller (edges denote transition sequences)}
  \label{fig:compat}
\end{figure}

\medskip\noindent {\bf Common Patterns.} Let $d \in \nat$ and $\alpha:A^*\to
M$ be a morphism into a finite monoid. We say that a pair $(s_1,s_2)\in
M\times M$ has a \emph{common $d$-pattern} if there exist a $d$-pattern \Ps
and two \pfsdecomps of (possibly different) words that are respectively
$(\alpha,s_1)$-compatible and $(\alpha,s_2)$-compatible. In this
terminology, $\alpha$ is understood and not mentioned
explicitly. Similarly, if $\As$ is an automaton, and $p_1,q_1,p_2,q_2$ are states of $\As$, we say that the
pair $\bigl((p_1,q_1),(p_2,q_2)\bigr)$ has a \emph{common $d$-pattern} if
there exist a $d$-pattern \Ps and two \pfsdecomps of words that are
respectively $(p_1,q_1)$-compatible and $(p_2,q_2)$-compatible.  In
particular, $\bigl((p_1,q_1),(p_2,q_2)\bigr)$ has a common 1-pattern if there
are paths in $\As$ of the form shown in~\figurename~\ref{fig:compat}
with the same \emph{set} of triples $(v_i,u_i,v_{i+1})$, going respectively
from $p_1$ to $q_1$ and from $p_2$ to $q_2$.

\smallskip

\subsection{Separation Theorem for a Fixed Threshold} The reason for
introducing common patterns is the following. First, having a common
$d$-pattern for a pair $(s_1,s_2)\in M\times M$ is a decidable
property. Second, it is a necessary and sufficient condition for the languages
$\alpha^{-1}(s_1)$ and $\alpha^{-1}(s_2)$ \emph{not} being separable by any
$\ltt[k,d]$ language, for any~$k$. A similar statement holds for common
$d$-patterns in NFAs. This is what we state now in our main theorem for this
section.

\begin{thm}
  \label{thm:seplttd}
  Fix $d \in \nat$. Let $L_1,L_2$ be regular languages. Let
  $\alpha:A^*\to M$ be a morphism into a finite monoid $M$ recognizing both
  $L_1$ and $L_2$. Let $\As$ be an NFA recognizing both $L_1$ and $L_2$, with
  $L_i=L(\As,I_i,F_i)$. Set $k=4(|M|+1)$. Then, the following conditions
  are equivalent:
  \begin{enumerate}
  \item\label{item:a} $L_1$ and $L_2$ are $\ltt[\ell,d]$-separable for some $\ell$.
  \item\label{item:b} $L_1$ and $L_2$ are $\ltt[k,d]$-separable.
  \item\label{item:c} The language $\lttclos{L_1}{k}{d}$ separates $L_1$ from $L_2$.
  \item\label{item:d} No pair in $\alpha(L_1)\times\alpha(L_2)$ has a common $d$-pattern.
  \item\label{item:e} No pair in $(I_1\times F_1) \times(I_2\times F_2)$  has a common $d$-pattern.
  \end{enumerate}
\end{thm}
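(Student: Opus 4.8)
That $\lttclos{L_1}{k}{d}\in\ltt[k,d]$ gives $\eqref{item:c}\Rightarrow\eqref{item:b}$, and that it is the \emph{smallest} $\ltt[k,d]$ language containing $L_1$ (recalled just before the theorem) gives $\eqref{item:b}\Rightarrow\eqref{item:c}$: it is contained in every $\ltt[k,d]$ separator and hence separates as soon as any $\ltt[k,d]$ language does. Also $\eqref{item:b}\Rightarrow\eqref{item:a}$ is trivial with $\ell=k$. It remains to climb back from $\eqref{item:a}$ and to tie in $\eqref{item:d}$, $\eqref{item:e}$. My plan is to prove the cycle $\eqref{item:a}\Rightarrow\eqref{item:e}\Rightarrow\eqref{item:d}\Rightarrow\eqref{item:c}$, and I expect $\eqref{item:d}\Rightarrow\eqref{item:c}$ — where the bound $k=4(|M|+1)$ is used — to be the main obstacle.

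\textbf{Absence of a common $d$-pattern is necessary ($\eqref{item:a}\Rightarrow\eqref{item:e}$, and similarly $\eqref{item:a}\Rightarrow\eqref{item:d}$).}
I argue the contrapositive: a common $d$-pattern precludes $\ltt[\ell,d]$-separability for \emph{every} $\ell$. Suppose $\bigl((p_1,q_1),(p_2,q_2)\bigr)$, with $p_i\in I_i$ and $q_i\in F_i$, has a common $d$-pattern $\Ps$. If $\Ps=w$ then $w\in L_1\cap L_2$ and there is no separator at all. Otherwise $\Ps=(\frp,f,\frs)$, with two $(p_i,q_i)$-compatible \pfsdecomps as in Figure~\ref{fig:compat}. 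Given $\ell$, pump every loop of both decompositions to a common power $t$ that is large with respect to $\ell$ and $d$; traversing each loop $t$ times produces accepting runs for the resulting words $w_1\in L_1$, $w_2\in L_2$. Then $w_1\ltteq{\ell}{d}w_2$: for $t$ large, every $\ell$-profile occurring deep inside a pumped factor occurs at least $d$ times, and it does so in \emph{both} words because equality of the block multisets built into $\Ps$ forces the loop words carrying the bulk to coincide in the two decompositions; every remaining $\ell$-profile occurs a bounded number of times, determined entirely by $\frp$, $\frs$ and the multiset of transition blocks counted up to threshold $d$, which $z_1$ and $z_2$ share by definition. Hence $w_1,w_2$ are not $\ltt[\ell,d]$-separated, for any $\ell$. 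The same argument with a morphism into $M$ in place of runs (a loop being $\alpha$-neutral by compatibility, so the pumped words stay in $L_1,L_2$) gives $\eqref{item:a}\Rightarrow\eqref{item:d}$.

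\textbf{Monoid versus automaton ($\eqref{item:e}\Rightarrow\eqref{item:d}$).}
Contrapositively, I convert a common $d$-pattern for $(s_1,s_2)\in\alpha(L_1)\times\alpha(L_2)$ into one for suitable states of $\As$. The word-pattern case is immediate from accepting runs for $w$. For $\Ps=(\frp,f,\frs)$, fix $\omega$ idempotent for both $M$ and the transition monoid of $\As$, replace each loop word $v_j$ by $v_j^{3\omega}$ inside each $(\alpha,s_i)$-compatible \pfsdecomp and regroup as $(u_0v_1^\omega)(v_1^\omega)(v_1^\omega u_1v_2^\omega)(v_2^\omega)\cdots$. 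This replacement is $\alpha$-neutral, so the new words remain in $L_1,L_2$ and have accepting runs; each factor $v_j^\omega$ maps to an idempotent Boolean transition matrix, which carries a diagonal $1$, and the surrounding $v_j^\omega$-buffers let one reroute the run so that the middle $v_j^\omega$ is a genuine loop; as the induced map on blocks is injective, applying it to both decompositions preserves equality of patterns. (This is the content of the omitted recognizer-independence proposition; the converse $\eqref{item:d}\Rightarrow\eqref{item:e}$ then follows a posteriori once the cycle is closed.)

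\textbf{The crux: $\eqref{item:d}\Rightarrow\eqref{item:c}$.}
Contrapositively, from $w_1\in L_1$, $w_2\in L_2$ with $w_1\kdltteq w_2$ and $k=4(|M|+1)$, I must extract a common $d$-pattern for $(\alpha(w_1),\alpha(w_2))$. If $w_1$ admits no pumpable factor, then by pigeonhole $|w_1|<|M|<k_r$, so $w_1$ is determined by its set of $k$-profiles; since $w_2$ has the same set, $w_1=w_2\in L_1\cap L_2$ and $\Ps=w_1$ works. Otherwise I use $k/2\ge 2(|M|+1)$: around a carefully chosen occurrence of each sufficiently frequent $k$-profile, a pigeonhole on the at most $|M|$ partial $\alpha$-products inside any window of $|M|+1$ positions yields a factor with idempotent $\alpha$-image, short enough to fit strictly inside a $k$-profile with a whole further block of context to spare on each side. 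Carving such factors out of $w_1$ (and symmetrically $w_2$) produces $(\alpha,\alpha(w_i))$-compatible \pfsdecomps; and because $w_1,w_2$ agree on all $k$-profile counts up to threshold $d$, the carvings can be synchronised so that the two decompositions realize the \emph{same} prefix block, suffix block and block-count function. I expect this synchronisation to be the hard part: one must check that the window length $k$ is large enough that every $k$-profile witnessing a boundary or a transition on one side is rich enough to force the matching boundary or transition on the other, and that the extracted idempotent factors can be chosen coherently — the factor $4$ being precisely ``each half of a $k$-profile must hold a pumpable factor of length $\le|M|+1$ flanked by one more block of context of length $\le|M|+1$''. I would organise this via an intermediate notion of $k$-decomposition of a word, with the $k$-profiles as interface, and a Ramsey/pigeonhole extraction lemma, which is presumably what the remaining subsections of Section~\ref{sec:boundk} carry out. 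Chaining these implications with the trivial ones closes the cycle and yields the equivalence of $\eqref{item:a}$--$\eqref{item:e}$.
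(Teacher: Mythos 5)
Your overall architecture coincides with the paper's: the cycle $\eqref{item:c}\Rightarrow\eqref{item:b}\Rightarrow\eqref{item:a}\Rightarrow\eqref{item:e}\Rightarrow\eqref{item:d}\Rightarrow\eqref{item:c}$, with loops pumped to a large power for $\eqref{item:a}\Rightarrow\eqref{item:e}$ (the paper uses the explicit exponent $\ell(d+1)$) and an $\omega$-power re-blocking $\frb\mapsto\tilde\frb$ for $\eqref{item:e}\Rightarrow\eqref{item:d}$ (your $v_j^{3\omega}$ with buffers on both sides, and the diagonal-$1$ rerouting of idempotent Boolean matrices, is in fact a more careful rendering of the state-level conversion that the paper declares ``immediate''). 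Your direct observation that $\eqref{item:b}\Rightarrow\eqref{item:c}$ via minimality of $\lttclos{L_1}{k}{d}$ is correct, though unnecessary once the cycle closes. These parts are fine.

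The genuine gap is in $\eqref{item:d}\Rightarrow\eqref{item:c}$, which you correctly identify as the crux but do not actually prove: you write that ``the carvings can be synchronised'' and that you ``expect this synchronisation to be the hard part \dots presumably what the remaining subsections carry out.'' That synchronisation \emph{is} the theorem, and your sketch omits the one idea that makes it work. The paper does not carve factors out of $w_1,w_2$ and then match them up; it \emph{inserts} loops, and it makes the choice of what to insert a \emph{function of the \kprofile alone}: a position $x$ with \profile{\lfloor k/2\rfloor} $(w_\ell,w_r)$ admits a \kloop if some nonempty prefix $u$ of $w_r$ satisfies $\alpha(w_\ell)=\alpha(w_\ell u)$, and one fixes \emph{the smallest} such $u$ as \emph{the} \kloop of $x$. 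Inserting these canonical loops simultaneously at every position admitting one (Lemma~\ref{lem:cloop} guarantees one in every window of $\lfloor k/4\rfloor$ positions, which is where $k=4(|M|+1)$ enters) yields \pfsdecomps of the modified words, and Lemma~\ref{lem:blockscst} shows the block generated at a position is determined by that position's \kprofile, with distinct positions of equal \kprofile generating distinct copies of the same block. Only because of this determinism does $w_1\kdltteq w_2$ translate \emph{verbatim} into equality of block counts up to threshold $d$, i.e., into a common $d$-pattern. Without specifying a canonical, profile-determined choice of loop (your ``carefully chosen occurrence'' and ``carving out'' leave the choice free, and extraction rather than insertion additionally creates overlap problems between nearby loops), there is no reason the two decompositions realize the same prefix block, suffix block and count function, and the implication is not established.
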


Observe that Item~\eqref{item:b} is essentially a \emph{delay theorem}~\cite{Straubing:Finite-semigroup-varieties-form:1985:a}
for separation restricted to the case of \ltt: we prove that the size
of profiles (\emph{i.e.}, infixes) that a potential separator needs to
consider can be bounded by a function of the size of the monoids
recognizing the~languages.
By restricting Theorem~\ref{thm:seplttd} to the case
$d=1$, we get the following separation theorem for \lt, that we explicitly
state in view of the relevance of this class.

\begin{thm}
  \label{thm:seplt}
  Let $L_1,L_2$ be regular languages. Let $\alpha:A^*\to M$ be a
  morphism into a finite monoid $M$ recognizing both $L_1$ and $L_2$. Let $\As$ be an NFA
  recognizing both $L_1$ and $L_2$, with $L_i=L(\As,I_i,F_i)$. Set
  $k=4(|M|+1)$. The following conditions are equivalent:
  \begin{enumerate}
  \item\label{item:a2} $L_1$ and $L_2$ are $\lt$-separable.
  \item\label{item:b2} $L_1$ and $L_2$ are $\lt[k]$-separable.
  \item\label{item:c2} The language $\ltclos{L_1}{k}$ separates
    $L_1$ from $L_2$.
  \item\label{item:d2} No pair in $\alpha(L_1)\times\alpha(L_2)$ has a common $1$-pattern.
  \item\label{item:e2} No pair in $(I_1\times F_1) \times(I_2\times F_2)$  has a common $1$-pattern.
  \end{enumerate}
\end{thm}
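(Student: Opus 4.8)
The plan is to obtain this statement as the special case $d=1$ of Theorem~\ref{thm:seplttd}, which has already been established for an arbitrary fixed threshold. The entire content of the proof is therefore to check that, when $d=1$, each of the five conditions stated here coincides \emph{by definition} with the correspondingly-numbered condition of Theorem~\ref{thm:seplttd}, and that the prescribed bound $k=4(|M|+1)$ is literally the same in both statements. No new pumping or automaton construction is needed.

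First I would set up the dictionary between the threshold-$1$ notions and the locally testable ones. By definition $w \klteq w'$ means $w \ltteq{k}{1} w'$, so the \klteq-classes are exactly the \kdltteq-classes specialized to $d=1$; consequently $\lt[k]$ and $\ltt[k,1]$ denote the same collection of languages, and $\ltclos{L}{k}=\lttclos{L}{k}{1}$ for every $L$. Taking unions over $k$ gives $\lt=\bigcup_k\lt[k]=\bigcup_k\ltt[k,1]$, so ``$L_1$ and $L_2$ are $\lt$-separable'' is the same assertion as ``$L_1$ and $L_2$ are $\ltt[\ell,1]$-separable for some $\ell$''. Finally, a common $1$-pattern is by definition a common $d$-pattern with $d=1$: a $1$-pattern $(\frp,f,\frs)$ has $f$ valued in $\{0,1\}$, so in a \pfsdecomp the function $f$ simply records which blocks occur, as the remark following the definition of common patterns already notes; crucially, this observation needs no separate justification, since it is just the $d=1$ instance of the same definition.

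With this dictionary I would match the conditions one by one: Item~\eqref{item:a2} is Item~\eqref{item:a} at $d=1$; Item~\eqref{item:b2} is Item~\eqref{item:b} at $d=1$, using that the value $k=4(|M|+1)$ agrees in both theorems; Item~\eqref{item:c2} is Item~\eqref{item:c} at $d=1$ via $\ltclos{L_1}{k}=\lttclos{L_1}{k}{1}$; and Items~\eqref{item:d2} and~\eqref{item:e2} are Items~\eqref{item:d} and~\eqref{item:e} at $d=1$ via the identification of common $1$-patterns with common $d$-patterns for $d=1$. Since Theorem~\ref{thm:seplttd} asserts the equivalence of its five conditions for every fixed $d$, instantiating $d=1$ immediately yields the equivalence of Items~\eqref{item:a2}--\eqref{item:e2}. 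There is essentially no genuine obstacle, the result being a pure specialization; the one point I would double-check is the class-level identity $\lt=\bigcup_k\ltt[k,1]$ together with $\lt[k]=\ltt[k,1]$, which guarantees that ``$\lt$-separable'' and ``$\lt[k]$-separable'' are exactly the $d=1$ reflections of Items~\eqref{item:a} and~\eqref{item:b}. Once these definitional equalities are confirmed, everything is inherited from the already-proved threshold-$d$ theorem.
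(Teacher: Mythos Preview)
Your proposal is correct and matches the paper's own approach: the paper explicitly introduces Theorem~\ref{thm:seplt} as the restriction of Theorem~\ref{thm:seplttd} to $d=1$, without a separate proof. The definitional identifications you list ($\klteq\,=\,\ltteq{k}{1}$, $\lt[k]=\ltt[k,1]$, $\ltclos{L_1}{k}=\lttclos{L_1}{k}{1}$) are exactly those given in Section~\ref{sec:classdef}, so nothing beyond the dictionary you wrote is required.
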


Theorem~\ref{thm:seplttd} and Theorem~\ref{thm:seplt} yield algorithms
for deciding \ltt- and \lt-separability for a fixed threshold. Indeed,
the algorithm just tests all the finitely many $\ltt[k,d]$ languages
as potential separators. This brute-force approach yields a very
costly procedure. It turns out that a more practical algorithm can be
obtained from Items~\eqref{item:d} and~\eqref{item:e}. We postpone
the presentation of this algorithm to Section~\ref{sec:comp}.

\begin{cor} \label{cor:decidltd}
Let $d \in \nat$. It is decidable whether two given regular languages
are $\ltt[\ell,d]$-separable for some $\ell \in \nat$. In particular,
it is decidable whether they are $\lt$-separable.
\end{cor}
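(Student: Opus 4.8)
The plan is to derive the corollary directly from Theorem~\ref{thm:seplttd} (and, for the last sentence, from its specialization Theorem~\ref{thm:seplt}), the point being that condition~\eqref{item:b} of that theorem is effectively testable. All the real content — in particular the bound $k=4(|M|+1)$, which reduces the search for a separator to finitely many candidates — has been absorbed into Theorem~\ref{thm:seplttd}, so what remains is only to check that the relevant objects are computable.

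Concretely, given two regular languages $L_1,L_2$ presented by NFAs, I would first compute a single finite monoid $M$ together with a morphism $\alpha:A^*\to M$ recognizing both of them, for instance as the transition monoid of an NFA recognizing both languages, obtained via the product construction and the choice of initial and final state sets explained in Section~\ref{sec:prelims}. Since $M$ is computable, so is the number $k=4(|M|+1)$. Next I would use the fact that $\ltt[k,d]$ is, by definition, the finite set of unions of \kdltteq-classes, that \kdltteq is an equivalence relation of finite index, and that its classes are effectively computable from $k$ and $d$; hence $\ltt[k,d]$ is a finite, effectively enumerable collection of regular languages, each of which can be produced explicitly (e.g.\ as a finite automaton). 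For a fixed candidate $L\in\ltt[k,d]$, deciding whether $L$ separates $L_1$ from $L_2$ reduces to the two decidable tests $L_1\subseteq L$ and $L_2\cap L=\varnothing$ for regular languages. Running these tests over all candidates in $\ltt[k,d]$ therefore decides whether $L_1$ and $L_2$ are $\ltt[k,d]$-separable, and by the equivalence \eqref{item:a}$\Leftrightarrow$\eqref{item:b} of Theorem~\ref{thm:seplttd} this is exactly deciding whether they are $\ltt[\ell,d]$-separable for some $\ell$. The ``in particular'' claim is the special case $d=1$, invoking the equivalence \eqref{item:a2}$\Leftrightarrow$\eqref{item:b2} of Theorem~\ref{thm:seplt}.

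There is essentially no obstacle here beyond Theorem~\ref{thm:seplttd} itself; the only steps needing (routine) care are that $M$, and hence $k$, is genuinely computable from the input, and that both membership in $\ltt[k,d]$ and the separation test are effective — all of which are standard facts about regular languages. As an aside, one could also avoid the brute-force enumeration by directly checking condition~\eqref{item:d} or~\eqref{item:e} of Theorem~\ref{thm:seplttd}, since having a common $d$-pattern for a fixed pair of monoid elements (respectively a fixed $4$-tuple of states) is itself a decidable property; this is the more practical route developed in Section~\ref{sec:comp}, but it is not needed for bare decidability.
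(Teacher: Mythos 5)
Your proof is correct and follows exactly the paper's route: it invokes the equivalence \eqref{item:a}$\Leftrightarrow$\eqref{item:b} of Theorem~\ref{thm:seplttd} (and of Theorem~\ref{thm:seplt} for the case $d=1$), computes $k=4(|M|+1)$ from a recognizing monoid, and brute-forces over the finitely many $\ltt[k,d]$ candidates using decidable inclusion and disjointness tests for regular languages. Even your closing remark about the more practical alternative via conditions~\eqref{item:d} and~\eqref{item:e} matches the paper's own comment deferring that algorithm to Section~\ref{sec:comp}.
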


It remains to prove Theorem~\ref{thm:seplttd}. The implications
$\eqref{item:c} \Rightarrow \eqref{item:b} \Rightarrow \eqref{item:a}$
are immediate by definition.  We now prove the implications
$\eqref{item:a} \Rightarrow \eqref{item:e}\Rightarrow\eqref{item:d}
\Rightarrow \eqref{item:c}$, devoting a separate subsection to the proof of each implication.

\subsection{\texorpdfstring{Implication $\eqref{item:a} \Rightarrow \eqref{item:e}$ in
  Theorem~\ref{thm:seplttd}}{{Implication  (\ref{item:a}) => (\ref{item:e}) in
  Theorem \ref{thm:seplttd}}}}

We prove the contrapositive of $\eqref{item:a} \Rightarrow
\eqref{item:e}$: if $\bigl((p_1,q_1),(p_2,q_2)\bigr)\in(I_1\times F_1) \times(I_2\times F_2)$ has a common $d$-pattern, then
there exists no $\ell \in \nat$ such that $L_1$ and $L_2$ are 
$\ltt[\ell,d]$-separable. This is an immediate consequence of the next
proposition.

\begin{prop} \label{prop:decomptokp} Fix $d \in \nat$ and let
  $\As$ be an NFA. Let $p_1,q_1,p_2,q_2$ be states of $\As$. If $\bigl((p_1,q_1),(p_2,q_2)\bigr)$ has a common
  $d$-pattern, then, for all $\ell \in \nat$, there exist $w_1,w_2$ accepted
  respectively by $L(\As,\{p_1\},\{q_1\})$ and $L(\As,\{p_2\},\{q_2\})$ such that $w_1
  \ltteq{\ell}{d} w_2$.
\end{prop}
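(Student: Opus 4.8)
The plan is to turn the common $d$-pattern into two concrete words $w_1, w_2$ by \emph{pumping the loops} enough times to defeat the threshold $\ell$, while preserving the property of being accepted from $p_i$ to $q_i$. Concretely, suppose $\bigl((p_1,q_1),(p_2,q_2)\bigr)$ has a common $d$-pattern \Ps. If $\Ps$ is a word $w$, then $w$ is accepted by all of $L(\As,\{p_1\},\{q_1\}),\dots,L(\As,\{p_2\},\{q_2\})$ (being $(p_i,q_i)$-compatible as a trivial decomposition $u_0 = w$), and $w \ltteq{\ell}{d} w$ trivially, so we are done; hence assume $\Ps = (\frp, f, \frs)$. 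Fix $(p_i,q_i)$-compatible \pfsdecomps $z_1 = u_0 v_1 u_1 \cdots v_n u_n$ and $z_2 = x_0 y_1 x_1 \cdots y_m x_m$ of two words, so that each $v_j$ (resp. $y_j$) labels a loop on the run from $p_i$ to $q_i$. Since the two decompositions realize the same $d$-pattern $\Ps$, they have the same prefix block $\frp$, the same suffix block $\frs$, and the same multiset of interior blocks $(v_j,u_j,v_{j+1})$ up to threshold $d$: blocks with $f$-value $< d$ occur exactly $f(\frb)$ times in each, and blocks with $f$-value $d$ occur at least $d$ times in each.

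Next, choose a large pumping exponent $N$ — it suffices to take $N \ge \ell$ (or $N$ large enough that $N \cdot (\text{loop length}) \ge \ell$; any such choice works). Define $w_1$ by replacing each loop factor $v_j$ in $z_1$ by $v_j^{N}$, and similarly $w_2$ from $z_2$ by replacing each $y_j$ by $y_j^{N}$. Since each $v_j$ (resp. $y_j$) labels a loop, $w_1 \in L(\As,\{p_1\},\{q_1\})$ and $w_2 \in L(\As,\{p_2\},\{q_2\})$ still hold (pumping a loop keeps the run valid). It remains to verify $w_1 \ltteq{\ell}{d} w_2$, i.e.\ that every \profile{\ell} occurs the same number of times, up to threshold $d$, in $w_1$ and in $w_2$. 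The key observation is that the occurrences of each \profile{\ell} in such a pumped word fall into a few families: those \emph{internal} to a pumped block $v_j^N$ (which, once $N$ is large, occur a number of times that is $\ge d$ and determined solely by $v_j$), and those \emph{straddling} a boundary between consecutive factors — and because $\Ps$ records $\frp$, $\frs$, and all interior blocks $(v_j,u_j,v_{j+1})$ (which fix, via $v_j^N, u_j, v_{j+1}^N$, the local context around every boundary) up to threshold $d$, these straddling counts agree up to threshold $d$ between $w_1$ and $w_2$. Summing over all families, the total count of each \profile{\ell} is equal up to threshold $d$.

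The main obstacle is the bookkeeping in this last step: one must argue carefully that a \profile{\ell} of length at most $\ell$ cannot "see" more than one pumped loop at a time once $N$ is chosen larger than $\ell$ (so that a window of width $\le \ell$ inside a region $v_j^N$ sits either entirely within $v_j^N$ or overlaps at most the two boundaries of that block), and that the number of \emph{internal} occurrences inside $v_j^N$ is both $\ge d$ and a function only of $v_j$ and $N$ — hence, when the block $(v_j,u_j,v_{j+1})$ has $f$-value $d$ and so may occur with different (but $\ge d$) multiplicities in $z_1$ and $z_2$, the corresponding profile counts are all $\ge d$ in both words and thus equal up to threshold $d$. The finitely-many-blocks-mapped-to-$0$ condition guarantees $n, m$ are finite, so there are only finitely many boundaries to handle. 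Once this case analysis is organized by "which factor(s) a profile occurrence touches", the equality up to threshold $d$ follows, completing the proof.
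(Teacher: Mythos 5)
Your proposal is correct and follows essentially the same route as the paper's proof, which pumps each loop $v_j$ to $v_j^{\ell(d+1)}$ and then splits occurrences of each profile into those internal to a pumped block (at least $d$ of them in both words, since the pattern forces each $v_i$ to reappear as some $y_j$) and those straddling a boundary (counted, up to threshold $d$, by the block function $f$). The one quantitative slip is your claim that $N\ge\ell$ suffices: an infix of length at most $\ell$ lying inside $v_j^{N}$ is only guaranteed roughly $N-\ell+1$ periodic occurrences, so you need $N\ge \ell+d$ (the paper takes $N=\ell(d+1)$); your later proviso that $N$ be ``large'' already accommodates this.
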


\begin{proof}
  Set $L_1=L(\As,\{p_1\},\{q_1\})$ and $L_2=L(\As,\{p_2\},\{q_2\})$.
  Let \Ps be a common $d$-pattern of $\bigl((p_1,q_1),(p_2,q_2)\bigr)$. If $\Ps = w
  \in A^*$, then by definition, $w\in L_1\cap L_2$, and it suffices to choose
  $w_1 = w_2 = w$. Otherwise, $\Ps=(\frp,f,\frs)$ and there exist $z_1\in
  L_1$, $z_2\in L_2$ having a $(p_1,q_1)$- respectively $(p_2,q_2)$-compatible
  \pfsdecomp. Let $z_1=u_0v_1u_1v_2 \cdots v_{n}u_n$ and $z_2=x_0y_1x_1y_2
  \cdots y_{m}x_m$ be these decompositions. For $\ell \in \nat$,~set
  \[
  \begin{array}{lll}
    w_1 & = & u^{}_0v_1^{\ell(d+1)}u^{}_1v_2^{\ell(d+1)} \cdots
    v_{n}^{\ell(d+1)}u^{}_n, \\
    w_2 & = & x^{}_0y_1^{\ell(d+1)}x^{}_1y_2^{\ell(d+1)} \cdots
    y_{m}^{\ell(d+1)}x^{}_m.
  \end{array}
  \]
  By definition of compatibility, we deduce that $w_1\in L_1$ and $w_2\in
  L_2$. We claim that $w_1 \ltteq{\ell}{d} w_2$.  Indeed, from the
  \pfsdecomps of $z_1$ and $z_2$, we deduce that $u_0=x_0$ and
  $v_1=y_1$, which implies that $w_1$ and $w_2$ have the same prefix
  of length $\ell-1$ (and actually, even of length $\ell(d+1)$). Similarly,
  they have the same suffix of length $\ell-1$. To show the claim, it
  remains to verify that each word of length at most $\ell$ occurs the
  same number of times, up to threshold~$d$, as an infix in $w_1$ and
  in $w_2$. Let $u$ be an infix of length at most $\ell$ of, say, $w_1$. 

  Assume first that $u$ occurs in some $v_i^{\ell(d+1)}$. Then it occurs
  at least $d$ times. Since the decompositions of $w_1,w_2$ are
  \pfsdecomps, there exists some $j$ such that $y_j=v_i$. Therefore,
  $u$ occurs at least $d$ times as an infix in~$y_j^{\ell(d+1)}$, hence
  also in $w_2$.

  Assume finally that $u$ does not occur in any
  $v_i^{\ell(d+1)}$. Therefore, it overlaps with some of the $u_i$'s, and
  for these indices $i$, it is an infix of
  $v_i^{\ell(d+1)}u^{}_iv_{i+1}^{\ell(d+1)}$. Since both decompositions of
  $w_1$ and $w_2$ are \Ps-compatible, the number of triples
  $(v_i,u_i,v_{i+1})$ in the decomposition of $w_1$ and the
  number of triples $(y_j,x_j,y_{j+1})$ in that of $w_2$ which are equal to a given
  triple is the same, up to threshold~$d$. Therefore, $u$ occurs the
  same number of times up to threshold $d$ in both $w_1$ and
  $w_2$. We have thus shown that $w_1 \ltteq{\ell}{d} w_2$.
\end{proof}

\subsection{\texorpdfstring{Implication $\eqref{item:e} \Rightarrow \eqref{item:d}$ in
  Theorem~\ref{thm:seplttd}}{{Implication  (\ref{item:e}) => (\ref{item:d}) in
  Theorem \ref{thm:seplttd}}}}

We prove the contrapositive: if there is a pair
$(s_1,s_2)\in\alpha(L_1)\times\alpha(L_2)$ having a common $d$-pattern,
then there is a pair $\bigl((p_1,q_1),(p_2,q_2)\bigr)\in(I_1\times F_1)
\times(I_2\times F_2)$ also having a common $d$-pattern.  This follows from
the following claim, which states that the presence of a ``recognizing pair''
$(s_1,s_2)$ that has a common $d$-pattern does not depend on the choice of the
recognizing monoid morphisms.
\begin{claim}
  Let $d\in\nat$, and let $\alpha:A^*\to M$ and
  $\beta:A^*\to N$ be monoid morphisms recognizing both $L_1$ and $L_2$. If there exists
  $(s_1,s_2)\in \alpha(L_1)\times \alpha(L_2)$ having a common
  $d$-pattern, then there exists $(t_1,t_2)\in \beta(L_1)\times
  \beta(L_2)$ also having a common $d$-pattern.
\end{claim}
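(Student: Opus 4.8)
The plan is to build, from a common $d$-pattern for $(s_1,s_2)$ with respect to $\alpha$, a common $d$-pattern for a suitable pair $(t_1,t_2)$ with respect to $\beta$, by "morphism-independence through idempotent power padding" — the same device sketched in the commented-out Proposition~\ref{prop:indep-mono}. First I would dispose of the easy case: if the common $d$-pattern $\Ps$ is a single word $w$, then by definition $w \in \alpha^{-1}(s_1) \cap \alpha^{-1}(s_2) \subseteq L_1 \cap L_2$ (since $s_i \in \alpha(L_i)$ and $\alpha$ recognizes $L_i$), so $w \in \beta^{-1}(\beta(w)) \cap L_1 \cap L_2$, and $(t_1,t_2) = (\beta(w),\beta(w))$ with $t_i \in \beta(L_i)$ works, $\Ps$ itself being the witnessing pattern.

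For the main case $\Ps = (\frp,f,\frs)$, let $z_1 = u_0v_1u_1\cdots v_nu_n$ and $z_2 = x_0y_1x_1\cdots y_mx_m$ be the two words with $(\alpha,s_1)$- and $(\alpha,s_2)$-compatible \pfsdecomps. Pick $\omega \in \nat$ such that $t^\omega$ is idempotent for every $t$ in $M$ or in $N$ simultaneously. I would then replace each loop word $v_i$ by $v_i^\omega$ and each $u_i$ by $u_i v_{i+1}^\omega$ (with $u_n$ unchanged), forming $w_1 = \tilde u_0 \tilde v_1 \cdots \tilde v_n \tilde u_n$, and symmetrically $w_2$; and define a new pattern $\Ps' = (\frp',f',\frs')$ via the injective block renaming $\frb = (v_\ell,u,v_r) \mapsto \tilde\frb = (v_\ell^\omega, u v_r^\omega, v_r^\omega)$, setting $f'(\tilde\frb) = f(\frb)$ and $f'$ to $0$ elsewhere, and $\frp' = (u_0 v_1^\omega, v_1^\omega)$, $\frs' = (v_n^\omega, u_n)$. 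The point of padding by $\omega$ is twofold: because $\omega$ is an idempotent-power for $M$, inserting these powers does not change the image under $\alpha$, so $\alpha(w_i) = s_i$ and the new decomposition of $w_i$ stays $(\alpha,s_i)$-compatible, hence $w_i \in L_i$; and because $\omega$ is also an idempotent-power for $N$, the $\beta$-image of every loop word $\tilde v_i = v_i^\omega$ is idempotent, which is exactly what forces the new decomposition to be $(\beta,\beta(w_i))$-compatible — one checks $\beta(\tilde u_0 \cdots \tilde v_i) = \beta(\tilde u_0 \cdots \tilde v_i)\cdot\beta(\tilde v_i)$ using idempotency of $\beta(v_i^\omega)$. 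Setting $t_i = \beta(w_i)$, we have $t_i \in \beta(L_i)$ since $w_i \in L_i$, and $\Ps'$ is a common $d$-pattern for $(t_1,t_2)$.

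The routine-but-nontrivial part is verifying that $w_1$ and $w_2$ really are \decops{\Ps'}: one must check the three conditions in the definition of \pfsdecomp for $\Ps'$, which reduces to the fact that $\frb \mapsto \tilde\frb$ is injective (so multiplicities are preserved exactly, threshold by threshold), together with the bookkeeping that the new prefix/suffix blocks $\frp',\frs'$ match the first and last factors of the new decompositions. I expect the main obstacle to be precisely this injectivity of the block-padding map $\frb \mapsto \tilde\frb$ and the attendant care that the threshold counts $f'$ transfer correctly: it must be argued that distinct blocks $\frb \neq \frb'$ cannot be identified after padding, i.e. that $(v_\ell^\omega, uv_r^\omega, v_r^\omega) = (v_\ell'^\omega, u'v_r'^\omega, v_r'^\omega)$ forces $\frb = \frb'$ (from the third coordinate $v_r^\omega = v_r'^\omega$ one recovers $v_r$ as the primitive root power of the right length, then $u$ from the second coordinate, then $v_\ell$ from the first), so that a block occurring $f(\frb)$ times in the old decomposition contributes exactly $f'(\tilde\frb) = f(\frb)$ occurrences of $\tilde\frb$ in the new one — and symmetrically for the threshold-$d$ blocks. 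Everything else is a direct unfolding of the definitions of \pfsdecomp and of $(\beta,\cdot)$-compatibility.
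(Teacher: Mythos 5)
Your proposal follows essentially the same route as the paper's proof: the same case split on whether $\Ps$ is a word, the same choice of a common idempotent power $\omega$ for $M$ and $N$, the same padded decomposition $\tilde u_i = u_iv_{i+1}^\omega$, $\tilde v_i = v_i^\omega$, and the same injective block renaming $\frb\mapsto\tilde\frb$ defining $\Ps'$, with compatibility for $\beta$ following from idempotency of $\beta(v_i^\omega)$. The argument is correct; your sketch of why $\frb\mapsto\tilde\frb$ is injective is a detail the paper leaves to the reader.
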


Let us admit the claim for a moment. Assume that
$(s_1,s_2)\in\alpha(L_1)\times\alpha(L_2)$ has a common $d$-pattern. Let $N$ be the transition monoid of $\As$, and let
$\beta:A^*\to N$ be the associated morphism. Since $\beta$ recognizes
$L_i$ (see~\cite{pin:hal-00143946}), one can apply the claim: it follows that there exists
$(t_1,t_2)\in\beta(L_1)\times\beta(L_2)$ having a common $d$-pattern. By
definition of a transition monoid, it is then immediate to build from
$(t_1,t_2)$ a pair $\bigl((p_1,q_1),(p_2,q_2)\bigr)\in(I_1\times F_1)
\times(I_2\times F_2)$ having a common $d$-pattern.

\medskip It remains to prove the claim.  Let $\alpha:A^*\to M$
and $\beta:A^*\to N$ be morphisms recognizing both $L_i$ for $i=1,2$, and let $F_i=\alpha(L_i)$
and $G_i=\beta(L_i)$.  Let $\Ps$ be a common $d$-pattern of $(s_1,s_2)\in
F_1\times F_2$. If $\Ps = w \in A^*$, then by definition $w \in
\alpha^{-1}(F_1) \cap \alpha^{-1}(F_2) = L_1\cap L_2=\alpha^{-1}(G_1) \cap
\alpha^{-1}(G_2)$, so \Ps is a common $d$-pattern of $(\beta(w),\beta(w))\in
G_1\times G_2$. Otherwise, $\Ps$ is of the form $(\frp,f,\frs)$. We define a
new $d$-pattern $\Ps'$ that is common to some $(t_1,t_2)\in G_1\times
G_2$. Let $\omega \in \nat$ be such that $s^\omega$ is idempotent for all $s
\in M$ and $s \in N$. For any block $\frb = (v_{\ell},u,v_r)$, set
$\tilde{\frb} = ((v_{\ell})^\omega,u(v_r)^\omega, (v_r)^\omega)$. The mapping
$\frb \mapsto \tilde{\frb}$ is clearly injective. Set $\Ps'$ as the
$d$-pattern $(\frp',f',\frs')$ defined as follows:
\begin{itemize}
\item $\frp' = (u(v_r)^\omega,(v_r)^\omega)$ with $(u,v_r) = \frp$.
\item For any block $\frb'$, if $\frb' = \tilde{\frb}$ for some block
  $\frb$, then $f'(\frb') = f(\frb)$. Otherwise $f'(\frb') = 0$.
\item $\frs' = ((v_\ell)^{\omega},u)$ with $(v_\ell,u) = \frs$.
\end{itemize}
We now prove that $\Ps'$ is a common $d$-pattern of some $(t_1,t_2)\in
G_1\times G_2$. By definition of \Ps, there exist $z_1 \in
\alpha^{-1}(s_1)\subseteq L_1$, $z_2\in\alpha^{-1}(s_2)\subseteq L_2$ having an $(\alpha,
s_1)$-, respectively $(\alpha,s_2)$-compatible
\pfsdecomp. Let $z_1=u_0v_1u_1v_2 \cdots v_{n}u_n$ and
$z_2=x_0y_1x_1y_2 \cdots y_{m}x_m$ be these decompositions. Set
$\tilde u_i = u_iv_{i+1}^\omega$ for $i<n$, $\tilde x_i=x_iy_{i+1}^\omega$ for $i < m$, $\tilde u_n = u_n$, $\tilde x_m=x_m$, $\tilde
v_i=v_i^\omega$ for $i \leq n$, and $\tilde y_i=y_i^\omega$ for $i\leq m$.
Let then
\begin{equation}
\label{eq:newdec}
\begin{array}{lll}
w_1 & = & \tilde u_0\tilde v_1\tilde u_1\tilde v_2 \cdots \tilde v_{n}\tilde u_n, \\
w_2 & = & \tilde x_0\tilde y_1\tilde x_1\tilde y_2 \cdots \tilde y_{m}\tilde x_m.
\end{array}
\end{equation}
It is immediate by definition of $\Ps'$ and $z_1,z_2$ that $w_1,w_2$ are
\decops{\Ps'}. Then, it follows from the definition of $\omega$ for elements of
$M$ that $\alpha(w_i) \in F_i$ for $i \in \{1,2\}$. Hence, we have
$\beta(w_i) \in G_i$. We choose $t_i=\beta(w_i)$, so that $(t_1,t_2)\in
G_1\times G_2$. Finally, it follows from
the definition of $\omega$ for $N$ that for $i \in \{1,2\}$, the
decomposition of $w_i$ given by~\eqref{eq:newdec} is
$(\beta,\beta(w_i))$-compatible. We have thus shown that
$(t_1,t_2)\in G_1 \times G_2$ has a common $d$-pattern.\qed

\subsection{\texorpdfstring{Implication $\eqref{item:d} \Rightarrow
    \eqref{item:c}$ in Theorem~\ref{thm:seplttd}}{Implication (\ref{item:d})
    => (\ref{item:c}) in Theorem~\ref{thm:seplttd}}} Again, we prove the
contrapositive of the statement: if $\lttclos{L_1}{k}{d}$ does not separate
$L_1$ from $L_2$ when $k=4(|M|+1)$, then there exists $(s_1,s_2)\in
\alpha(L_1)\times\alpha(L_2)$ having a common $d$-pattern. Observe that by
definition of $\lttclos{L_1}{k}{d}$, if $\lttclos{L_1}{k}{d}$ does not
separate $L_1$ from $L_2$, then there exist $w_1 \in L_1$, $w_2\in L_2$ with
$w_1 \kdltteq w_2$. We take $s_1=\alpha(w_1)$ and $s_2=\alpha(w_2)$. The
next proposition shows that the pair $(s_1,s_2)$ indeed has a common $d$-pattern.

\begin{prop} \label{prop:ktodecomp} Let $\alpha:A^*\to M$
 be a morphism and let $k=4(|M|+1)$. Let $d
  \in \nat$ and let $w_1,w_2$ be words such that $w_1 \kdltteq
  w_2$. Then, there exists a $d$-pattern \Ps, an
  $(\alpha,\alpha(w_1))$-compatible \pfsdecomp, and an
  $(\alpha,\alpha(w_2))$-compatible \pfsdecomp.
\end{prop}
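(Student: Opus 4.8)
The plan is to split on whether $w_1$ is short. If $|w_1|<k$, I claim $w_1=w_2$, and then the trivial one-piece decomposition ($\Ps:=w_1$, $n=0$) is vacuously $(\alpha,\alpha(w_i))$-compatible for $i=1,2$. Indeed, the interior \kprofiles of a word are precisely its length-$k$ infixes, so a word of length $<k$ has none; since $d\ge1$, being "equal up to threshold $d$'' to $0$ means being equal to $0$, so $w_2$ also has no interior \kprofile, i.e.\ $|w_2|<k$. The remaining \kprofiles each occur at most once, so $\kdltteq$ forces $w_1$ and $w_2$ to realize exactly the same \kprofiles; reading off the profiles of the first, second, \dots\ positions (and symmetrically from the right end) recovers the whole of each word, whence $w_1=w_2$. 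From now on $|w_1|\ge k$, and by the same reasoning $|w_2|\ge k$.

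Next I would prove a factorization lemma for a single long word $w$: writing $\lambda(j)=\alpha(w[0,j])$ for the prefix images, the pigeonhole principle gives, in any block of $|M|+1$ consecutive positions, a pair $a<b$ with $\lambda(a)=\lambda(b)$ and $b-a\le|M|$. Such a factor $w[a,b]$ behaves as a loop: in a decomposition $w=u_0v_1u_1\cdots v_nu_n$ placing $v_i=w[a,b]$ with $v_i$ ending at absolute position $b$, one has $\alpha(u_0\cdots v_i)=\lambda(b)=\lambda(a)$ and $\lambda(a)\alpha(v_i)=\lambda(b)=\lambda(a)$, hence $\alpha(u_0\cdots v_i)=\alpha(u_0\cdots v_i)\alpha(v_i)$. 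Scanning $w$ from left to right and extracting such a loop out of each fresh window of $|M|+1$ positions — pulling the first loop out of the first $|M|+1$ positions and the last out of the last $|M|+1$ positions, leaving the surrounding pieces whole — yields $w=u_0v_1u_1\cdots v_nu_n$ with $n\ge2$, every $v_i$ a loop, and all of $|u_0|,\dots,|u_n|,|v_1|,\dots,|v_n|\le|M|$. Thus this is an $(\alpha,\alpha(w))$-compatible \pfsdecomp whose prefix block $(u_0,v_1)$ and suffix block $(v_n,u_n)$ have length $\le2|M|<k-1$, and whose blocks $(v_i,u_i,v_{i+1})$ have length $\le3|M|<k$. (This is where the choice $k=4(|M|+1)$ is used.)

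I would then apply the lemma to $w_1$ and set $\Ps=(\frp,f,\frs)$ with $\frp=(u_0,v_1)$, $\frs=(v_n,u_n)$, and $f(\frb)=\min\bigl(d,\,|\{i:(v_i,u_i,v_{i+1})=\frb\}|\bigr)$; the factorization of $w_1$ is then an $(\alpha,\alpha(w_1))$-compatible \pfsdecomp. It remains to exhibit an $(\alpha,\alpha(w_2))$-compatible \pfsdecomp for the \emph{same} $\Ps$. Since $w_1\kdltteq w_2$, the two words share their prefixes and suffixes of length $k-1$, so $w_2$ begins with $u_0v_1$ and ends with $v_nu_n$; moreover $v_1$ (resp.\ $v_n$) is again a loop at that position of $w_2$, since loopness there depends only on $\alpha(u_0),\alpha(v_1)$ (resp.\ on $\alpha(v_n),\alpha(u_n)$). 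The core of the proof is to cut the middle of $w_2$ into loops so that the resulting multiset of blocks coincides with that of $w_1$ up to threshold $d$. Here one exploits that every block, having length $<k$, is an infix, and that $\kdltteq$ equates the number of occurrences of each length-$k$ infix up to threshold $d$: by running the same extraction on $w_2$, padding each block out to a length-$k$ infix with adjacent letters of the word, and checking that each such padded occurrence is charged to exactly one block of a determined type, one transfers the counts from $w_1$ to $w_2$ and obtains the desired \pfsdecomp; if convenient one replaces $w_2$ by another word with the same $\alpha$-image that literally decomposes along $\Ps$.

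I expect the last step to be the main obstacle. The difficulty is that the greedy extraction consults the prefix-image sequence $\lambda$, which is not determined by the length-$k$ infixes of the word, so $w_1$ and $w_2$ might a priori receive factorizations whose block multisets disagree. Overcoming this requires either refining the extraction into a genuinely window-local rule, or a careful counting argument showing that block multiplicities nonetheless agree up to threshold $d$ (and, if needed, passing to a surrogate word for $w_2$ with the same image under $\alpha$). Everything else — the short case, the pigeonhole factorization, the verification of compatibility, and the treatment of the two outer blocks — is routine.
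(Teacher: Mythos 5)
Your proposal has the right overall shape (dispose of the short case, produce loops by pigeonhole, transfer block multiplicities via \kdltteq), and you correctly locate the crux; but you leave that crux unresolved, so this is not yet a proof. The gap is exactly the one you name in your last paragraph: your extraction rule consults the global prefix-image sequence $\lambda(j)=\alpha(w[0,j])$, which is not a function of the \kprofiles, so nothing forces the factorizations of $w_1$ and $w_2$ to produce matching block multisets, and the hypothesis $w_1\kdltteq w_2$ cannot be brought to bear. Flagging the obstacle is not the same as overcoming it; the entire content of the proposition lives in this step.

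The idea that closes the gap in the paper is to make the loop condition \emph{window-local by definition}. Say a position $x$ with \profile{\lfloor k/2\rfloor} $(w_\ell,w_r)$ admits a \kloop if some nonempty prefix $u$ of $w_r$ satisfies $\alpha(w_\ell)=\alpha(w_\ell\cdot u)$, and take $u$ minimal. This condition, and the loop $u$ itself, depend only on the \kprofile of $x$; yet it still implies the global absorption $\alpha(w[0,x])=\alpha(w[0,x])\cdot\alpha(u)$, since $w_\ell$ is a suffix of $w[0,x]$. The pigeonhole argument survives in local form: among $\lfloor k/4\rfloor=|M|+1$ consecutive positions $x_1<\dots<x_{\lfloor k/4\rfloor}$ there are $i<j$ with $\alpha(w[x_1,x_i])=\alpha(w[x_1,x_j])$, and because $w[x_1,x_i]$ is a suffix of $w_\ell$ and $w[x_i,x_j]$ a prefix of $w_r$, the position $x_i$ admits a \kloop. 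The paper then \emph{inserts} a copy of its \kloop at every position admitting one (rather than extracting factors), producing words $w_i'$ with $\alpha(w_i')=\alpha(w_i)$; since consecutive loop-admitting positions are at most $\lfloor k/4\rfloor$ apart and loops are shorter than $\lfloor k/4\rfloor$, the block surrounding a given position of $w_i$ is determined by its \kprofile, so the number of occurrences of each block equals the number of positions carrying some set of \kprofiles, and $w_1\kdltteq w_2$ transfers these counts up to threshold $d$. Without replacing your $\lambda$-based rule by such a profile-determined rule (or supplying the ``careful counting argument'' you allude to, which does not exist in any obvious form for a non-local rule), the argument does not go through.
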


The remainder of the section is now devoted to the proof of
Proposition~\ref{prop:ktodecomp}. We set $w_1,w_2,k$ and $d$ as in the
statement of the proposition. Observe first that if $w_1=w_2=w$, then it suffices to
take $\Ps=w$. Therefore, we suppose for the remainder of the proof
that $w_1 \neq w_2$. We proceed as follows: we construct two new words
$w'_1,w'_2$ from $w_1,w_2$ such that $w'_1$ admits an
$(\alpha,\alpha(w_1))$-compatible \pfsdecomp and $w'_2$ admits an
$(\alpha,\alpha(w_2))$-compatible \pfsdecomp, for some $d$-pattern
$\Ps=(\frp,f,\frs)$.

\smallskip
We first describe the construction of $w'_1,w'_2$, and then prove that
it is correct. It amounts to duplicating infixes verifying special
properties in $w_1,w_2$. We first define these special infixes,
called \emph{\kloops.}

\medskip\noindent {\bf \kloops.} Let $w \in A^{*}$, $x$ be a
position in $w$, and $(w_\ell,w_r)$ be the \profile{\lfloor
  k/2\rfloor} of $x$. We say that \emph{$x$ admits a \kloop} if there
exists a nonempty prefix $u$ of $w_r$ such that $\alpha(w_\ell) =
\alpha(w_\ell \cdot u)$.  In this case, we call the smallest such $u$ \emph{the
  \kloop of $x$}. See~Figure~\ref{fig:kloop}.

\tikzstyle{ars}=[line width=0.7pt,->]
\tikzstyle{arp}=[line width=1.5pt,->]
\tikzstyle{acc}=[line width=1.0pt,snake=brace]
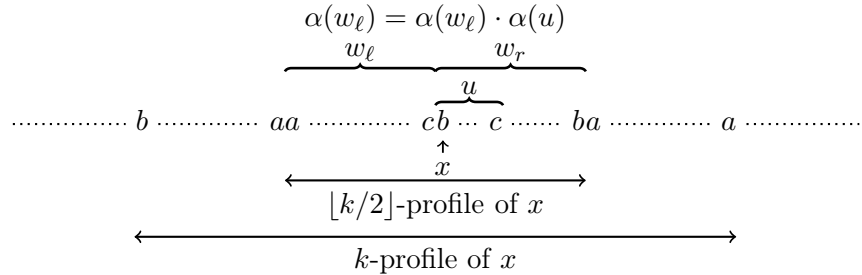
\begin{figure}[h]
  \begin{center}
    \begin{tikzpicture}
      \node[anchor=mid] (b1) at (-3.9,0.0) {$b$};
      \node[anchor=mid] (b2) at (-2.1,0.0) {$a$};
      \node[anchor=mid] (b3) at (-1.9,0.0) {$a$};
      \node[anchor=mid] (b4) at (-0.1,0.0) {$c$};
      \draw[dotted,thick] (b1.mid east) to (b2.mid west);
      \draw[dotted,thick] (b3.mid east) to (b4.mid west);
      \draw[dotted,thick] ($(b1.mid west)-(1.5,0.0)$) to (b1.mid west);
      \node[anchor=mid] (a1) at (0.1,0.0) {$b$};
      \node[anchor=mid] (c1) at (0.8,0.0) {$c$};
      \node[anchor=mid] (a2) at (1.9,0.0) {$b$};
      \node[anchor=mid] (a3) at (2.1,0.0) {$a$};
      \node[anchor=mid] (a4) at (3.9,0.0) {$a$};

      \draw[dotted,thick] (a1.mid east) to (c1.mid west);
      \draw[dotted,thick] (c1.mid east) to (a2.mid west);
      \draw[dotted,thick] (a3.mid east) to (a4.mid west);

      \draw[dotted,thick] (a4.mid east) to ($(a4.mid east)+(1.5,0.0)$);

      \node[anchor=mid] (z1) at (0.1,-0.6) {$x$};

      \draw[ars] (z1) to (a1);

      \draw[<->,thick] (-2.0,-0.75) to node[below] {\profile{\lfloor k/2\rfloor} of $x$} (2.0,-0.75);
      \draw[<->,thick] (-4.0,-1.5) to node[below] {\profile{k} of $x$} (4.0,-1.5);

      \draw[acc] (-2.0,0.7) to node[above] {$w_\ell$} (0.0,0.7);
      \draw[acc] (0.0,0.7) to node[above] {$w_r$} (2.0,0.7);

      \draw[acc] (0.0,0.25) to node[above] {$u$} (0.9,0.25);

      \node at (0.0,1.4) {$\alpha(w_\ell)=\alpha(w_\ell)
        \cdot \alpha(u)$};
    \end{tikzpicture}
  \end{center}
  \caption{A position $x$ admitting a \kloop $u$, that is: $\alpha(w_\ell) =
    \alpha(w_\ell \cdot u)$}
  \label{fig:kloop}
\end{figure}

\noindent For our construction to work, we need \kloops to have three specific
properties. The first two are simple facts that are
immediate from the definition: \kloops are determined by profiles and
can be duplicated without modifying the image of the word under~$\alpha$.

\begin{fact} \label{fct:profloop}
  Let $x$ be a position. Whether $x$ admits a \kloop, and if so, which \kloop $x$ admits, 
  only depends on the \profile{\lfloor k/2\rfloor} of $x$.
\end{fact}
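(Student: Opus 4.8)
The plan is to simply unwind the definition of a \kloop. The key observation is that both the predicate ``$x$ admits a \kloop'' and the associated value ``the \kloop of $x$'' are, by definition, phrased purely in terms of the pair $(w_\ell,w_r)$ which constitutes the \profile{\lfloor k/2\rfloor} of $x$: the predicate asserts the existence of a nonempty prefix $u$ of $w_r$ satisfying the monoid identity $\alpha(w_\ell)=\alpha(w_\ell\cdot u)$, and the value is the shortest such $u$. Since the profile is exactly the data on which these are defined, nothing depends on the surrounding word or on the position $x$ itself beyond its profile.

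Concretely, I would take a position $x$ in a word $w$ and a position $x'$ in a word $w'$, assume that $x$ and $x'$ share the same \profile{\lfloor k/2\rfloor}, say $(w_\ell,w_r)$, and consider the set
$S=\{u\mid u\text{ is a nonempty prefix of }w_r\text{ and }\alpha(w_\ell)=\alpha(w_\ell\cdot u)\}$.
This set refers only to $w_\ell$, $w_r$, and the fixed morphism $\alpha$, hence it is literally the same whether computed from $x$ or from $x'$. Therefore $x$ admits a \kloop (i.e.\ $S\neq\varnothing$) if and only if $x'$ does, and in that case the \kloop of $x$ and the \kloop of $x'$ both equal the shortest element of $S$, so they coincide.

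There is essentially no obstacle here: the statement is an immediate consequence of the fact that \kloops were defined from the outset relative to the \profile{\lfloor k/2\rfloor}. The only point worth stating explicitly is that $\alpha$ is a fixed global parameter that does not vary with the position, so the condition $\alpha(w_\ell)=\alpha(w_\ell\cdot u)$ genuinely depends only on the words $w_\ell$ and $u$, with $u$ ranging over prefixes of $w_r$, and hence only on the profile.
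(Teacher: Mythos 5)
Your proof is correct and matches the paper, which gives no separate argument for this fact and simply declares it immediate from the definition of a $k$-loop, exactly as you observe: both the existence condition and the choice of the smallest witness $u$ are phrased entirely in terms of the pair $(w_\ell,w_r)$ and the fixed morphism $\alpha$.
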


\begin{fact} \label{fct:tloop}
  Let $w$ be a word and let $x$ be a position within $w$ such that $x$ admits a
  \kloop $u$. Then we have $\alpha(w[0,x]) = \alpha(w[0,x])
  \cdot \alpha(u)$. 
\end{fact}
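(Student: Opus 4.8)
The plan is to unfold the definition of a \kloop and observe that the occurrence of $u$ is ``absorbed'' by $\alpha(w[0,x])$, because the defining identity of the loop already holds on a suffix of $w[0,x]$. Write $m=\lfloor k/2\rfloor$ and $m_\ell=\lfloor m/2\rfloor$, and let $(w_\ell,w_r)$ be the \profile{\lfloor k/2\rfloor} of $x$, so that by definition $w_\ell=w[\max(0,x-m_\ell),x]$. The key observation is that $w_\ell$ is a suffix of the prefix $w[0,x]$: using $w[0,x]=w[0,\max(0,x-m_\ell)]\cdot w[\max(0,x-m_\ell),x]$ and setting $p=w[0,\max(0,x-m_\ell)]$, we get $w[0,x]=p\cdot w_\ell$, and hence $\alpha(w[0,x])=\alpha(p)\cdot\alpha(w_\ell)$ since $\alpha$ is a morphism.

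Next, recall that by the definition of a \kloop, $u$ is a nonempty prefix of $w_r$ satisfying $\alpha(w_\ell)=\alpha(w_\ell\cdot u)$. Multiplying this identity on the left by $\alpha(p)$ yields
\[
\alpha(w[0,x])=\alpha(p)\cdot\alpha(w_\ell)=\alpha(p)\cdot\alpha(w_\ell)\cdot\alpha(u)=\alpha(w[0,x])\cdot\alpha(u),
\]
which is the claimed equality.

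There is essentially no obstacle here: the statement is a one-step consequence of the definitions. The only point to keep straight is that $w_\ell$ must be read as a suffix of $w[0,x]$ (this holds whether or not the truncation $\max(0,x-m_\ell)$ is active, and also when $x-m_\ell\le 0$, in which case $p=\varepsilon$), so that the loop identity $\alpha(w_\ell)=\alpha(w_\ell\cdot u)$ can be prolonged on the left by the missing prefix $p$. Note in particular that neither the specific value of $k$ nor the finiteness of $M$ plays any role in this fact; only the facts that $w_\ell$ ends exactly at position $x$ and that $u$ is a left factor realizing $\alpha(w_\ell)=\alpha(w_\ell\cdot u)$ are used.
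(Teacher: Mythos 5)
Your proof is correct and is precisely the argument the paper has in mind: the paper omits a proof, calling the fact ``immediate from the definition,'' and your unfolding (writing $w[0,x]=p\cdot w_\ell$ and left-multiplying the loop identity $\alpha(w_\ell)=\alpha(w_\ell\cdot u)$ by $\alpha(p)$) is exactly that immediate verification.
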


The last property we need is that \kloops occur frequently in words,
\emph{i.e.}, at least one of $\lfloor k/4\rfloor$ consecutive positions must
admit a \kloop. This follows from pumping arguments.

\begin{lem} \label{lem:cloop} Let $w$ be a word and let
  $x_1,\ldots,x_{\lfloor k/4\rfloor}$ be $\lfloor k/4\rfloor$
  consecutive positions in~$w$. Then, there exists at least one
  position $x_i$ with $i < \lfloor k/4\rfloor$  that admits a \kloop.
\end{lem}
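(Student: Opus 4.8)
The plan is a pigeonhole argument anchored at the leftmost of the $\lfloor k/4\rfloor$ given positions. Since $k = 4(|M|+1)$ we have $\lfloor k/4\rfloor = |M|+1$, and the consecutive positions are $x_i = x_1 + (i-1)$ for $i = 1,\dots,|M|+1$; as $x_{|M|+1}$ is a position of $w$, this forces $|w| \ge x_1 + |M| + 1$. First I would consider the $|M|+1$ monoid elements $\alpha(w[x_1,x_i])$ for $i = 1,\dots,|M|+1$. Since $|M| < |M|+1$, two of them coincide: there are $1 \le i < j \le |M|+1$ with $\alpha(w[x_1,x_i]) = \alpha(w[x_1,x_j])$. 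Writing $u := w[x_i,x_j]$, which is nonempty because $i < j$, this says $\alpha(w[x_1,x_i]) = \alpha(w[x_1,x_i]\cdot u)$.

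I then claim that $x_i$ admits a \kloop; note this also meets the index constraint, since $i < j \le |M|+1$ gives $i < |M|+1$. Recall that the \profile{\lfloor k/2\rfloor} of $x_i$ is a pair $(w_\ell,w_r)$ with $w_\ell = w[q,x_i]$ where $q = \max(0,x_i-(|M|+1))$, and $w_r = w[x_i,\min(x_i+(|M|+1),|w|)]$. Two checks are needed. (i) $u$ is a nonempty prefix of $w_r$: indeed $x_j - x_i = j - i \le |M| < |M|+1$ and $x_j \le x_{|M|+1} < |w|$, so $x_i < x_j < \min(x_i+(|M|+1),|w|)$. (ii) $q \le x_1$: this is immediate when $q = 0$, and when $q = x_i - (|M|+1)$ it amounts to $x_1 + (i - |M| - 2) \le x_1$, which holds since $i \le |M|+1$. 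From (ii) we get $q \le x_1 \le x_i$, hence the factorization $w_\ell = w[q,x_1]\cdot w[x_1,x_i]$, and therefore
\[
\alpha(w_\ell \cdot u) = \alpha(w[q,x_1])\cdot\alpha(w[x_1,x_i]\cdot u) = \alpha(w[q,x_1])\cdot\alpha(w[x_1,x_i]) = \alpha(w_\ell),
\]
the middle equality being the pigeonhole identity. So $\alpha(w_\ell) = \alpha(w_\ell\cdot u)$ with $u$ a nonempty prefix of $w_r$, i.e.\ $x_i$ admits a \kloop.

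The one genuinely delicate point — the step I would be most careful with — is the choice of anchor. A pigeonhole over the full prefixes ending at the $x_i$, or over the images $\alpha(w_\ell)$ of the profiles' left parts, does not work: the left window of the profile slides with $i$, and one cannot cancel the extra prefix $w[q,x_1]$ on the left of an identity in a monoid. Anchoring at $x_1$ works precisely because the left window at each $x_i$ has length $|M|+1 \ge x_i - x_1$, so $x_1$ always lies inside (or at the left edge of) that window — which is exactly what legitimizes the factorization $w_\ell = w[q,x_1]\cdot w[x_1,x_i]$ and lets the identity propagate to $w_\ell$. Everything else is routine bookkeeping with the inequalities $0 \le q \le x_1 \le x_i < x_j < \min(x_i+(|M|+1),|w|)$.
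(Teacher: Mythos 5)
Your proof is correct and follows essentially the same route as the paper's: a pigeonhole over the $|M|+1$ images $\alpha(w[x_1,x_i])$ yielding $i<j$ with $\alpha(w[x_1,x_i])=\alpha(w[x_1,x_j])$, then checking that $u=w[x_i,x_j]$ is a prefix of $w_r$ and that $w[x_1,x_i]$ sits as a suffix of $w_\ell$ so the identity propagates by left multiplication. Your version just makes explicit the index bookkeeping that the paper compresses into ``$|w[x_1,x_i]|<\lfloor k/4\rfloor$'' and ``$|u|<\lfloor k/4\rfloor$''.
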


\begin{proof}
  By choice of $k$, $\lfloor k/4\rfloor=|M|+1$. By the
  pigeonhole principle, we obtain two natural numbers $1 \leq i < j \leq \lfloor k/4\rfloor$ 
  such that
  $\alpha(w[x_1,x_i])=\alpha(w[x_1,x_j])$. We prove that $x_i$
  admits a \kloop. Consider the \profile{ \lfloor k/2\rfloor}
  $(w_\ell,w_r)$ of $x_i$ and $u=w[x_i,x_j]$. Since
  $|w[x_1,x_i]|<\lfloor k/4\rfloor$, $w[x_1,x_i]$ is a suffix of
  $w_\ell$, so
  $\alpha(w_\ell) = \alpha(w_\ell \cdot u)$. Since $|u| < \lfloor
  k/4\rfloor$, $u$ is a prefix of $w_r$. Therefore $x_i$ admits a \kloop.

  Observe that $u$ is not necessarily \emph{the} \kloop of $x_i$, as there
  might be a smaller word that also satisfies the definition.
\end{proof}

\medskip
\noindent
{\bf Construction of $w'_1,w'_2$.} We can now construct $w'_1$ and
$w'_2$. If $w,u$ are words and $x$ is a position of $w$, the
\emph{word constructed by inserting $u$ at position $x$} is the word
$w[0,x] \cdot u \cdot w[x,|w|]$. From $w_1$ (resp.~$w_2$), we
construct $w'_1$ (resp.~$w'_2$) by inserting simultaneously all
infixes $z_x$ in $w_1$ (resp.~$w_2$) at any position $x$ that admits a
\kloop, and where $z_x$ is the \kloop of $x$. It remains to prove that
the construction is correct, \emph{i.e.}, that for $i=1,2$, $w'_i$ admits an
$(\alpha,\alpha(w_i))$-compatible \pfsdecomp for some $d$-pattern
$\Ps=(\frp,f,\frs)$.

\medskip
\noindent
{\bf The Construction is Correct.} If we had
$\min(|w_1|,|w_2|)\le\lfloor k/4\rfloor$, then from $w_1 \ltteq{k}{d} 
w_2$ one would obtain $w_1=w_2$, a case already
excluded. Hence, both $w_1,w_2$ have length at least $\lfloor
k/4\rfloor$, so by Lemma~\ref{lem:cloop}, at least one insertion has
occurred in both $w_1$ and $w_2$.  We now prove that
there exists a $d$-pattern $\Ps=(\frp,f,\frs)$ such that for $i=1,2$,
$w'_i$ admits an $(\alpha,\alpha(w_i))$-compatible \pfsdecomp. 

We first define the $d$-pattern $(\frp,f,\frs)$. By definition, $w'_1$
can be decomposed as $w'_1 = u_0v_1u_1v_2 \cdots v_{n}u_n$ with
$w_1=u_0u_1\cdots u_n$ and the words $v_j$ are the \kloops inserted in
the construction. Since at least one insertion was made, we have $n
\geq 1$ and we can set $\frp = (u_0,v_1)$, $\frs = (v_n,u_n)$.  We
define $f$ as the function that maps a block $(v_\ell,u,v_r)$ to the
number of times it occurs in the decomposition, up to
threshold~$d$. Set $\Ps=(\frp,f,\frs)$.  By definition, $u_0v_1u_1v_2
\cdots v_{n}u_n$ is a \pfsdecomp for $w'_1$. Moreover, it is
$(\alpha,\alpha(w_1))$-compatible by Fact~\ref{fct:tloop}. It
remains to prove that $w'_2$ admits an
$(\alpha,\alpha(w_2))$-compatible \pfsdecomp.

By definition, $w'_2$ can be decomposed in a similar way as $w'_1$, 
$w'_2 = u'_0v'_1u'_1v'_2 \cdots v'_{m}u'_m$ with $w_2=u'_0u'_1\cdots
u'_m$ and the words $v'_j$ are the \kloops inserted in the
construction of~$w'_2$. We prove that this is a \pfsdecomp. It will then be
$(\alpha,\alpha(w_2))$-compatible by Fact~\ref{fct:tloop}.

To every position $x$ in $w_2$ we associate a block, prefix block or
suffix block in the following way. By definition, $x$ must fall into a
word $u'_i$ for some $i$. If $i \notin \{0,m\}$, we denote by $g(x)$
the triple $(v'_i,u'_i,v'_{i+1})$. Similarly if $i=0$ (resp.~$i=m$),
then $g(x)$ is the pair $(u'_0,v'_{1})$ (resp.~$(v'_m,u'_{m})$). The
result now follows from the next lemma.

\begin{lem} \label{lem:blockscst} Let $x,y$ be distinct positions of
  words $w_1$ or $w_2$ with the same \profile{k}. Then
  $g(x)=g(y)$. Moreover, the number of copies of a block \frb in the
  decomposition of $w_1$ (resp. $w_2$) is exactly the number of
  positions $x$ in $w_1$ (resp. $w_2$) such that $g(x)=\frb$.
\end{lem}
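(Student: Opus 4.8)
The plan is to prove the two assertions of the lemma in turn, the first — that $g(x)$ depends only on the \profile{k} of $x$ — being where the work lies. Let $w$ denote $w_1$ or $w_2$, and $w' = u_0v_1u_1\cdots v_nu_n$ the corresponding word with \kloops inserted. The point is that the \profile{k} of a position records enough of the word around it to recover the piece of this decomposition sitting there. Since $k=4(|M|+1)$ we have $\lfloor k/4\rfloor = |M|+1$, and Lemma~\ref{lem:cloop}, applied to windows of $\lfloor k/4\rfloor$ consecutive positions, yields: two successive positions admitting a \kloop are at distance at most $\lfloor k/4\rfloor - 1$, the first position admitting a \kloop is within distance $\lfloor k/4\rfloor - 1$ of the left end of $w$, and the last is within that distance of the right end (here we use that $w$ has length at least $\lfloor k/4\rfloor$, as already established). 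Now fix a position $x$. Its \profile{k} records the letters at every position within distance $\lfloor k/2\rfloor$ to the left and $\lceil k/2\rceil$ to the right of $x$ (truncated at the ends of $w$). Since $\lfloor k/4\rfloor - 1$ is well below both $\lfloor k/2\rfloor$ and $\lceil k/2\rceil$, these letters determine, for every position $x'$ with $|x'-x|\le\lfloor k/4\rfloor - 1$, the whole \profile{\lfloor k/2\rfloor} of $x'$; and whenever such a profile is truncated at an end of $w$, this very fact pins down the absolute position of $x'$, so the reading is unambiguous. By Fact~\ref{fct:profloop}, we therefore know from the \profile{k} of $x$ alone which positions $x'$ in that range admit a \kloop and which \kloop they admit.

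It follows that the \kloop position $x^-$ immediately at or to the left of $x$, and the \kloop position $x^+$ immediately to the right of $x$, lie within distance $\lfloor k/4\rfloor - 1$ of $x$ whenever they exist, so they appear inside the window recorded by the \profile{k} of $x$, together with their \kloops and the infix $w[x^-,x^+]$. If both exist, $x$ falls into the interior factor $u_i$ with $v_i$ the \kloop at $x^-$, $u_i = w[x^-,x^+]$, and $v_{i+1}$ the \kloop at $x^+$, so $g(x) = (v_i,u_i,v_{i+1})$ is determined. If there is no \kloop position at or to the left of $x$, then $x$ lies within distance $\lfloor k/4\rfloor - 1$ of the left end of $w$, so the \profile{k} records the whole prefix of $w$, hence the factor $u_0$ and the \kloop $v_1$, and thus $g(x) = \frp$ is determined; symmetrically, if no \kloop position lies to the right of $x$ then $g(x) = \frs$ is determined from the recorded suffix of $w$. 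Since this reasoning involves only the \profile{k} of $x$, and never whether $x$ comes from $w_1$ or from $w_2$, any two positions of $w_1$ or $w_2$ with the same \profile{k} have the same image under $g$.

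For the second assertion, this is bookkeeping on the construction: the decomposition $w' = u_0v_1u_1\cdots v_nu_n$ is obtained by cutting $w$ exactly at its \kloop positions, so the interior factors $u_i$ (for $1\le i\le n-1$) partition the corresponding positions of $w$, and every position of $u_i$ is sent by $g$ to the block $(v_i,u_i,v_{i+1})$; hence the occurrences of a block $\frb$ in the decomposition are accounted for precisely by the positions of $w$ that $g$ sends to $\frb$, which is the claim. (This is the form needed afterwards: by the first assertion each \profile{k} has a well-defined $g$-value, so summing over the \profiles{k} with $g$-value $\frb$ and invoking $w_1\kdltteq w_2$ shows that every block occurs the same number of times, up to threshold $d$, in the decompositions of $w'_1$ and of $w'_2$, which is what the construction of the $d$-pattern requires.)

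The step I expect to be the main obstacle is making the first part quantitatively airtight: one must check the numerical inequalities guaranteeing that the nearest \kloop positions on each side of $x$, their \kloops — whose lengths are bounded by the radius of a \profile{\lfloor k/2\rfloor} — and the infix between them all fit strictly inside the window recorded by the \profile{k} of $x$, and one must handle with care the boundary cases where $x$ lies close to an end of $w$ and the various profiles are truncated; these are exactly the cases giving rise to the prefix block $\frp$ and the suffix block $\frs$.
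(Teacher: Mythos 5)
Your proof of the first assertion is correct and takes the same route as the paper: the \kprofile of $x$ determines the \profiles{\lfloor k/2\rfloor} of every position within distance $\lfloor k/4\rfloor$ of $x$, hence by Lemma~\ref{lem:cloop} and Fact~\ref{fct:profloop} it determines the nearest \kloop positions on each side together with their \kloops, hence $g(x)$; your handling of the truncated boundary cases is if anything more explicit than the paper's.

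The gap is in the second assertion. A copy of a block $\frb=(v_\ell,u,v_r)$ in the decomposition contains $|u|$ positions of $w$, \emph{all} of which are sent to $\frb$ by $g$, so the number of positions $x$ with $g(x)=\frb$ is $|u|$ times the number of copies; your justification (``the occurrences of $\frb$ are accounted for precisely by the positions that $g$ sends to $\frb$'') glosses over exactly this factor, and the parenthetical where you put the lemma to use inherits the problem. Summing $|w_i|_p$ over \emph{all} \kprofiles $p$ with $g$-value $\frb$ yields $|u|\cdot c_i$, where $c_i$ is the number of copies of $\frb$ in the decomposition of $w'_i$, and ``$|u|c_1$ and $|u|c_2$ are equal up to threshold $d$'' does not give ``$c_1$ and $c_2$ are equal up to threshold $d$'': take $d=10$, $|u|=5$, $c_1=2$, $c_2=3$; the products are $10$ and $15$, both at least $d$, yet $c_1\neq c_2$ and both are below $d$, so the decomposition of $w'_2$ would violate condition (2) of a \pfsdecomp. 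What the paper actually proves --- and what your first-part argument already contains implicitly, since you locate $x^-$ and $x^+$ relative to $x$ --- is the finer statement that the \kprofile of $x$ also determines the \emph{relative position} of $x$ inside its copy of $u$; hence two distinct positions with the same \kprofile lie in distinct copies. One then takes $P$ to be the set of \kprofiles whose $g$-value is $\frb$ \emph{and} whose relative position is $0$, so that each copy of $\frb$ contains exactly one position with \kprofile in $P$ and $c_i=\sum_{p\in P}|w_i|_p$; with that choice the transfer of ``equal up to threshold $d$'' from $w_1\kdltteq w_2$ to the block counts is immediate. You need to state and use this relative-position (injectivity) refinement explicitly; it is the real content of the second assertion.
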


\begin{proof}
  There are several cases to treat depending on whether $x,y$ are in
  $w_1$ or $w_2$ and whether there are at least $\lfloor k/4\rfloor$
  positions to the right and left of $x,y$ or not. All cases are
  treated similarly. Therefore, we only treat the case when there are
  at least $\lfloor k/4\rfloor$ positions to the right and left of
  $x,y$, and positions $x,y$ are both in $w_1$.

  Let $z$ be a position such that there are at least $\lfloor k/4
  \rfloor$ positions to the right and left of $z$. Let $(w_\ell,w_r)$
  be the \kprofile of $z$ and $z_\ell \leq z$ and $z_r > z$ be the
  positions admitting \kloops that are closest to $z$. Observe that by
  Lemma~\ref{lem:cloop}, $z-z_\ell \leq {\lfloor k/4\rfloor}$ and $z_r-z
  \leq {\lfloor k/4\rfloor} - 1$, hence $z_\ell,z_r$ belong to the copy of $w_\ell,w_r$
  at $z$.

  We claim that  the relative positions of $z_\ell,z_r$ in this copy and the
  actual \kloops only depend on $(w_\ell,w_r)$. This claims immediately implies,
  by construction of $w'_1,w'_2$, that $g(z)$ only depends on its
  \kprofile. This entails the first part of the lemma. Moreover,
  this also proves that if $(v_\ell,u,v_r) = g(z)$, then the relative
  position of $z$ within the corresponding copy of $u$ only depends on
  the \kprofile of $z$. This means that two positions with the same
  \kprofile can only generate the same copy of a block if they are
  equal: this is the second part of the~lemma.

  It remains to prove the claim. By definition of profiles, the
  \profiles{\lfloor k/2\rfloor} of $z_\ell,z_r$ are determined
  by the \kprofile of $z$ (see Figure~\ref{fig:profile}).
  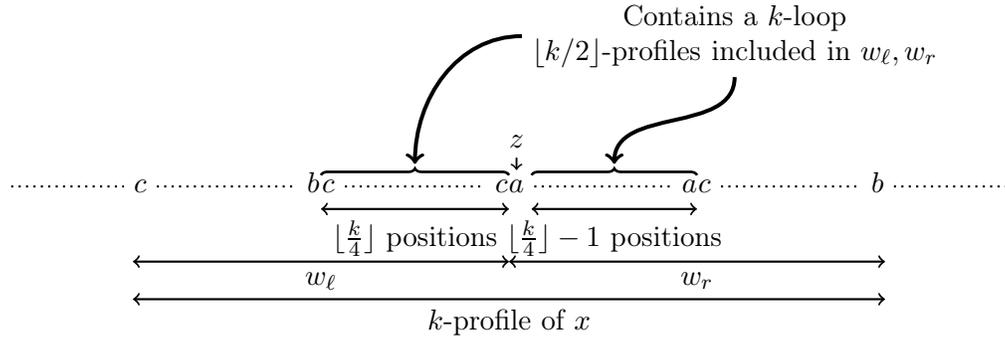
\begin{figure}[h]
    \begin{center}
      \begin{tikzpicture}
        \node[anchor=mid] (b1) at (-4.9,0.0) {$c$};
        \node[anchor=mid] (b2) at (-2.6,0.0) {$b$};
        \node[anchor=mid] (b3) at (-2.4,0.0) {$c$};
        \node[anchor=mid] (b4) at (-0.1,0.0) {$c$};

        \draw[dotted,thick] (b1.mid east) to (b2.mid west);
        \draw[dotted,thick] (b3.mid east) to (b4.mid west);
        \draw[dotted,thick] ($(b1.mid west)-(1.5,0.0)$) to (b1.mid west);

        \node[anchor=mid] (a1) at (0.1,0.0) {$a$};
        \node[anchor=mid] (a2) at (2.4,0.0) {$a$};
        \node[anchor=mid] (a3) at (2.6,0.0) {$c$};
        \node[anchor=mid] (a4) at (4.9,0.0) {$b$};

        \draw[dotted,thick] (a1.mid east) to (a2.mid west);
        \draw[dotted,thick] (a3.mid east) to (a4.mid west);
        \draw[dotted,thick] (a4.mid east) to ($(a4.mid east)+(1.5,0.0)$);

        \node[anchor=mid] (z1) at (0.1,0.6) {$z$};

        \draw[ars] (z1) to (a1);
        \draw[<->,thick] (-2.5,-0.3) to node[below] {${\lfloor \frac{k}{4}\rfloor}$ positions}
        (0.0,-0.3);
        \draw[<->,thick] (0.3,-0.3) to node[below] {${\lfloor \frac{k}{4}\rfloor}-1$ positions}
        (2.5,-0.3);
        \draw[acc] (-2.5,0.15) to (0.0,0.15);
        \draw[acc] (0.3,0.15) to (2.5,0.15);
        \draw[<->,thick] (-5.0,-1.0) to node[below] {$w_\ell$} (0.0,-1.0);
        \draw[<->,thick] (0.0,-1.0) to node[below] {$w_r$} (5.0,-1.0);
        \draw[<->,thick] (-5.0,-1.5) to node[below] {\profile{k} of $x$} (5.0,-1.5);

        \node[align=center] (info) at (3.0,2.0) {Contains a
          \kloop\\\profiles{\lfloor k/2\rfloor} included in $w_\ell,w_r$};

        \draw[arp] (info.west) to [out=180,in=90] (-1.25,0.3);
        \draw[arp] (info.south) to [out=-90,in=90] (+1.4,0.3);
      \end{tikzpicture}
    \end{center}
    \caption{Construction in Lemma~\ref{lem:blockscst}}
    \label{fig:profile}
  \end{figure}
  By Fact~\ref{fct:profloop}, this means that $z_\ell,z_r$ as well as
  their actual \kloop are determined by the \kprofile of $z$, which
  terminates the proof.
\end{proof}

Taking $x,y$ as the first positions (resp.~last positions) of
$w_1,w_2$, Lemma~\ref{lem:blockscst} implies that $\frp = (u_0,v_1) =
(u'_0,v'_1)$ (resp.~$\frs = (v_n,u_n) = (v'_m,u'_m)$). We finish by
proving that for every block $(v_\ell,u,v_r)$ the number of indices
$i$ such that $(v_\ell,u,v_r)=(v_i,u_i,v_{i+1})$ is the same in the
decompositions of $w'_1,w'_2$ up to threshold $d$. By definition of
$(\frp,f,\frs)$ this will prove that the decomposition of $w'_2$ is
indeed a \pfsdecomp.

Let $\frb=(v_\ell,u,v_r)$ be a block. By Lemma~\ref{lem:blockscst},
there exists a set $P$ of \kprofiles such that the number of indices $i$ in
$w' _1$ (resp.~$w'_2$) such that $\frb = (v_i,u_i,v_{i+1})$ is exactly
the number of positions in $w_1$ (resp.~in $w_2$) with a \kprofile in
$P$. Because $w_1 \kdltteq w_2$ these numbers are then equal in $w_1,
w_2$ up to threshold $d$ and this finishes the proof of Proposition~\ref{prop:ktodecomp}.

\section{Separation by \ltt Languages}
\label{sec:ltt}
\makeatletter{}%
This section is devoted to \ltt. Again, our theorem actually contains
several results. In the case of \ltt, two parameters are involved: the
size $k$ of profiles and the counting threshold~$d$. The first result
in our theorem states that the bound on $k$ of Theorem~\ref{thm:seplttd}
still holds for full \ltt. This means that two languages are
\ltt-separable if and only if there exists some counting threshold $d$
such that they are $\ltt[k,d]$-separable with the same bound $k$ as in
Theorem~\ref{thm:seplttd}. It turns out that this already yields an
algorithm for testing \ltt-separability. The algorithm relies on the
decidability of Presburger arithmetic and is actually adapted in a
straightforward manner from an algorithm of~\cite{bojLTT} for deciding
membership in \ltt.

While this first result gives an algorithm for testing separability,
it gives no insight about an actual separator. Indeed, the
procedure does not produce the actual counting threshold~$d$. This is
obtained in the second part of our theorem: we prove that two
languages are \ltt-separable if and only if they are
$\ltt[k,d]$-separable, where~$k$ is as defined in
Theorem~\ref{thm:seplttd}, and $d$ is bounded by 
a function of the size of the monoid (or automaton) recognizing the
input languages. Note that this result also gives another
(brute-force) algorithm for testing \ltt-separability.
We now state our theorem. Recall that $A_k$ denotes
the set of \kprofiles.

\begin{thm}
  \label{thm:sepltt}
  Let $L_1,L_2$ be regular languages. Let $\alpha:A^*\to
  M$ be a morphism into a finite monoid $M$ recognizing both $L_1$ and $L_2$. Let $\As$
  be an NFA recognizing both $L_1$ and $L_2$, such that $L_i=L(\As,I_i,F_i)$.  Set $n$ to
  be either $|M|+1$ or $|\As|+1$. Let $k=4(|M|+1)$ and let $d=(|A_k|n)^{|A_k|}$. Then, the following
  conditions are equivalent:
  \begin{enumerate}
  \item\label{item:a3} $L_1$ and $L_2$ are \ltt-separable.
  \item\label{item:b3} There exists $d' \in \nat$ such that $L_1$ and
    $L_2$ are $\ltt[k,d']$-separable.
  \item\label{item:e3} There exists $d' \in \nat$ such that no pair in
    $\alpha(L_1)\times\alpha(L_2)$ has a common $d'$-pattern.
  \item\label{item:f3} There exists $d' \in \nat$ such that no pair in
    $(I_1\times F_1) \times(I_2\times F_2)$ has a common $d'$-pattern.
  \item\label{item:c3} $L_1$ and $L_2$ are $\ltt[k,d]$-separable.
  \item\label{item:d3} The language $\lttclos{L_1}{k}{d}$ separates
    $L_1$ from $L_2$.
  \end{enumerate}
\end{thm}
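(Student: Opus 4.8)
The plan is to derive all the equivalences except one from Theorem~\ref{thm:seplttd}, and to concentrate the real work on bounding the threshold. The implications $\eqref{item:d3}\Rightarrow\eqref{item:c3}\Rightarrow\eqref{item:b3}\Rightarrow\eqref{item:a3}$ are immediate from the definitions (as $\lttclos{L_1}{k}{d}\in\ltt[k,d]\subseteq\ltt$). Next I would obtain $\eqref{item:a3}\Leftrightarrow\eqref{item:b3}\Leftrightarrow\eqref{item:e3}\Leftrightarrow\eqref{item:f3}$ by applying Theorem~\ref{thm:seplttd} once for each threshold $d'$: being \ltt-separable is by definition being $\ltt[\ell,d']$-separable for some $\ell$ and $d'$, which by Theorem~\ref{thm:seplttd} with $d:=d'$ is equivalent to $\ltt[k,d']$-separability, to the absence of a common $d'$-pattern in $\alpha(L_1)\times\alpha(L_2)$, and to its absence in $(I_1\times F_1)\times(I_2\times F_2)$; quantifying over $d'$ then links $\eqref{item:a3},\eqref{item:b3},\eqref{item:e3},\eqref{item:f3}$. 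So the only implication left is $\eqref{item:a3}\Rightarrow\eqref{item:d3}$: that the \emph{specific} threshold $d=(|A_k|n)^{|A_k|}$ already suffices.

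I would prove its contrapositive. Suppose $\lttclos{L_1}{k}{d}$ does not separate $L_1$ from $L_2$; then there are $w_1\in L_1$, $w_2\in L_2$ with $w_1\kdltteq w_2$, and by Proposition~\ref{prop:ktodecomp} the pair $(\alpha(w_1),\alpha(w_2))$ has a common $d$-pattern. In view of the equivalences above it is enough to prove the core claim: \emph{if a pair $(s_1,s_2)$ has a common $d$-pattern with $d=(|A_k|n)^{|A_k|}$, then it has a common $d'$-pattern for every $d'$} — for then no $d'$ makes $\alpha(L_1)\times\alpha(L_2)$ free of common $d'$-patterns, so $\eqref{item:e3}$ fails and hence $\eqref{item:a3}$ fails. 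When $d'\le d$ the claim is trivial: one simply re-saturates the function $f$ of the pattern at the lower threshold $d'$ (any \pfsdecomp for a threshold-$d$ pattern is again one for its threshold-$d'$ re-saturation). So the content is in the case $d'>d$.

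For that case I would start from a common $d$-pattern $(\frp,f,\frs)$ realized by $(\alpha,s_1)$- and $(\alpha,s_2)$-compatible \pfsdecomps $z_1=u_0v_1u_1\cdots v_nu_n$ and $z_2=x_0y_1x_1\cdots y_mx_m$, taking the ones produced by Proposition~\ref{prop:ktodecomp}, so that every $v_i$ and every $y_j$ is a \kloop, which is what allows further pumping. Call a block \emph{saturated} if $f$ sends it to $d$. Then (i) $z_1$ and $z_2$ have the same saturated blocks (equality up to threshold $d$ forces both counts to be $\ge d$), (ii) the non-saturated blocks occur with the same exact multiplicities in $z_1$ and $z_2$, and (iii) there are at most $|A_k|$ block types altogether, since a block is determined by the \kprofile of a position, as in Lemma~\ref{lem:blockscst}. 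The plan is to pump $z_1$ and $z_2$ so that every saturated block is forced to occur at least $d'$ times while the non-saturated blocks keep their exact multiplicities and $\frp,\frs$ are left untouched; the pumped decompositions then realize one common $d'$-pattern — same $\frp$, same $\frs$, same non-saturated part, all saturated blocks raised to $d'$. Concretely I would locate inside $z_1$ (and likewise inside $z_2$) a family of pairwise disjoint, $\alpha$-absorbed loops whose block content is purely saturated, whose repetition creates no new block at the seams (which holds when the first block's left $v$-component equals the last block's right $v$-component), and which together meet every saturated block; repeating each of them $d'$ times does the job. The existence of such a family is an iterated pigeonhole: one peels off the at most $|A_k|$ saturated block types one at a time, each time using that the type under consideration still occurs, among controlled values of the $\alpha$-partial products (there are fewer than $n$ of these), enough times to extract one more absorbed clean loop; since each level consumes a factor $|A_k|\cdot n$ of occurrences, a starting supply of $d=(|A_k|n)^{|A_k|}$ is exactly what is needed.

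The step I expect to be the main obstacle is precisely this combinatorial extraction. The delicate points are: choosing the loops disjoint, absorbed and seam-clean at the same time, so that repeating them does not perturb the abstract pattern outside the saturated blocks; carrying it out \emph{simultaneously} on $z_1$ and $z_2$ so that the two pumped decompositions carry literally the same $d'$-pattern (the multiset of saturated blocks must end up on both sides above $d'$, the non-saturated multiplicities exactly preserved, and $\frp,\frs$ intact); and handling the case where a saturated block is ``spread out'' by non-saturated ones, so that no short purely-saturated loop around it exists and one is forced into the full nested pigeonhole over all $|A_k|$ types — this is where the bound $(|A_k|n)^{|A_k|}$ is genuinely used. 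The automaton case ($n=|\As|+1$) is entirely analogous, with states of $\As$ in place of $\alpha$-partial products.
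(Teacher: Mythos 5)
Your reduction of the theorem to the single implication that the specific $d=(|A_k|n)^{|A_k|}$ suffices is exactly right and matches the paper's strategy. The gap is in the combinatorial extraction, precisely where you expect it, and I do not think your framework can be repaired as stated. The invariant you impose --- every pumped loop has \emph{purely saturated} block content, so that non-saturated blocks keep their exact multiplicities while saturated ones are raised to $d'$ --- is not achievable in general. Concretely: a block $\frb=(v,u,v')$ may occur exactly $d$ times in the decomposition of $z_1$ with its occurrences alternating with those of a non-saturated block $\frc=(v',u',v)$ occurring $d-1$ times. Every seam-clean absorbed segment containing $j$ occurrences of $\frb$ then contains $j-1$, $j$ or $j+1$ occurrences of $\frc$, so any repetition that raises $\frb$ past $d'$ raises $\frc$ by essentially the same factor. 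Your counting does not exclude this configuration: there can be up to $|A_k|(d-1)$ non-saturated block occurrences available to fill the $d-1$ gaps between consecutive occurrences of $\frb$, so no pigeonhole over the occurrences of $\frb$ (nested or not) produces a gap free of non-saturated blocks. And once $\frc$ is unavoidably boosted in $z_1$, you must boost it to a matching count (up to threshold $d'$) in $z_2$, whose block arrangement is unrelated and may offer no absorbed loop through $\frc$ at all. So the target $d'$-pattern you fix in advance need not be realizable, even though \emph{some} common $d'$-pattern exists.

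The paper sidesteps this by never pumping at the level of blocks in \pfsdecomps. It pumps the \emph{original} words $w_1,w_2$ and measures everything with \kprofiles and the relations $\ltteq{k}{h}$ for a descending hierarchy of thresholds $h=m^{\ell}$, $m=|A_k|n$. Lemma~\ref{lem:thresholdpump} boosts one profile occurring at least $m^{\ell}$ times while only guaranteeing $w'\ltteq{k}{m^{\ell-1}}w$: profiles with intermediate counts are \emph{allowed to drift}, only the rare ones (fewer than $m^{\ell-1}$ occurrences) must be avoided by the pumped infix, and that is why the pigeonhole closes ($m^{\ell}$ occurrences against at most $|A_k|(m^{\ell-1}-1)$ forbidden positions leaves $n$ consecutive clean occurrences, two of which share a state of $\As$). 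The set of boosted profiles is thus determined adaptively, round by round and simultaneously in both words, with the induction measuring progress by the number of profiles whose counts still fail to agree up to $d'$; a common pattern is only read off at the very end via Proposition~\ref{prop:ktodecomp}, never manipulated during the pumping. Your ``each level consumes a factor $|A_k|\cdot n$'' is the right numerology, but it is incompatible with the rigid saturated/non-saturated dichotomy at the single threshold $d$ that your fixed target pattern presupposes.
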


Observe that decidability of \ltt-separability is immediate from
Item~\eqref{item:c3} by using the usual brute-force algorithm. As it was
the case for a fixed counting threshold, this algorithm is
slow and we will present a faster algorithm by using
Items~\eqref{item:e3} and~\eqref{item:f3} in Section~\ref{sec:comp}.

\begin{cor}
  \label{cor:decidltt}
  It is decidable whether two given regular languages are
  \ltt-separable.
\end{cor}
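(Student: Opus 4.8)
The plan is to read the corollary off Theorem~\ref{thm:sepltt}, whose equivalence $\eqref{item:a3}\Leftrightarrow\eqref{item:d3}$ reduces \ltt-separability to testing a single, explicit candidate separator. Starting from two regular input languages $L_1,L_2$, given say by NFAs, I would first build a common recognizing device: the Cartesian product of the two NFAs yields a single NFA \As together with initial and final state sets $I_i,F_i$ such that $L_i=L(\As,I_i,F_i)$, and its transition monoid provides a morphism $\alpha:A^*\to M$ recognizing both languages. From $|M|$ (equivalently, from $|\As|$) I would then compute the two parameters fixed by Theorem~\ref{thm:sepltt}: the profile size $k=4(|M|+1)$ and the threshold $d=(|A_k|n)^{|A_k|}$, where $|A_k|=|A|^{O(k)}$ and $n=|M|+1$. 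All of this is a finite, explicit computation.

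The core step is to compute the canonical candidate \lttclos{L_1}{k}{d} and to test it. Recall that \lttclos{L_1}{k}{d} is the saturation of $L_1$ under the equivalence \kdltteq, which is a congruence of finite index; hence it is a regular language and, crucially, effectively computable. Concretely, one first builds a finite monoid recognizing \kdltteq by tracking, for each of the finitely many \kprofiles, the number of its occurrences up to threshold $d$, while maintaining a bounded sliding window of recently read letters in order to identify the profile centred at each position. This exhibits \kdltteq as a congruence realized by an explicit finite monoid, so each of its finitely many classes is a regular language given effectively. One then selects those classes that meet $L_1$---each such test being an instance of intersection nonemptiness for regular languages---and takes their union, which is exactly \lttclos{L_1}{k}{d}. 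Since $L_1\subseteq\lttclos{L_1}{k}{d}$ holds by definition, this candidate separates $L_1$ from $L_2$ if and only if $L_2\cap\lttclos{L_1}{k}{d}=\varnothing$, and this last condition is decidable, being emptiness of the intersection of two regular languages.

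The algorithm therefore answers ``separable'' exactly when $L_2\cap\lttclos{L_1}{k}{d}=\varnothing$, and its correctness is immediate: by the equivalence $\eqref{item:a3}\Leftrightarrow\eqref{item:d3}$ of Theorem~\ref{thm:sepltt}, $L_1$ and $L_2$ are \ltt-separable if and only if \lttclos{L_1}{k}{d} separates them. Since every step is effective and the procedure halts, \ltt-separability is decidable. As an alternative one may run the cruder brute-force algorithm licensed by Item~\eqref{item:c3}, enumerating all $\ltt[k,d]$ languages and testing each as a separator; the single-candidate version above is merely its cheapest instance.

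Because Theorem~\ref{thm:sepltt} already carries the whole mathematical content, I expect no genuine obstacle in the corollary itself. The one point deserving care is the effectivity claim: one must verify that the finite-index congruence \kdltteq is realized by a concretely computable finite monoid, so that the saturation of the regular language $L_1$ is obtained as an actual regular language rather than an abstract union of classes. Once this is spelled out, decidability follows from the standard closure and decision properties of regular languages.
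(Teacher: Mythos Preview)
Your proposal is correct and essentially follows the paper's own approach: the paper derives the corollary directly from Theorem~\ref{thm:sepltt}, noting that decidability is immediate from Item~\eqref{item:c3} via the brute-force test of all finitely many $\ltt[k,d]$ languages, which you also mention as an alternative. Your primary route via Item~\eqref{item:d3}, testing the single explicit candidate $\lttclos{L_1}{k}{d}$, is a mild refinement of that same idea and is equally supported by the theorem.
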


By definition, a language is \ltt if it is $\ltt[k,d]$ for
some natural numbers $k,d$. Hence, the equivalence between
Items~\eqref{item:a3}, \eqref{item:b3}, \eqref{item:e3} and~\eqref{item:f3} is 
an immediate consequence of Theorem~\ref{thm:seplttd}. It is also
clear that
$\eqref{item:c3} \implies\eqref{item:d3}\implies\eqref{item:a3}$.
Therefore, we
only need to prove Items~\eqref{item:c3} or \eqref{item:d3} from one
of the other properties, \emph{i.e.}, the
bound on the threshold $d$. Unfortunately, these are exactly the items
we need for Corollary~\ref{cor:decidltt}. However, we will prove that
by reusing an algorithm of~\cite{bojLTT}, Corollary~\ref{cor:decidltt}
can also be derived directly from Item~\eqref{item:b3}.

The remainder of this section is organized in three
subsections. We first explain how Corollary~\ref{cor:decidltt} can be
derived from Item~\eqref{item:b3} without actually having to compute a
bound on the counting threshold. Next, we prove
our bound on the counting threshold in Theorem~\ref{thm:sepltt}. Finally,
we discuss the optimality of this bound in the last subsection.

\subsection{Decidability of \ltt-separability as a consequence of
  Theorem~\ref{thm:seplttd}}

As we explained, the equivalence of Item~\eqref{item:b3} to
\ltt-separability is immediate from Theorem~\ref{thm:seplttd}. 
We explain how to combine Item~\eqref{item:b3} with an algorithm
of~\cite{bojLTT} to obtain decidability directly.

In~\cite{bojLTT}, it is proved that once $k$ is fixed, Parikh's
Theorem~\cite{Parikh:Context-Free-Languages:1966:a} can be used to
prove that whether a language is $\ltt[k,d]$ for some $d$ can be
rephrased as a computable Presburger formula. Decidability of
membership in \ltt can then be reduced to decidability of Presburger
Arithmetic. For achieving this, two ingredients were needed: $a)$ a
bound on $k$, and $b)$ the translation to Presburger arithmetic. It
turns out that in~\cite{bojLTT}, only the proof of $a)$ was specific to
membership. On the other hand, separation was already taken care of in
$b)$, because the intuition behind the Presburger formula was 
testing \emph{separability} between the input language and its complement. In
our setting, we have already replaced~$a)$, \emph{i.e.},~bounding~$k$,
by Item~\eqref{item:b3}. Therefore, the argument can be generalized. We
explain in the remainder of this subsection how to construct the Presburger
formula. The argument makes use of the notion of
commutative image, which we first recall.

\medskip
\noindent
{\textbf{Commutative Images.}} Let $w\in A^{*}$. The \emph{commutative
  image} of $w$, denoted by $\pi(w)$, is the $A$-indexed vector of
natural numbers counting, for every $a \in A$, how many occurrences of $a$
there are~in~$w$. This notion can be easily generalized in order to
count profiles rather than just letters. Let $k\in\nat$. The
\emph{\kimage } of $w$, $\pi_k(w)$, is the $A_k$-indexed vector of
numbers  counting for every \kprofile $(w_\ell,w_r)$ the number of
positions in $w$ with \kprofile $(w_\ell,w_r)$.  If $L$ is a language, the
\emph{\kimage } of $L$, $\pi_k(L)$ is the set $\{\pi_k(w) \mid w \in
L\}$. The definition of \kdltteq yields the following~fact.

\begin{fact} \label{lem:eqredef}
  Let $w,w' \in A^{*}$ and let $k,d \in \nat$. Then $w \kdltteq w'$ if and only if
  $\pi_k(w)$ and $\pi_k(w')$ are equal componentwise up to threshold $d$.
\end{fact}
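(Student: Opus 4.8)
The plan is to observe that this fact is a pure unwinding of the definitions, the only point worth spelling out being that the two ways of counting profiles agree componentwise. First I would recall that, for a \kprofile $(w_\ell,w_r)$, the quantity $|w|_{(w_\ell,w_r)}$ was defined as the number of positions $x$ in $w$ whose \kprofile is $(w_\ell,w_r)$. On the other hand, $\pi_k(w)$ is the $A_k$-indexed vector whose component at index $(w_\ell,w_r)$ is, by definition, the number of positions in $w$ with \kprofile $(w_\ell,w_r)$. Hence the $(w_\ell,w_r)$-component of $\pi_k(w)$ equals $|w|_{(w_\ell,w_r)}$, and likewise for $w'$.

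Next I would invoke the definition of ``equal componentwise up to threshold $d$'': two $A_k$-indexed vectors satisfy this relation if and only if, for every index $(w_\ell,w_r)\in A_k$, their $(w_\ell,w_r)$-components are equal up to threshold $d$ (i.e.\ either both are equal, or both are $\ge d$). Combining with the previous paragraph, ``$\pi_k(w)$ and $\pi_k(w')$ are equal componentwise up to threshold $d$'' is literally the assertion that for every \kprofile $(w_\ell,w_r)$ the numbers $|w|_{(w_\ell,w_r)}$ and $|w'|_{(w_\ell,w_r)}$ are equal up to threshold $d$, which is exactly the definition of $w \kdltteq w'$. This yields both implications simultaneously.

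There is no genuine obstacle here; the single thing worth stating explicitly is that $A_k$ enumerates \emph{all} possible \kprofiles, so that quantifying over the components of $\pi_k(w)$ is the same as quantifying over all \kprofiles $(w_\ell,w_r)$. In particular a profile that occurs in neither $w$ nor $w'$ contributes a $0$ component on both sides, which is harmless, and a profile occurring in exactly one of them is ruled out (for $d\ge 1$, since $0$ is equal up to threshold $d$ only to $0$); the case $d=0$ is degenerate and both sides hold vacuously. So the proof amounts to chaining the three relevant definitions together, and I would write it in one or two lines.
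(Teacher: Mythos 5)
Your proposal is correct and matches the paper, which offers no proof at all and simply states that the fact follows from the definition of $\kdltteq$: both sides count, for each \kprofile in $A_k$, the number of positions carrying that profile, compared up to threshold $d$. Your careful unwinding (including the remark that $A_k$ ranges over all \kprofiles) is exactly the intended justification.
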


A well-known result about commutative images is Parikh's
Theorem~\cite{Parikh:Context-Free-Languages:1966:a}, which states that
if $L$ is context-free (and so in particular if $L$ is
regular), then $\pi(L)$ is semilinear, \emph{i.e.}, Presburger
definable~\cite{Ginsburg&Spanier:Semigroups-Presburger-Formulas-Languages:1966:a}. As
explained in~\cite{bojLTT}, Parikh's Theorem extends without
difficulty to \kimages.

\begin{thm} \label{prop:parikh}
  Let $L$ be a context-free language and let $k \in \nat$. Then $\pi_k(L)$
  is semilinear. Moreover, a Presburger formula for this
  semilinear set can be computed from $L$. 
\end{thm}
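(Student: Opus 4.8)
The plan is to reduce the statement to the classical (effective) Parikh Theorem by recoding each word over the alphabet $A_k$ of \kprofiles. Define a map $\rho_k : A^* \to A_k^*$ by $\rho_k(\varepsilon) = \varepsilon$ and, for a nonempty word $w$ of length $n$, letting $\rho_k(w)$ be the word of length $n$ over $A_k$ whose $i$-th letter is the \kprofile of position $i$ in $w$. By construction, for every $w$ the ordinary commutative image $\pi(\rho_k(w))$ counts, for each \kprofile $(w_\ell,w_r)$, exactly the number of positions of $w$ whose \kprofile is $(w_\ell,w_r)$; in other words $\pi(\rho_k(w)) = \pi_k(w)$, and hence $\pi_k(L) = \pi\bigl(\rho_k(L)\bigr)$. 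So it suffices to show that $\rho_k(L)$ is an \emph{effectively computable} context-free language over $A_k$, and then to apply the effective form of Parikh's Theorem~\cite{Parikh:Context-Free-Languages:1966:a} together with the effective equivalence between semilinear sets and Presburger-definable sets~\cite{Ginsburg&Spanier:Semigroups-Presburger-Formulas-Languages:1966:a}.

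To show that $\rho_k(L)$ is effectively context-free, I would first observe that the graph of $\rho_k$ is a regular language over a product alphabet. Let $R_k \subseteq (A \times A_k)^*$ be the set of words $(a_0,p_0)(a_1,p_1)\cdots(a_{n-1},p_{n-1})$ such that, for every $i$, $p_i$ is the \kprofile of position $i$ in $a_0 a_1 \cdots a_{n-1}$. Whether $p_i$ is the correct profile depends only on the letters $a_j$ with $\max(0,i-k_\ell) \le j < \min(i+k_r,n)$, i.e.\ on a window of size at most $k$ around position $i$, together with the distance from $i$ to the two borders, which matters only when it is below $k_\ell$ or $k_r$. Being defined by a conjunction of finitely many such bounded-window constraints, $R_k$ is regular, and a deterministic automaton for it is computable (it maintains a buffer of the last $k+1$ input letters and the last $k_r$ claimed profiles, checks at each step the profile claimed $k_r$ positions earlier, and checks the profiles of the final positions upon reaching the end of the input). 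Let $h_1 : (A\times A_k)^* \to A^*$ and $h_2 : (A\times A_k)^* \to A_k^*$ be the two projection morphisms. Then $h_1^{-1}(L) \cap R_k$ is exactly the set of codewords of the pairs $(w,\rho_k(w))$ with $w \in L$, so that $\rho_k(L) = h_2\bigl(h_1^{-1}(L) \cap R_k\bigr)$. Since context-free languages are effectively closed under inverse morphism, intersection with a regular language, and morphic image, $\rho_k(L)$ is effectively context-free, which finishes the proof.

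The only delicate point is the verification that $R_k$ is regular and effectively computable, that is, the handling of positions near the two ends of the word, where the left (resp.\ right) component of a profile is a proper prefix (resp.\ suffix) of length less than $k_\ell$ (resp.\ $k_r$); everything else is a routine application of classical closure properties and of the effective Parikh Theorem. This is exactly the observation of~\cite{bojLTT} that Parikh's Theorem ``extends without difficulty'' to \kimages, which I would spell out along these lines for completeness. (Alternatively, one could note that $\rho_k$ is a sliding-window relabelling, hence a rational transduction, and invoke closure of context-free languages under rational transductions; the explicit product construction above only makes effectivity more transparent.)
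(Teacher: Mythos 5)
Your proposal is correct and follows essentially the same route as the paper: the paper's proof introduces exactly the language $L'=\rho_k(L)$ over $A_k$ obtained by relabelling each position with its \kprofile, notes that $\pi_k(L)=\pi(L')$, and asserts that $L'$ is "straightforwardly" context-free before applying Parikh's Theorem. You merely spell out that last assertion via the regular graph $R_k$ and the standard closure properties, which is a legitimate (and welcome) elaboration rather than a different argument.
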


\begin{proof}
  When $k=1$, the proposition is Parikh's Theorem. When $k > 1$,
  consider the language $L'$ over the alphabet $A_k$ of \kprofiles
  such that $w' \in L'$ if and only if there exists $w \in L$ of the
  same length and such that a position in $w'$ is labeled by the
  \kprofile of the same position in $w$. It is straightforward to see that $L'$ is
  context-free and that the \kimage  $\pi_k(L)$ of $L$ is its
  commutative image $\pi(L')$, which
  is semilinear by Parikh's Theorem.
\end{proof}

We can now explain how to decide \ltt-separability. By
Item~\eqref{item:b3} in Theorem~\ref{thm:sepltt}, $L_1,L_2$ are 
\ltt-separable if and only if they are $\ltt[k,d]$-separable for
$k=4(|M|+1)$ (where $M$ is a monoid recognizing both $L_1,  
L_2$) and some natural number $d$. Therefore, whether $L_1,L_2$ are
\ltt-separable can be rephrased as follows: does there exist some
threshold $d$ such that there exist no words $w_1 \in L_1,w_2 \in L_2$
such that $w_1 \kdltteq w_2$? By Fact~\ref{lem:eqredef}, this can be
expressed in terms of \kimages: does there exist a threshold $d$ such
that there exist no vectors of natural numbers
$\bar{x}_1\in\pi_k(L_1),\bar{x}_2\in\pi_k(L_2)$ that are 
equal up to threshold $d$? It follows from Theorem~\ref{prop:parikh}
that the above question can be expressed as a computable Presburger
formula. Decidability of  \ltt-separability then follows from
decidability of Presburger Arithmetic.

\subsection{Bounding the Counting Threshold}
\makeatletter{}%
We now prove the bound on the counting threshold~$d$. This amounts to proving
that Items~\eqref{item:c3} and~\eqref{item:d3} in Theorem~\ref{thm:sepltt}
are equivalent to Item~\eqref{item:b3}. By definition, it is immediate that
$\eqref{item:d3}\Rightarrow\eqref{item:c3}\Rightarrow \eqref{item:b3}$.
It remains to prove $\eqref{item:b3}\Rightarrow\eqref{item:d3}$.

\medskip
We actually prove the contrapositive: if $\lttclos{L_1}{k}{d}$ does
not separate $L_1$ from $L_2$ for the values of $k,d$ defined in the theorem,
then there is no $\ltt[k,d']$-separator for any $d'$.

\smallskip
Assume that $\lttclos{L_1}{k}{d}$ is not a separator. By definition,
this means that there exist $w_1 \in L_1$ and $w_2 \in L_2$ such that
$w_1 \kdltteq w_2$. Set $d'$ some arbitrary natural number. We prove that $d$
is large enough to construct words $w'_1 \in L_1,w'_2 \in L_2$ such
that $w'_1 \ltteq{k}{d'} w'_2$. By definition of $\ltteq{k}{d'}$, this
means that there exists no $\ltt[k,d']$-separator, which is what we
want to prove.

\smallskip
Recall that $n=|M| + 1$ or $n=|\As|+1$. To simplify the writing, we only treat the case when
$n=|\As|+1$.  The other case can be proved similarly. 
Set $m=|A_k|n$, so that $d=m^{|A_k|}$. For $\ell \leq |A_k|$, we 
prove the following property by induction:
\begin{equation*}
  \Ps(\ell)\qquad
  \begin{aligned}
    &\text{If $u_1\in L_1$, $u_2 \in L_2$ are such that $u_1
      \ltteq{k}{m^{\ell}} u_2$, then for all $d'$, if the number of}\\[-1ex]
    &\text{\kprofiles that do not
      occur more than $d'$ times in both $u_1$ and $u_2$ is smaller}\\[-1ex]
    &\text{than $\ell$, then there exist words $u'_1 \in L_1$ and $u'_2 \in
      L_2$ such that $u'_1 \ltteq{k}{d'} u'_2$.}
  \end{aligned}
\end{equation*}

Before proving $\Ps(\ell)$, note that by definition of $d=m^{|A_k|}$ and since $A_k$ is the set of
\emph{all} \kprofiles, $w_1,w_2$ verify the premise of
$\Ps(|A_k|)$. Therefore, $ \Ps(|A_k|)$ entails that there exist words
$w'_1,w'_2$ such that the desired property $w'_1 \ltteq{k}{d'} w'_2$
holds for all $d'$. It remains to prove $\Ps(\ell)$ for $\ell \leq |A_k|$, which we do by
induction on $\ell$.

The main idea for the inductive step is that by choice of $d$,
\kprofiles that occur many times in $w_1,w_2$ can be pumped in more
than $d'$ occurrences. This is summarized in the following lemma, that
we prove below.

\begin{lem}
  \label{lem:thresholdpump}
  Let $d'\in\nat$. Let $h\ge1$ be a natural number, $h' \geq n|A_k|h$ and $w
  \in L_1$ (resp.~$w \in L_2$). One can construct a word $w' \in L_1$
  (resp.~$w' \in L_2$) such that $w' \ltteq{k}{h} w$ and every
  \kprofile that occurs $h'$ or more times in $w$ occurs $d'$ or more
  times in $w'$.
\end{lem}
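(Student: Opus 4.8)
The plan is to obtain $w'$ from $w$ by repeatedly pumping well-chosen loops: for each \kprofile $P$ with $|w|_P \ge h'$ I will insert many copies of a suitable infix of $w$, pushing the number of occurrences of $P$ above $d'$ while leaving every \kprofile count unchanged up to threshold $h$. Since there are only finitely many such $P$, I would treat them one after another. Two observations make this sound: pumping a loop only \emph{increases} \kprofile counts, so the hypothesis ``$|\cdot|_P \ge h'$'' is preserved along the iteration, as is any count already raised above $d'$; and $\ltteq{k}{h}$ is an equivalence relation, so the intermediate relations to $w$ chain together. Throughout I treat the case $n = |\As|+1$; the case $n = |M|+1$ is entirely analogous, pigeonholing on the values $\alpha(w[0,x])$ and using the identity $t\cdot\alpha(w[x,y]) = t$ where a loop at a state is used below.

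So fix $P$ with $|w|_P \ge h'$, say $w \in L_i$, and an accepting run $\rho$ of $\As$ on $w$. First note that any two distinct positions of $w$ with equal \kprofile must both have left and right components of full length $k_\ell$ and $k_r$; since $|w|_P \ge h' \ge 2$, every position carrying $P$ therefore lies in the ``middle'' of $w$, so that $w[x-k_\ell,x]$ and $w[x,x+k_r]$ are defined for it. As $\As$ has at most $n-1$ states and $|w|_P \ge h' \ge n|A_k|h$, pigeonhole together with a ceiling estimate yields a state $q$ and positions $y_1 < \dots < y_M$, all carrying $P$ and all visited by $\rho$ in state $q$, with $M \ge \lceil h'/(n-1)\rceil \ge |A_k|h+1$. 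Call a \kprofile \emph{rare} if it occurs fewer than $h$ times in $w$; then fewer than $|A_k|h$ positions of $w$ carry a rare \kprofile. As the $M-1 \ge |A_k|h$ open intervals $(y_i,y_{i+1})$ are pairwise disjoint, some interval $(y_i,y_{i+1})$ contains no position carrying a rare \kprofile. Put $A = w[0,y_i]$, $B = w[y_i,y_{i+1}]$, $C = w[y_{i+1},|w|]$ and, for a parameter $N$ to be fixed later, $w' = AB^NC$. Since $\rho$ reads $B$ along a loop at $q$, the word $w'$ is accepted, i.e.\ $w' \in L_i$.

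The key step is the \kprofile count of $w'$. Because $y_i$ and $y_{i+1}$ carry the same \kprofile $P$, the boundary equalities $w[y_i-k_\ell,y_i] = w[y_{i+1}-k_\ell,y_{i+1}]$ and $w[y_i,y_i+k_r] = w[y_{i+1},y_{i+1}+k_r]$ hold; feeding these into the seams of the $N$ copies of $B$, a direct (if slightly tedious) check shows that every position of the $A$-part and of the $C$-part of $w'$ keeps the \kprofile it had in $w$, while a position at offset $s$ inside any copy of $B$ has exactly the \kprofile of position $y_i+s$ of $w$. Consequently $|w'|_Q = |w|_Q + (N-1)c_Q$ for every \kprofile $Q$, where $c_Q$ is the number of positions of $[y_i,y_{i+1})$ carrying $Q$. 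If $c_Q > 0$, then $Q$ occurs in $[y_i,y_{i+1})$, either at $y_i$ (where $Q = P$ and $|w|_P \ge h' \ge h$) or in the rare-free interval $(y_i,y_{i+1})$; in both cases $|w|_Q \ge h$, hence $|w'|_Q \ge h$. If $c_Q = 0$, then $|w'|_Q = |w|_Q$. In either case the counts of $Q$ in $w$ and in $w'$ agree up to threshold $h$, so $w' \ltteq{k}{h} w$. Finally $c_P \ge 1$ (position $y_i$), so $|w'|_P \ge |w|_P + (N-1) \ge N-1$; taking $N \ge d'+1$ gives $|w'|_P \ge d'$. Iterating over all \kprofiles occurring at least $h'$ times in $w$, and using transitivity of $\ltteq{k}{h}$ together with the monotonicity of counts under pumping, produces the desired $w'$.

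The step I expect to cost the most care is the count identity $|w'|_Q = |w|_Q + (N-1)c_Q$: one must check that inserting copies of $B$ neither creates nor destroys occurrences of \kprofiles outside $[y_i,y_{i+1})$ and replicates exactly those inside it, which hinges entirely on the periodicity consequences of $y_i$ and $y_{i+1}$ sharing $P$ (the two boundary equalities, applied at every seam of $B^N$, including where the span reaches beyond one period). A lighter but load-bearing point is the arithmetic: $h' \ge n|A_k|h$ is precisely what makes pigeonhole plus a ceiling estimate give more than $|A_k|h$ positions sharing both a \kprofile and a state, which is what forces a rare-free gap between two consecutive such positions.
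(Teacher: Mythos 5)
Your proof is correct and follows essentially the same route as the paper's: for each \kprofile occurring at least $h'$ times, use the factor $n$ to pigeonhole two occurrences onto the same state and the factor $|A_k|h$ to ensure the pumped segment contains no position whose \kprofile is rare, so that pumping boosts the target profile past $d'$ while preserving $\ltteq{k}{h}$ and membership. The only differences are cosmetic — you apply the two pigeonhole steps in the opposite order and spell out the seam/periodicity analysis and the count identity that the paper leaves implicit.
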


Assume for a moment that Lemma~\ref{lem:thresholdpump} holds and let
us deduce that $\Ps(\ell)$ holds, by induction on $\ell$.
If $\ell=0$, the result is obvious since by definition all \kprofiles
in $u_1,u_2$ occur more than $d'$ times in both words and therefore
$u_1 \ltteq{k}{d'} u_2$.

Assume now that $\ell > 0$. If $u_1 \ltteq{k}{d'} u_2$, then it suffices
to take $u'_1=u_1$ and $u'_2=u_2$. Otherwise, since $u_1
\ltteq{k}{m^{\ell}} u_2$, this means that there exists at least
one \kprofile $(w_\ell,w_r)$ that occurs more than $m^{\ell}$
times in both $u_1$ and $u_2$ but less than $d'$ times in at least one
of the two words. By applying Lemma~\ref{lem:thresholdpump} to both
$u_1$ and $u_2$ with $h=m^{\ell-1}$ and $h'=m^{\ell}$, we get
two words $u''_1\in L_1$ and $u''_2 \in L_2$ such that $u''_1
\ltteq{k}{h} u''_2$. Moreover, $(w_\ell,w_r)$ now occurs more than
$d'$ times in both $u''_1$ and $u''_2$. Therefore, the number of
\kprofiles that do not occur more than $d'$ times in both $u''_1$ and
$u''_2$ is smaller than $\ell-1$. Hence, we can apply the induction
hypothesis to $u''_1$ and $u''_2$ which yields the desired
$u'_1,u'_2$.

\medskip\noindent
To conclude the proof, it remains to show
Lemma~\ref{lem:thresholdpump}.

\begin{proof}[Proof of Lemma~\ref{lem:thresholdpump}]
  By symmetry, we may assume that $w \in L_1$.
  The intuition is that as soon as a \kprofile occurs more than
  $|\As|+1$ times (which is the case if it occurs more than $n$ times),
  there exist two occurrences of this \kprofile that are labeled with
  the same state in the run of $\As$ on $w$. Therefore, pumping
  can be used on $w$ to generate $d'$ copies of the \kprofile
  without affecting membership in $L_1$. The issue with this argument
  is that in order to enforce $w' \ltteq{k}{h} w$, we need to be
  careful and avoid duplicating \kprofiles that occur less than $h$
  times in $w$. This is why we actually need a much higher constant
  than $n$ in order to achieve the~pumping.

  Let $(w_\ell,w_r)$ be some \kprofile that occurs more than $h'$
  times in $w$ (if there is none, it suffices to take $w'=w$). We
  explain how $w$ can be pumped in $w'$ that contains more than $d'$
  copies of $(w_\ell,w_r)$ while enforcing $w' \ltteq{k}{h} w$. The
  construction can then be repeated to treat all \kprofiles occurring
  more than $h'$ times in $w$, in order to get the desired $w'$.

  Let $x_1 < \dots < x_{h'}$ be $h'$ positions where $(w_\ell,w_r)$
  occurs. Observe that there are at most $|A_k|(h-1)$ positions in $w$
  such that the \kprofile at this position occurs strictly less than
  $h$ times in $w$. By choice of $h'\geq n|A_k|h$, a simple application of the
  pigeonhole principle yields that there exist at least $n$
  consecutive positions in the list, say $x_i,\dots,x_{i+(n-1)}$, such
  that no intermediate position between $x_i$ and $x_{i+(n-1)}$ has a
  \kprofile occurring less than $h$ times in $w$. By choice of $n$,
  there are two positions among $x_i, \dots, x_{i+(n-1)}$ that are
  labeled with the same state in the run of $\As$ on
  $w$. Therefore, the corresponding infix can be pumped to generate
  $d'$ copies of $(w_\ell,w_r)$ without affecting membership in
  $L_1$. Moreover, by choice of the positions $x_i,\dots,x_{i+(n-1)}$,
  the pumping did not duplicate \kprofiles occurring less than $h$
  times in $w$. Therefore, the resulting word $w'$ verifies
  $w'\ltteq{k}{h} w$ and $w' \in L_1$.
\end{proof}

\subsection{Optimality of the Counting Threshold}
\label{sec:append-sect-ltt}

Observe that the bound of Theorem~\ref{thm:sepltt} for the
counting threshold is exponential in the size of $|A_k|$ (which is
itself exponential in the size of the monoid). A relevant question is
to know whether this bound can be improved.

Our proof completely separates the bounding of $k$ and $d$. We first
provide a bound on $k$ in Theorem~\ref{thm:seplttd}. Then, we bound
the threshold by essentially viewing our words as words over the
alphabet $A_k$ of \kprofiles. This technique ignores properties of
\kprofiles. In particular, the \kprofiles of adjacent positions are
strongly related, a fact that our proof does not exploit.

In this subsection, we prove that getting a better bound on the
counting threshold would require taking these additional properties
into account. More precisely, we prove that if $k=1$ (which means that the
\kprofile of a position is just its label), we can construct separable
languages for which the separator requires a counting threshold that
is exponential in $|A|$.

For convenience, we assume the alphabet $A$ to be of even size and
write $A=\{a_1,\dots,a_{2m}\}$. Consider the following languages
\[
\begin{array}{lcl}
  L_1 & = & a_1\cdot(a_2a_3a_3)^*(a_4a_5a_5)^* \cdots (a_{2m-2}a_{2m-1}a_{2m-1})^*\cdot(a_{2m}a_{2m}a_{2m})^*,
  \\
  L_2 & = & (a_1a_2a_2)^*(a_3a_4a_4)^* \cdots (a_{2m-1}a_{2m}a_{2m})^*.
\end{array}
\]

\begin{lem} \label{lem:counterexample}
  $L_1,L_2$ are $\ltt[1,d]$-separable for some $d$, but not 
  $\ltt[1,2^{2m-1}]$-separable.
\end{lem}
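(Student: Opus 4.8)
The plan is to reduce $\ltt[1,d]$-separability to an arithmetic condition on commutative images. Since $k=1$, the $1$-profile of a position is just its label, so $\pi_1$ coincides with the commutative image $\pi$, and by Fact~\ref{lem:eqredef} we have $w\ltteq{1}{d}w'$ iff $\pi(w)$ and $\pi(w')$ agree coordinatewise up to threshold $d$. Because $\lttclos{L_1}{1}{d}$ is the smallest $\ltt[1,d]$ language containing $L_1$, the pair $(L_1,L_2)$ is $\ltt[1,d]$-separable iff $\lttclos{L_1}{1}{d}\cap L_2=\varnothing$, i.e.\ iff there are no $w_1\in L_1,\ w_2\in L_2$ with $w_1\ltteq{1}{d}w_2$. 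I would then record the two commutative images: a vector lies in $\pi(L_1)$ iff, for some $n_1,\dots,n_m\ge 0$, its $a_1$-coordinate equals $1$, its $a_{2i}$- and $a_{2i+1}$-coordinates equal $n_i$ and $2n_i$ for $1\le i\le m-1$, and its $a_{2m}$-coordinate equals $3n_m$; a vector lies in $\pi(L_2)$ iff, for some $p_1,\dots,p_m\ge 0$, its $a_{2i-1}$- and $a_{2i}$-coordinates equal $p_i$ and $2p_i$ for $1\le i\le m$.

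For the negative part I write $d_0=2^{2m-1}$ and exhibit a single vector common to the coordinatewise $d_0$-truncations (cap each coordinate at $d_0$) of $\pi(L_1)$ and $\pi(L_2)$. Take $p_1=1$ so the $a_1,a_2$-coordinates read $1,2$; take $n_1=2$ so the $a_2,a_3$-coordinates read $2,4$; take $p_2=4$ so the $a_3,a_4$-coordinates read $4,8$; and so on, so that along positions $a_1,\dots,a_{2m-1}$ the common value at $a_j$ is $2^{j-1}$, ending with $p_m=2^{2m-2}$. On the last coordinate $a_{2m}$ the value coming from $L_2$ is $2p_m=2^{2m-1}=d_0$, which caps to $d_0$; the value coming from $L_1$ is $3n_m$, and choosing $n_m\ge d_0$ makes it cap to $d_0$ as well. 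Thus the two $d_0$-truncated vectors coincide, giving $w_1\in L_1$ and $w_2\in L_2$ with $w_1\ltteq{1}{d_0}w_2$; hence $(L_1,L_2)$ is not $\ltt[1,d_0]$-separable.

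For the positive part I take $d=2^{2m-1}+1$ and show the $d$-truncations of $\pi(L_1)$ and $\pi(L_2)$ are disjoint. Every value $1,2,4,\dots,2^{2m-1}$ occurring along the chain above is now strictly below $d$, so no cap can absorb a mismatch: a vector common to the two $d$-truncations would rigidly force $p_1=1,\ n_1=2,\ p_2=4,\ \dots,\ p_m=2^{2m-2}$, and finally the $a_{2m}$-coordinate would require $3n_m=2^{2m-1}$, impossible since $3\nmid 2^{2m-1}$. Hence no common truncated vector exists, so $\lttclos{L_1}{1}{d}\cap L_2=\varnothing$ and this $\ltt[1,d]$ language separates $L_1$ from $L_2$.

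The conceptual content lies entirely in this ``doubling chain'' of letter counts. The one delicate point is the truncation bookkeeping: one must check that at threshold $2^{2m-1}$ exactly one coordinate is allowed to collapse (producing the common vector) whereas at threshold $2^{2m-1}+1$ the whole chain is rigid, so that the non-divisibility of $2^{2m-1}$ by $3$ yields the contradiction. I expect that bookkeeping, rather than any genuine difficulty, to be the main thing to get right.
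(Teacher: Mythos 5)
Your proof is correct and follows essentially the same approach as the paper: the non-separability witnesses are the same doubling-chain words, and your rigidity argument (forcing $3n_m=2^{2m-1}$, impossible) is exactly the paper's argument that its separator is disjoint from $L_1$. The only cosmetic difference is that the paper exhibits an explicit $\ltt[1,2^{2m-1}+1]$ separator as a Boolean combination of threshold conditions, whereas you use the canonical closure $\lttclos{L_1}{1}{d}$ and verify disjointness via capped commutative images.
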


\begin{proof}
  We prove that $L_1,L_2$ are $\ltt[1,d]$-separable for $d =
  2^{2m-1}+1$. Consider the following language $L$. A word $w$ belongs
  to $L$ if and only if for all odd $i$ either $w$ contains at least
  $2^{i-1}+1$ copies of $a_i$, or the number of copies of $a_{i+1}$ in
  $w$ is exactly twice the number of copies of $a_i$ in $w$.

  The maximal threshold up to which we have to count occurrences of
  letters for checking membership in $L$ of a word $w$ is the case
  where $w$ has exactly $2^{2m-2}$ occurrences of $a_{2m-1}$, and
  therefore exactly $2^{2m-1}$ occurrences of $a_{2m}$. This shows
  that $L \in \ltt[1,2^{2m-1}+1]$.

  We prove that $L$ is a separator. By definition, we have $L
  \supseteq L_2$. We prove that $L \cap L_1 = \varnothing$. By
  contradiction, assume that $w \in L \cap L_1$. Since $w \in L_1$, it
  contains only one copy of $a_1$. Then, since $w \in L$, it must
  contain two copies of $a_2$. By iterating the argument we get that
  $w$ must contain $2^{2m-1}$ copies of $a_{2m}$, which is impossible
  since this number must be multiple of $3$, by definition of $L_1$. It
  follows that $L \in \ltt[1,2^{2m-1}+1]$ is a separator.

  It remains to prove that $L_1,L_2$ are not
  $\ltt[1,2^{2m-1}]$-separable. Consider the words 
  \[
  \begin{array}{lcl}
    w_1 = a^{}_1(a_2a_3^2)^2 \cdots (a_{2m-2}^{}a_{2m-1}^2)^{2^{2m-3}} (a_{2m})^{3
      \times 2^{2m-1}} & \in & L_1, \\
    w_2 = (a^{}_1a_2^2)(a^{}_3a_4^2)^4 \cdots (a^{}_{2m-1}a_{2m}^2)^{2^{2m-2}}  & \in & L_2.
  \end{array}
  \]
  It is clear that $w_1 \ltteq{1}{2^{2m-1}} w_2$. Therefore $L_1,L_2$
  are not $\ltt[1,2^{2m-1}]$-separable.
\end{proof}

\section{Complexity: upper and lower bounds}
\label{sec:comp}
\makeatletter{}%
In this section, we present lower and upper complexity bounds for
deciding \lt- and \ltt-separability.

Both the lower and upper bounds rely on the pattern criteria of
Theorems~\ref{thm:seplt} and~\ref{thm:sepltt}. We are able to prove
that starting from NFAs or DFAs recognizing the input languages,
deciding separability can be achieved in \textsc{co-Nexptime} for \lt
and in \textsc{2-Expspace} for \ltt. Moreover, we prove a
\textsc{co-Np} lower bound for both problems. 

\subsection{Upper Bounds}

\makeatletter{}%
In this subsection, we prove the complexity upper bounds %
for the
separation problem for both \lt and
\ltt. The corresponding algorithms rely on the patterns criteria of
Theorems~\ref{thm:seplt} and~\ref{thm:sepltt}. We prove the two
following results.

\begin{prop} \label{prop:ult}
 Deciding whether two languages accepted by some NFA
 are \lt-separable can be achieved in {\sc co-Nexptime}.
\end{prop}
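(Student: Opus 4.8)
By Theorem~\ref{thm:seplt}, the pair $(L_1,L_2)$ given by an NFA $\As$ with $L_i = L(\As,I_i,F_i)$ is \emph{not} \lt-separable if and only if some pair $\bigl((p_1,q_1),(p_2,q_2)\bigr)\in (I_1\times F_1)\times(I_2\times F_2)$ has a common $1$-pattern. So the \textsc{co-Nexptime} algorithm is: guess such a pair of pairs of states, guess a witness for the common $1$-pattern, and verify it. Since \textsc{co-Nexptime} is closed under the polynomial reductions we will use, it suffices to show that the complementary problem --- ``does $\As$ admit a common $1$-pattern for some choice of state 4-tuple in $(I_1\times F_1)\times(I_2\times F_2)$?'' --- is in \textsc{Nexptime}.

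\textbf{Bounding the witness.} The key step is to argue that if a common $1$-pattern exists, then one exists that is witnessed by short objects. Recall that a common $1$-pattern for $\bigl((p_1,q_1),(p_2,q_2)\bigr)$ is either a single word $w$ with runs $p_1\xrightarrow{w}q_1$ and $p_2\xrightarrow{w}q_2$, or a pair of $(p_i,q_i)$-compatible \pfsdecomps sharing the same \emph{set} of triples $(v_i,u_i,v_{i+1})$ (the prefix block $\frp$, suffix block $\frs$, and the set of interior blocks all coincide). In the word case, a standard automaton argument bounds $|w|$ by $|\As|^2$ (product construction), which is polynomial. In the block case, the plan is: (i) a common $1$-pattern exists iff one exists in which every $v_i$ labels a \emph{simple} loop (of length $\le |\As|$) and every $u_i$ labels a path of length $\le |\As|$ between consecutive loop states, so each block has size $O(|\As|)$; (ii) the number of distinct blocks in the pattern can be bounded --- here is where exponentiality enters --- since each block is a triple of words over $A$ of length $O(|\As|)$, there are $|A|^{O(|\As|)}$ candidate blocks, so the pattern's block set has size at most $|A|^{O(|\As|)}$, i.e.\ exponential; (iii) in each of the two compatible decompositions $z_j = u_0 v_1 u_1 \cdots v_n u_n$, we may assume $n$ is bounded by (number of distinct blocks)$\times$(something polynomial), because once a block is realized, realizing it again is free for a $1$-pattern --- so we can contract any repetition of a block along a decomposition down to a single occurrence, keeping $n \le |A|^{O(|\As|)}$. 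Hence the whole witness (two decompositions plus the shared pattern data) has size $|A|^{O(|\As|)}$, which is exponential in the input size.

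\textbf{The verification step.} Given guessed data of exponential size --- the state 4-tuple, the shared prefix/suffix blocks, the shared set of interior blocks, and two sequences $u_0 v_1 u_1 \cdots$ annotated with the run of $\As$ realizing each $u_i$ as a path and each $v_i$ as a loop --- one checks in time polynomial in the witness size (hence exponential in the input) that: the runs are legitimate transitions of $\As$; the endpoints match $(p_1,q_1)$ respectively $(p_2,q_2)$ with $p_i\in I_i$, $q_i\in F_i$; each $v_i$ indeed returns to its starting state; and the two decompositions carry the same prefix block $\frp$, the same suffix block $\frs$, and the same set of interior triples. This is a deterministic polynomial-time (in the witness) check, so the whole procedure is \textsc{Nexptime}; complementing gives the claim.

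\textbf{Main obstacle.} The delicate point is step (iii) of the witness-bounding argument: showing that the number of \emph{positions} $n$ in each compatible \pfsdecomp can be bounded. One must verify that contracting a decomposition --- deleting all but one occurrence of each repeated interior block, and similarly shortening runs of consecutive $u_i$'s that contribute no new block --- preserves $(p_i,q_i)$-compatibility and preserves the pattern (the set of triples, and the prefix/suffix blocks). Compatibility is easy since each deleted segment is a loop at a state already on the run; the subtlety is ensuring the \emph{prefix} block $\frp = (u_0,v_1)$ and \emph{suffix} block $\frs = (v_n,u_n)$ are not destroyed by contraction at the ends, which is handled by never contracting the first and last block occurrences. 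Once this bookkeeping is done carefully, the exponential bound on the total witness size follows and the complexity upper bound is immediate.
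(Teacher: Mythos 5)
There is a genuine gap, and it sits exactly where you flag the ``main obstacle'' as being elsewhere: the problem is step~(i), not step~(iii). In a common $1$-pattern the prefix block, suffix block and the \emph{set} of interior blocks must be literally the same triples of words in both decompositions. Hence a word $v$ occurring as a loop component of some block must label a loop \emph{simultaneously} at every state at which that block (or any block sharing that component) is anchored, in the run on $w_1$ \emph{and} in the run on $w_2$; likewise $u$ must simultaneously connect several pairs of states. You cannot replace such a $v$ by a simple loop of length at most $|\As|$: the shortest word labelling a loop at each of $m$ prescribed states is a shortest word in the intersection of $m$ automata, which can be exponential in $m$ (and $m$ can be as large as $|Q|$). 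So the honest length bound on a block component is $|\As|^{O(|Q|)}$, not $O(|\As|)$, and your step~(ii) count of candidate blocks then becomes doubly exponential, destroying the \textsc{Nexptime} bound. (Step~(iii) also has an unaddressed wrinkle: excising the occurrence of block $i$ changes the neighbouring triple from $(v_{i+1},u_{i+1},v_{i+2})$ to $(v_i,u_{i+1},v_{i+2})$ unless $v_i=v_{i+1}$, and one must also guarantee that every deleted block still occurs elsewhere so that the shared \emph{set} is preserved; but this could be repaired.)

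The repair is essentially the paper's construction: abstract each block not by the words it contains but by the set of pairs of states it realizes, together with the requirement that its left and right components be loops at the relevant states (the ``synchronizable'' sets $T\subseteq Q^2$ of Section~\ref{sec:reduction-case-k=1}). There are only $2^{O(|Q|^2)}$ such abstractions, which restores a single-exponential bound on the number of distinct blocks, at the price of letting each concrete block be exponentially long --- still fine for an exponential witness. The paper packages this as an exponential-size NFA $\widetilde{\As}$ over the alphabet of synchronizable sets (Lemmas~\ref{lem:construc} and~\ref{lem:reducor}), and then decides $\lt[1]$-separability of $\widetilde{\As}$ in \textsc{co-Np} via Parikh images and existential Presburger arithmetic (Lemma~\ref{lem:lt1}), rather than by explicitly guessing and contracting decompositions. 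Your guess-and-check outline could in principle be carried out on top of that abstraction, but without it the witness-size bound does not go through.
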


\begin{prop} \label{prop:ultt}
  Deciding whether two languages accepted by some NFA
  are \ltt-separable can be achieved in {\sc 2-Expspace}.
\end{prop}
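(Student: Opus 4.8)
The plan is to use the pattern criterion of Theorem~\ref{thm:sepltt}, more precisely Item~\eqref{item:f3}: two languages $L_1, L_2$ accepted by an NFA $\As$ are \ltt-separable if and only if there exists some threshold $d'$ such that no pair in $(I_1\times F_1)\times(I_2\times F_2)$ has a common $d'$-pattern. By the bound in the same theorem, it suffices to test $d' = d = (|A_k|n)^{|A_k|}$ with $k = 4(|M|+1)$, where $M$ is a monoid recognizing both languages. First I would bound the sizes of the relevant parameters. Starting from an NFA $\As$ of size $n_0$, the transition monoid has size $|M| \leq 2^{n_0^2}$, so $k = 4(|M|+1)$ is singly exponential in $n_0$; then $|A_k| = |A|^{O(k)}$ is doubly exponential in $n_0$; and $d = (|A_k|n)^{|A_k|}$, being an exponential of $|A_k|$, is triply exponential in $n_0$. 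However, what the algorithm actually needs to store are \emph{counters} bounded by $d$, which take $\log d = |A_k|\cdot\log(|A_k|n)$ bits, i.e.\ doubly exponential space; and it needs to manipulate \kprofiles, each of size $O(k)$, again singly exponential, with $|A_k|$-indexed vectors of such counters taking $|A_k|\cdot\log d$ space, which is doubly exponential space overall. That is the budget that must be respected.

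The core step is to decide, for the fixed threshold $d$, whether some pair in $(I_1\times F_1)\times(I_2\times F_2)$ has a common $d$-pattern, within 2-Expspace. I would reformulate this using \kimages, exactly as in the Presburger-based argument of Section~\ref{sec:ltt}: by Fact~\ref{lem:eqredef}, the existence of $w_1\in L_1$, $w_2\in L_2$ with $w_1\kdltteq w_2$ is equivalent to the existence of $\bar x_1\in\pi_k(L_1)$, $\bar x_2\in\pi_k(L_2)$ that agree componentwise up to threshold $d$ (and, since \kprofiles determine prefixes/suffixes of length $k-1$, this already encodes the boundary conditions). By Theorem~\ref{prop:parikh}, $\pi_k(L_i)$ is a semilinear set whose description is computable. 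Concretely one builds the ``profile automaton'' $\As'$ over the alphabet $A_k$: this is an NFA of size polynomial in $|\As|\cdot|A_k|$, hence doubly exponential in $n_0$, recognizing the language $L_i'$ whose Parikh image is $\pi_k(L_i)$. Non-\ltt-separability then amounts to the non-emptiness of the intersection, after a ``truncate at $d$'' abstraction on each coordinate, of two Parikh images of doubly-exponential-size NFAs. This can be decided in space polynomial in the size of $\As'$ times $\log d$ — e.g.\ by guessing the truncated profile-count vector coordinate by coordinate and checking reachability with saturating counters in each automaton, or equivalently by an integer-programming/Presburger satisfiability check whose formula has size polynomial in $|\As'|$ and whose numeric constants are bounded by $d$; decision procedures for existential Presburger arithmetic run in space polynomial in the formula length and the bit-length of the constants, which here is $\mathrm{poly}(\text{doubly exponential}) = $ doubly exponential space. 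Putting the bounds together yields a 2-Expspace decision procedure.

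The main obstacle will be to carry out the space accounting carefully so that nothing tips over from doubly to triply exponential: the threshold $d$ itself is triply exponential, so the algorithm must never enumerate values up to $d$ or store $d$ in unary — it must only ever store numbers up to $d$ in binary (doubly exponential many bits) and must work with the \emph{truncated} profile-count vectors, of which there are at most $(d+1)^{|A_k|}$, i.e.\ again only doubly-exponentially many bits to name one. A secondary subtlety is justifying that testing non-emptiness of the truncated intersection of two Parikh images can be done in space polynomial in the NFA size and $\log d$ rather than needing to materialize a semilinear representation (whose period vectors could be large); this is handled by reducing to reachability in a product construction with bounded saturating counters, or by invoking known space bounds for existential Presburger arithmetic, rather than by explicitly computing the semilinear sets. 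The remaining implications of Theorem~\ref{thm:sepltt} used here — equivalence of \eqref{item:a3} and \eqref{item:f3}, and the bound $d=(|A_k|n)^{|A_k|}$ — are assumed, so the argument is purely a complexity-bookkeeping refinement of the decidability proof already given.
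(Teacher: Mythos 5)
Your proposal is correct, but it takes a genuinely different route from the paper. The paper reduces to the case $k=1$ \emph{without} going through the alphabet $A_k$ of \kprofiles: it builds an NFA $\widetilde{\As}$ over an alphabet of (prefix/suffix/weakly) synchronizable sets of state pairs, which abstracts blocks of \pfsdecomps and is only \emph{singly} exponential in $|\As|$ (Lemmas~\ref{lem:construc} and~\ref{lem:reducor}); it then handles the unbounded threshold via Item~\eqref{item:f3}, expressing ``for all $d$ there exist Parikh vectors equal up to threshold $d$'' as a Presburger formula with one quantifier alternation, decided in {\sc Expspace} by~\cite{ReddyLoveland:1978}, giving {\sc 2-Expspace} on the exponential-size $\widetilde{\As}$. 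You instead invoke Item~\eqref{item:c3}, i.e.\ the explicit bound $d=(|A_k|n)^{|A_k|}$, build the profile automaton over $A_k$ --- exactly the doubly exponential construction the paper mentions and deliberately avoids in Section~\ref{sec:reduction-case-k=1} --- and compensate by needing only an \emph{existential} Presburger (or saturating-counter reachability) check, which is in {\sc Np}, hence polynomial space, in the doubly exponential input. Your space accounting is the crux and it holds: $\log d$ and $|A_k|\cdot\log d$ are doubly exponential, so nothing tips into the third exponential as long as $d$ and the truncated count vectors are kept in binary. The trade-off is instructive: the paper's route needs only the bound on $k$ plus decidability of the ``exists $d$'' formulation (it never uses the quantitative bound on $d$ for complexity purposes), while your route leans on the harder quantitative part of Theorem~\ref{thm:sepltt} but dispenses with both the synchronizable-set reduction and the quantifier-alternation decision procedure. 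Both land at {\sc 2-Expspace}; yours in fact gives {\sc 2-NExptime} for the fixed-$d$ test, which is contained in {\sc 2-Expspace}.
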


Both algorithms work by reducing the problems to the special case
of $k=1$, \emph{i.e.}, whether there exists an \lt (resp.~\ltt) separator
that considers only \profiles{1}. The reduction is identical in both
cases and the difference resides in proving that it is correct. These
proofs rely on Item~\eqref{item:e2} in Theorem~\ref{thm:seplt} and
Item~\eqref{item:f3} in Theorem~\ref{thm:sepltt} respectively. The
computations involved in the reduction can be done in {\sc Exptime}
and the new NFA it outputs is of size exponential in the input NFA.
Therefore, it then suffices to give algorithms for the special case
$k=1$, which run in {\sc co-Np} for \lt and {\sc Expspace} for \ltt.

Note that a reduction to $k=1$ could also be done by considering
the bound $k$ on the size of profiles in Theorems~\ref{thm:seplt}
and~\ref{thm:sepltt}. Indeed, once $k$ is fixed it suffices to modify
the input NFA to work on the alphabet of \kprofiles to be reduced to
the case $k=1$. However, this technique might yield an NFA that
is doubly exponential in the size of the input NFA.

We first present and prove the reduction to the case $k=1$. Then we
explain how to decide both problems in this special case.

\subsubsection{Reduction to the case $k=1$}
\label{sec:reduction-case-k=1}

Let $\As = (Q,A,\delta)$ be an NFA. It follows from Theorem~\ref{thm:seplt}
(resp.~Theorem~\ref{thm:sepltt}) that to determine whether $L(\As,I_1,F_1)$
and $L(\As,I_2,F_2)$ are not \lt-separable (resp.~not \ltt-separable), it
suffices to verify whether there exists a pair in $(I_1\times F_1)
\times(I_2\times F_2)$ having a common $1$-pattern (resp.~whether there exists
$d$ such that some pair in $(I_1\times F_1) \times(I_2\times F_2)$ has a
common $d$-pattern). This requires verifying whether there exist a pattern \Ps
and \pfsdecomps compatible with such a pair.

Observe that a \pfsdecomp can be viewed as a word over the alphabet of
blocks. The main idea behind the reduction is to construct a new NFA
$\widetilde{\As}$ recognizing words over the
alphabet of blocks that represent \pfsdecomps that, for some \Ps, are compatible
with pairs of states of $I_1\times F_1$ and of $I_2\times F_2$ in
$\As$. There are two main issues with
this outline. First, the alphabet of blocks is infinite. Second, a
notion of compatibility has to be enforced between consecutive blocks
in a word, \emph{i.e.}, $(v,u,v')$ needs to be followed by $(v',u',v'')$ for
some $u',v''$. Again, this compatibility cannot be simply encoded in
the states since there are infinitely many words.

Both issues are solved using a similar argument. Recall that we are
only interested in \pfsdecomps that are compatible with pairs of states of $I_1\times F_1$  and of $I_2\times F_2$ in~$\As$. Observe that for a block $(v,u,v')$ to appear in such a
decomposition, there need to exist states $q,q'$ with loops at $q,q'$
labeled by $v,v'$ respectively and a path from $q$ to $q'$ labeled by
$u$. We abstract a block by the set of pairs of states verifying this
property. Since there are finitely many such sets, this yields a
finite alphabet. The same argument can be used for compatibility: all
words $v$ that need to be considered for compatibility need to label a
loop at some state $q$. We abstract $v$ by the set of states having
such a loop labeled~$v$. The information can then be encoded in the
states of $\widetilde{\As}$.

\smallskip We now describe the formal construction. Let $R \subseteq
Q$. We say that $R$ is \emph{synchronizable} if there
exists a nonempty word $v \in A^*$ such that for all $q \in R$ there exists a loop
at $q$ labeled by $v$. We will encode compatibility in
synchronizable sets of states.

If $T \subseteq Q^2$, we denote by $\ell(T)$ (resp.\ by~$r(T)$) the set of
states that are left (resp.~right) members of pairs in $T$. We say that $T$ is
\emph{synchronizable} if {\bf a)} there exists $u \in A^*$ such that for all
$(q,q') \in T$ there exists a run from $q$ to $q'$ labeled by $u$, {\bf b)}
$\ell(T)$ is synchronizable and {\bf c)} $r(T)$ is synchronizable. In order to
deal with prefixes and suffixes, we generalize the notion to these limit
cases. We say that $T$ is \emph{prefix synchronizable} (resp.~\emph{suffix
  synchronizable}) if {\bf a)} and {\bf c)} (resp.~{\bf a)} and {\bf b)})
hold. Finally, we say that $T$ is \emph{weakly synchronizable} if %
{\bf a)} holds. We set $B_w,B_p,B_i$ and $B_s$ as the sets of weakly
synchronizable, prefix synchronizable, synchronizable and suffix
synchronizable sets of pairs of states, respectively.

We can now define $\widetilde{\As}$. Set the alphabet $B$ as the disjoint
union of $B_w,B_p,B_i$ and $B_s$. The state set $\widetilde{Q}$ is defined as follows.
\[
\widetilde{Q} = \{(r,R)\ |\ r \in R \subseteq Q,\ R \text{ is
  synchronizable}, r \in Q\} \cup I_1 \cup F_1\cup I_2 \cup F_2.
\] 
We now define the transitions. For all $b \in B_w$, we add a $b$-transition
from $q_k \in I_k$ to $r_k \in F_k$ if $(q_k,r_k) \in b$. For all $b \in B_p$,
we add a $b$-transition from $q_k \in I_k$ to $(r_k,R_k)$ if $(q_k,r_k) \in b$
and $R_k=r(b)$. Similarly, for all $b \in B_s$, we add a $b$-transition from
$(r_k,R_k)$ to $q_k \in F_k$ if $(r_k,q_k) \in b$ and $R_k =
\ell(b)$. Finally, for all $b \in B_i$, we add a $b$-transition from
$(r_k,R_k)$ to $(s_k,S_k)$ if $(r_k,s_k) \in b$, $R_k = \ell(b)$
and~$S_k=r(b)$.

\smallskip
This ends the construction of $\widetilde{\As}$ from $\As,I_1,F_1,I_2,F_2$. Observe that it is of size
exponential in the size of $\As$. We now prove that the
computation can be done in {\sc Exptime}.

\begin{lem} \label{lem:construc}
  Given an NFA  $\As,I_1,F_1,I_2,F_2$ as input, $\widetilde{\As}$ can be
  constructed in {\sc Exptime}.
\end{lem}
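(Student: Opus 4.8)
The plan is to observe that, once a single exponential-time preprocessing step has been performed, the rest of the construction of $\widetilde{\As}$ is routine enumeration. The one quantity that requires work is the set of binary relations on $Q$ induced by words, i.e.\ the transition monoid of $\As$. For $u \in A^*$, write $\rho_u = \{(p,q) \in Q \times Q \mid (p,u,q) \in \delta\}$, and set $M = \{\rho_u \mid u \in A^*\}$ and $S = \{\rho_u \mid u \in A^+\}$. Both lie in $2^{Q \times Q}$, so have at most $2^{|Q|^2}$ elements; moreover $S$ is the subsemigroup of $(2^{Q\times Q}, \cdot)$ generated by $\{\rho_a \mid a \in A\}$ under relational composition, and $M = \{\mathrm{id}_Q\} \cup S$. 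Hence $S$ and $M$ are computed by the standard saturation procedure: start from the generators $\rho_a$, and repeatedly add all products $\rho \cdot \rho_a$ until no new relation appears. Each relation has size $|Q|^2$, each composition costs $\mathrm{poly}(|Q|)$, and at most $|A| \cdot 2^{|Q|^2}$ compositions are carried out before stabilization, so $S$ and $M$ are computed in time $\mathrm{poly}(|A|, 2^{|Q|^2})$, which is {\sc Exptime}.

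Next I would rephrase each synchronizability predicate as a membership test against $S$ or $M$. A nonempty word $v$ labels a loop at a state $q$ exactly when $(q,q) \in \rho_v$; therefore $R \subseteq Q$ is synchronizable if and only if $\{(q,q) \mid q \in R\} \subseteq \rho$ for some $\rho \in S$. Consequently the family $\mathcal{R}$ of synchronizable subsets of $Q$ is the downward closure, under inclusion, of $\{\{q \mid (q,q) \in \rho\} \mid \rho \in S\}$, and is computed in {\sc Exptime} since $|\mathcal{R}| \leq 2^{|Q|}$ and $|S| \leq 2^{|Q|^2}$. Likewise, condition {\bf a)} in the definition holds for $T \subseteq Q^2$ precisely when $T \subseteq \rho$ for some $\rho \in M$, so $B_w$ is the downward closure of $M$ inside $2^{Q \times Q}$; and then $B_i$ (resp.\ $B_p$, resp.\ $B_s$) is obtained from $B_w$ by retaining those $T$ with $\ell(T), r(T) \in \mathcal{R}$ (resp.\ with $r(T) \in \mathcal{R}$, resp.\ with $\ell(T) \in \mathcal{R}$). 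Each of these four sets has at most $2^{|Q|^2}$ elements and is produced in {\sc Exptime} from $S$, $M$ and $\mathcal{R}$.

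With $\mathcal{R}$ and $B_w, B_p, B_i, B_s$ available, assembling $\widetilde{\As}$ is bookkeeping. The state set $\widetilde{Q}$ consists of the pairs $(r,R)$ with $r \in R \in \mathcal{R}$ --- at most $|Q| \cdot 2^{|Q|}$ of them --- together with the states of $I_1 \cup F_1 \cup I_2 \cup F_2$, so it is of exponential size and is listed in {\sc Exptime}. The transition relation is then obtained by iterating over the alphabet $B = B_w \sqcup B_p \sqcup B_i \sqcup B_s$ and, for each block, over the relevant pairs of states, emitting transitions exactly as prescribed in the definition of $\widetilde{\As}$; there are at most $|B| \cdot |\widetilde{Q}|^2$ of them, still exponential, each emitted in polynomial time. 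Adding the costs of the preprocessing, of building $\widetilde{Q}$, and of emitting the transitions yields an overall {\sc Exptime} bound.

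The only real obstacle is the first step: one must be convinced that deciding the various synchronizability conditions needs nothing beyond the (exponential-size) transition monoid, and that this monoid fits within the {\sc Exptime} budget. It is worth noting that $|S|$ can be as large as $2^{\Theta(|Q|^2)}$, so the exponential blow-up is intrinsic; but since $\widetilde{\As}$ itself is exponential in $|\As|$, this is precisely the expected cost.
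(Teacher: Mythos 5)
Your proof is correct, but it takes a different route from the paper's. The paper dispatches the lemma in two sentences: it observes that each synchronizability test (for a set of states or a set of pairs of states) reduces to nonemptiness of an intersection of NFAs --- e.g., $T\subseteq Q^2$ satisfies condition {\bf a)} iff $\bigcap_{(q,q')\in T} L(\As,\{q\},\{q'\})\neq\varnothing$ --- which is a {\sc Pspace}-complete problem, hence solvable in exponential time per query; running it over the exponentially many candidate sets stays in {\sc Exptime}. You instead materialize the transition monoid of $\As$ once by saturation and then answer every synchronizability query as a subset test against $S$ or $M$; this is essentially the deterministic product-automaton algorithm underlying the {\sc Pspace} result, performed globally rather than per query. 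Both arguments land in the same complexity class, since exponentially many {\sc Pspace} queries on a polynomial-size input still cost only exponential time in total. Your version is more self-contained (no appeal to the complexity of NFA intersection) and makes the data structures explicit, at the price of being longer; one could also note that your precomputation amortizes the work across all queries, though this does not change the asymptotic bound. The details check out: the use of $S=\{\rho_u\mid u\in A^+\}$ for the loop condition (which requires $v$ nonempty) versus $M=S\cup\{\mathrm{id}_Q\}$ for condition {\bf a)} (where $u\in A^*$) correctly mirrors the paper's definitions, and all the cardinality bounds ($|S|\le 2^{|Q|^2}$, $|\widetilde Q|\le |Q|\cdot 2^{|Q|}+|Q|$, $|B|\le 4\cdot 2^{|Q|^2}$) are as needed.
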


\begin{proof}
  Testing synchronizability of a set of states (resp.~a set of pairs
  of states) can easily be reduced to checking nonempty intersection
  between a set of NFAs. This is known to be a {\sc Pspace}-complete
  problem. It follows that computing the synchronizable sets can be
  done in {\sc Exptime}. It is then clear that the remaining
  computations can be done in {\sc Exptime}.
\end{proof}

We now prove that the construction is correct, \emph{i.e.}, that it
reduces \lt- and \ltt-separability to the special case $k=1$.

\begin{lem} \label{lem:reducor} Let $\As=(Q,A,\delta)$  be an NFA, and let
  $I_1,F_1,I_2,F_2\subseteq Q$. Then:
  \begin{enumerate}
  \item $L(\As,I_1,F_1),L(\As,I_2,F_2)$ are \lt-separable if and only if
    $L(\widetilde{\As},I_1,F_1),L(\widetilde{\As},I_2,F_2)$ are $\lt[1]$-separable.
  \item $L(\As,I_1,F_1),L(\As,I_2,F_2)$ are \ltt-separable if and only if there
    exists $d \in \nat$ such that $L(\widetilde{\As},I_1,F_1),L(\widetilde{\As},I_2,F_2)$
    are $\ltt[1,d]$-separable.
  \end{enumerate}
\end{lem}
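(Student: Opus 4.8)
The plan is to reduce both parts of the lemma to a single combinatorial equivalence, parametrised by the counting threshold. First I would rephrase each side. On the $\widetilde{\As}$-side, observe that for $k=1$ a \profile{1} of a word is just one of its letters (over the block alphabet $B$); hence $\tilde w_1\lteq{1}\tilde w_2$ means that $\tilde w_1$ and $\tilde w_2$ have the same set of letters, and $\tilde w_1\ltteq{1}{d}\tilde w_2$ means that every letter of $B$ occurs the same number of times up to threshold $d$ in both. Combined with the description of the smallest $\lt[1]$-, resp.\ $\ltt[1,d]$-, language containing a regular language, this shows that $L(\widetilde{\As},I_1,F_1)$ and $L(\widetilde{\As},I_2,F_2)$ are $\lt[1]$-separable (resp.\ $\ltt[1,d]$-separable) if and only if there are \emph{no} $\tilde w_1\in L(\widetilde{\As},I_1,F_1)$ and $\tilde w_2\in L(\widetilde{\As},I_2,F_2)$ with $\tilde w_1\ltteq{1}{d}\tilde w_2$. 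On the $\As$-side, Theorem~\ref{thm:seplt} (items~\eqref{item:a2} and~\eqref{item:e2}) states that $L(\As,I_1,F_1),L(\As,I_2,F_2)$ are \lt-separable iff no pair in $(I_1\times F_1)\times(I_2\times F_2)$ has a common $1$-pattern in $\As$, and Theorem~\ref{thm:sepltt} (items~\eqref{item:a3} and~\eqref{item:f3}) states that they are \ltt-separable iff there is some $d$ for which no such pair has a common $d$-pattern in $\As$. Hence both claims of the lemma reduce to proving, for each $d\in\nat$, the equivalence $(\star)_d$: \emph{some pair in $(I_1\times F_1)\times(I_2\times F_2)$ has a common $d$-pattern in $\As$ if and only if there are $\tilde w_1\in L(\widetilde{\As},I_1,F_1)$ and $\tilde w_2\in L(\widetilde{\As},I_2,F_2)$ with $\tilde w_1\ltteq{1}{d}\tilde w_2$} --- the case $d=1$ gives the first claim, and ranging over all $d$ gives the second.

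For the easy direction of $(\star)_d$, suppose $\tilde w_1,\tilde w_2$ are as above. An accepting run of $\widetilde{\As}$ on $\tilde w_i$ is either a single $B_w$-transition, or has the form $b_0b_1\cdots b_n$ with $b_0\in B_p$, $b_1,\dots,b_{n-1}\in B_i$, $b_n\in B_s$; and, by the way the transitions of $\widetilde{\As}$ are defined, the second component of the state reached after $b_{j-1}$ is $r(b_{j-1})=\ell(b_j)$. Fix once and for all, for every synchronizable set $R$ of states, a nonempty word $v_R$ looping at every state of $R$, and for every $b\in B$ a word $u_b$ labelling a run between the two members of each pair of $b$. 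Materialising the run --- replacing $b_0$ by $u_{b_0}$, each internal $b_j$ by $v_{\ell(b_j)}u_{b_j}$, and $b_n$ by $v_{\ell(b_n)}u_{b_n}$ --- produces a word read along a run from a state of $I_i$ to a state of $F_i$ in $\As$, together with a compatible \pfsdecomp (in the $B_w$ case one gets directly a word in $L(\As,\{p_1\},\{q_1\})\cap L(\As,\{p_2\},\{q_2\})$, i.e.\ a common word-pattern). Since $v_R$ and $u_b$ depend only on $R$ and on $b$, the prefix block, the suffix block, and the block attached to each internal letter depend only on the letter of $B$ that produced it. Using $\tilde w_1\ltteq{1}{d}\tilde w_2$, the two decompositions thus have the same $B_p$-letter and the same $B_s$-letter --- hence identical prefix and suffix blocks --- and, for every block, the numbers of occurrences coincide up to threshold $d$ (equality up to threshold $d$ is preserved under summing the contributions of the letters mapped to a given block). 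This is precisely a common $d$-pattern in $\As$ for a pair in $(I_1\times F_1)\times(I_2\times F_2)$.

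The hard direction --- a common $d$-pattern in $\As$ produces the words $\tilde w_1,\tilde w_2$ --- is the main obstacle, and is where the synchronizable-set machinery built into $\widetilde{\As}$ is essential. Starting from a common $d$-pattern $(\frp,f,\frs)$ witnessed by a $(p_1,q_1)$-compatible \pfsdecomp $z_1=u_0v_1u_1\cdots v_nu_n$ and a $(p_2,q_2)$-compatible one $z_2=x_0y_1\cdots y_mx_m$, one abstracts each block $(v_i,u_i,v_{i+1})$ and the two border blocks to a set of pairs of states of $\As$ required to: (i) contain the pair actually traversed in the run; (ii) be synchronizable of the right kind; (iii) depend only on the block, so that a block common to $z_1$ and $z_2$ is sent to the \emph{same} letter of $B$ --- this is what will give $\tilde w_1\ltteq{1}{d}\tilde w_2$; and, crucially, (iv) have the property that consecutive abstract letters $b,b'$ satisfy $r(b)=\ell(b')$, as the transitions of $\widetilde{\As}$ require. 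The obvious candidate --- all pairs realising a given triple, over both runs --- satisfies (i)--(iii) but not (iv): the range of the abstraction of one block and the domain of the next only agree on the states occurring in the two runs. The fix is to first \emph{normalise} the common $d$-pattern --- pump each loop to an idempotent power, insert extra copies of it, and pad the connecting words --- so that the set carried in the second component of $\widetilde{\As}$ stabilises around each loop to one fixed synchronizable set; one can then take each abstraction to consist of \emph{all} pairs compatible with the (idempotent) loop words and the connecting word, so that the relevant domains and ranges coincide and (iv) holds. One finally checks that the two words over $B$ so obtained belong to $L(\widetilde{\As},I_1,F_1)$ and $L(\widetilde{\As},I_2,F_2)$ respectively and, by (iii) together with the same summation remark as before, are $\ltteq{1}{d}$-equivalent. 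The bookkeeping in this normalisation --- checking that the stabilised sets are indeed synchronizable and that (iv) holds for \emph{every} pair of consecutive blocks, border blocks included --- is the delicate point; granting $(\star)_d$, the two statements of the lemma follow by combining it with the reductions described in the first paragraph.
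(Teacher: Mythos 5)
Your overall architecture is the one the paper uses: both statements are reduced, via Item~\eqref{item:e2} of Theorem~\ref{thm:seplt} and Item~\eqref{item:f3} of Theorem~\ref{thm:sepltt}, to a single equivalence, parametrised by $d$, between ``some pair in $(I_1\times F_1)\times(I_2\times F_2)$ has a common $d$-pattern in $\As$'' and ``$L(\widetilde{\As},I_1,F_1)$ and $L(\widetilde{\As},I_2,F_2)$ contain \ltteq{1}{d}-equivalent words''. Your ``easy'' direction --- materialising a run of $\widetilde{\As}$ by words $v_R$ and $u_b$ fixed once and for all, so that the block produced by an internal letter $b$ is $(v_{\ell(b)},u_b,v_{r(b)})$ and depends only on $b$ --- is correct and is essentially the paper's argument for that direction, including the observation that the unique $B_p$- and $B_s$-letters of the two words must agree because each occurs exactly once.

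The gap is in the hard direction. You have correctly isolated the crux: consecutive letters $b,b'$ of a word accepted by $\widetilde{\As}$ must satisfy $r(b)=\ell(b')$, and an abstraction of blocks that depends only on the block (as it must, for the letter counts to transfer) does not obviously guarantee this. The paper is very terse at exactly this point: it takes $T_\frb$ to be the set of pairs of states realised by the occurrences of $\frb$ in the two given runs and asserts acceptance by $\widetilde{\As}$ ``by definition''; your adjacency condition is precisely what that assertion hides. But your proposed repair does not establish it. If, after raising loops to idempotent powers, you take for each block the set of \emph{all} pairs $(q,q')$ such that the left loop word loops at $q$, the right loop word loops at $q'$, and the connecting word labels a run from $q$ to $q'$, then for consecutive blocks sharing the loop word $v$ the set $r(b)$ consists of those states with a $v$-loop that are \emph{reachable through $b$} (via $b$'s connecting word from some state carrying $b$'s left loop), whereas $\ell(b')$ consists of those states with a $v$-loop \emph{from which $b'$ can be continued}. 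These are two subsets of the set of states admitting a $v$-loop, cut out by unrelated reachability constraints, and pumping the loops to idempotent powers does not make them coincide; ``the relevant domains and ranges coincide and (iv) holds'' is asserted, not proved, and is false for this choice of abstraction in general. Since nothing else in the sketch supplies the missing identity, the direction from a common $d$-pattern to accepted words of $\widetilde{\As}$ --- which is the substance of the lemma --- remains open in your write-up. Closing it requires either a careful argument for a specific choice of the sets $T_\frb$ (for instance the pairs actually realised in the two runs, with the adjacency condition verified rather than assumed), or letters that carry the full sets of states looping on the shared words, so that adjacency is checked on those sets rather than on the derived sets $\ell(\cdot)$ and $r(\cdot)$.
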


\begin{proof}
  We only give the proof for \ltt using Condition~\eqref{item:f3} from
  Theorem~\ref{thm:sepltt}. The proof for \lt is obtained similarly
  using Condition~\eqref{item:e2} from Theorem~\ref{thm:seplt}.

  Suppose that $L(\As,I_1,F_1)$ and $L(\As,I_2,F_2)$ are not \ltt-separable. We prove that for
  all $d \in \nat$, $L(\widetilde{\As},I_1,F_1)$ and $L(\widetilde{\As},I_2,F_2)$ are not
  $\ltt[1,d]$-separable. Set $d \in \nat$. By Condition~\eqref{item:f3} 
  of Theorem~\ref{thm:sepltt}, there exists a common $d$-pattern \Ps:
  there are two words $w_1 \in L(\As,I_1,F_1)$ and $w_2 \in L(\As,I_2,F_2)$ such that $w_1 = u_0
  v_1 u_1 \ldots v_n u_n$ and $w_2 = u'_0 v'_1 u'_1 \ldots v'_m u'_m$ are
  \pfsdecomps, respectively compatible with some $(q_1,r_1)\in I_1\times F_1$ in
  $\As$ and $(q_2,r_2)\in I_2\times F_2$ in $\As$. There
  are two cases.

  In the first case, $n=m=0$ and $w_1=w_2$. Therefore, $L(\As,I_1,F_1) \cap
  L(\As,I_2,F_2) \ne \varnothing$. By definition, it follows that there exists
  some $b \in B_w$ such that $b \in L(\widetilde{\As},I_1,F_1) \cap
  L(\widetilde{\As},I_2,F_2)$, which ends the proof.

  In the second case, $n,m > 0$. Set $\widehat{w}_1 = \frp\frb_1 \cdots
  \frb_{n-1}\frs$ such that $\frp=(u_0,v_1)$, $\frs=(u_n,v_n)$ and for
  all $i$, $\frb_i =(v_i,u_i,v_{i+1})$. Similarly, we define $\widehat{w}_2 =
  \frp' \frb'_1 \cdots \frb'_{m-1}\frs'$. Observe that since we started
  from \pfsdecomps for a $d$-pattern \Ps, we have $\widehat{w}_1
  \ltteq{1}{d} \widehat{w}_2$. We use these words to construct $\widetilde{w}_1,
  \widetilde{w}_2 \in B^*$ belonging to $L(\widetilde{\As},I_1,F_1)$ and
  $L(\widetilde{\As},I_2,F_2)$, respectively.

  Let $\frb$ be a block appearing in $\widehat{w}_1,\widehat{w}_2$. We set
  $T_\frb \subseteq Q^2$ as the set of pairs of states
  that correspond to $\frb$ in the runs on $w_1$ and $w_2$. Since the
  \pfsdecomps are compatible with their respective NFAs, by definition,
  $T_\frb$ is synchronizable. Similarly, if $\frp$ (resp.~$\frs$) is a
  prefix (resp.~suffix) block occurring in $\widehat{w}_1,\widehat{w}_2$, we get
  a prefix (resp.~suffix) synchronizable set $T_\frp$ (resp.~$T_\frs$).
  We now set $\widetilde{w}_1=T_\frp T_{\frb_1} \cdots T_{\frb_{n-1}}
  T_\frs$ and $\widetilde{w}_2 = T_{\frp'} T_{\frb'_1} \cdots
  T_{\frb'_{m-1}}T_{\frs'}$. By definition of $\widetilde{\As}$
  and the fact that the \pfsdecomps are  $(q_1,r_1)$-, $(q_2,r_2)$-compatible,
  $\widetilde{w}_1,\widetilde{w}_2 \in B^*$ belong to  $L(\widetilde{\As},I_1,F_1)$ and
  $L(\widetilde{\As},I_2,F_2)$, respectively. Moreover, since $\widehat{w}_1 \ltteq{1}{d} \hat{w_2}$,
  we have $\widetilde{w}_1 \ltteq{1}{d} \widetilde{w}_2$. We conclude that
  $L(\widetilde{\As},I_1,F_1)$ and $L(\widetilde{\As},I_2,F_2)$ are not $\ltt[1,d]$-separable.

  \medskip
  Conversely, assume that for all $d \in \nat$,
  $L(\widetilde{\As},I_1,F_1)$ and $L(\widetilde{\As},I_2,F_2)$ are not $\ltt[1,d]$-separable. We prove that for all $d \in \nat$, there exists a pair in $(I_1\times
  F_1)\times(I_2\times F_2)$ having a common
  $d$-pattern \Ps. Set $d \in \nat$, by hypothesis, there exist
  $\widetilde{w}_1, \widetilde{w}_2 \in B^*$ in $L(\widetilde{\As},I_1,F_1),
  L(\widetilde{\As},I_2,F_2)$, respectively, such that $\widetilde{w}_1 \ltteq{1}{d}
  \widetilde{w}_2$. Again, we have two cases. By definition of
  $\widetilde{\As}$, either $\widetilde{w}_1,\widetilde{w}_2 \in B_w$
  or $\widetilde{w}_1,\widetilde{w}_2 \in B_p(B_i)^*B_s$. In the first case, this
  means that $L(\As,I_1,F_1)\cap L(\As,I_2,F_2)\neq\varnothing$, and therefore
  it suffices to set \Ps as a word in this intersection to conclude.

  Otherwise, $\widetilde{w}_1=b_pb_1\cdots b_nb_s$ and
  $\widetilde{w}_2=b_pb'_1\cdots b'_mb_s$. By definition of $B_p,B_i, B_s$,
  to each label appearing in $\widetilde{w}_1,\widetilde{w}_2$ we can associate
  a unique block, prefix block or suffix block. We set
  $\Ps=(\frp,f,\frs)$ as the $d$-pattern defined in the following
  way: \frp,\frs are the prefix and suffix blocks associated to $b_p$
  and $b_s$ respectively. Since $\widetilde{w}_1 \ltteq{1}{d} \widetilde{w}_2$,
  for all blocks $\frb$, the number of occurrences of labels $b$ such
  that \frb is associated to $b$ is the same in $\widetilde{w}_1,
  \widetilde{w}_2$ up to threshold $d$, we set $f(\frb)$ as this
  number. Finally, let $p_i\in I_i$ and $q_i\in F_i$ such that $\widetilde{w}_i$
  labels a path from $p_i$ to $q_i$ in $\widetilde{\As}$. It
  is now straightforward to verify that the pair $\bigl((p_1,q_1),(p_2,q_2)\bigr)$ admits compatible
  \pfsdecomps in $\As$, which terminates the proof.
\end{proof}

\subsubsection{Deciding the case $k=1$}

We explain how \lt- and \ltt-separability can be decided when the size
of profiles if fixed to $1$. Observe that in this case, the
\profile{1} of a position is its label. We prove the two following
lemmas.

\begin{lem} \label{lem:lt1} Deciding whether two regular languages given by
  an accepting NFA are $\lt[1]$-separable is in {\sc co-Np}.
\end{lem}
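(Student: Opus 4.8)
The plan is to sidestep the pattern machinery entirely: with the profile size fixed to~$1$, $\lt[1]$-separability collapses to an elementary condition on the letter contents of accepted words, and that condition is then amenable to a standard guess-and-check argument.

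First I would unwind the definitions for $k=1$. Here $k_\ell=0$ and $k_r=1$, so the $1$-profile of a position is simply the pair $(\varepsilon,a)$ where $a$ is its label; hence the set of $1$-profiles occurring in a word~$w$ is exactly $\{(\varepsilon,a)\mid a\in\content{w}\}$, and therefore $w \ltteq{1}{1} w'$ if and only if $\content{w}=\content{w'}$. Since $\ltclos{L_1}{1}$ is the smallest $\lt[1]$-language containing~$L_1$, this yields the characterization: writing $L_i=L(\As,I_i,F_i)$, the pair $L_1,L_2$ is \emph{not} $\lt[1]$-separable if and only if there is a set $C\subseteq A$ such that some word of $L_1$ and some word of $L_2$ both have content exactly~$C$.

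Next I would reduce the lemma to a single claim: given an NFA~$\As$, states $I,F$ and a set $C\subseteq A$, deciding whether $L(\As,I,F)$ contains a word of content exactly~$C$ is in {\sc Np}. Granting this, non-$\lt[1]$-separability is witnessed by guessing~$C$ (at most $|A|$ bits) together with the two {\sc Np}-witnesses --- one for $L_1$, one for $L_2$ --- so, {\sc Np} being closed under conjunction and polynomially bounded existential guessing, non-separability is in {\sc Np} and $\lt[1]$-separability is in {\sc co-Np}. For the claim itself I would pass to the sub-automaton $\As_C$ obtained by deleting every transition whose label lies outside~$C$, and observe that $L(\As,I,F)$ contains a word of content exactly~$C$ iff $\As_C$ has a run from $I$ to $F$ using, for each $a\in C$, at least one $a$-labelled transition. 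This last property is checked in {\sc Np} by guessing a transition $e_a$ on~$a$ for each $a\in C$ and a linear order $a^{(1)},\dots,a^{(|C|)}$ of~$C$, then performing the $|C|+1$ polynomial-time reachability tests in~$\As_C$: from $I$ to the source of $e_{a^{(1)}}$, from the target of $e_{a^{(j)}}$ to the source of $e_{a^{(j+1)}}$ for $1\le j<|C|$, and from the target of $e_{a^{(|C|)}}$ to~$F$ (for $C=\varnothing$ this degenerates to testing $I\cap F\neq\varnothing$). Concatenating the runs witnessed by a successful guess gives a word using only $C$-transitions and every letter of~$C$, hence of content exactly~$C$; conversely, from a run witnessing content~$C$ one takes $e_a$ to be its first $a$-transition, orders~$C$ by the positions of these transitions along the run, and reads the needed reachabilities off the sub-runs in between.

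The only delicate point is this last correctness bookkeeping, in particular enforcing content \emph{exactly}~$C$ rather than merely $\content{w}\subseteq C$ --- which is exactly why one must guess a transition per letter on top of restricting to~$\As_C$. This is routine, and everything else (the two closure properties of {\sc Np}, and the reduction to the content condition) is immediate.
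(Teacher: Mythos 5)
Your proof is correct, and it reaches the stated \textsc{co-Np} bound by a genuinely different route from the paper's. The paper treats the case $k=1$ via Parikh images and Presburger arithmetic: non-separability is phrased as ``there exist $\bar{x}_1\in\pi(L_1)$ and $\bar{x}_2\in\pi(L_2)$ equal componentwise up to threshold $1$,'' existential Presburger formulas for $\pi(L_1)$ and $\pi(L_2)$ are computed in linear time from the NFA using the result of Seidl, Schwentick, Muscholl and Habermehl, and the resulting existential sentence is decided in \textsc{Np} by the classical bounds of Borosh--Treybig and von zur Gathen--Sieveking. You instead unwind $\ltteq{1}{1}$ to ``same letter content'' (which matches the paper's own remark that the $1$-profile of a position is its label) and then give a direct, self-contained \textsc{Np} guess-and-check on the automaton: guess the content set $C$, one transition per letter of $C$ and an ordering, and verify $|C|+1$ reachability queries in the sub-automaton restricted to $C$-labelled transitions; your bookkeeping for enforcing content \emph{exactly} $C$ (rather than merely $\subseteq C$) is the right and necessary step, and the $C=\varnothing$ degenerate case is handled. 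What your argument buys is elementarity --- nothing beyond graph reachability is needed. What the paper's argument buys is uniformity: the same Presburger translation, with the threshold left unbounded, yields the \textsc{Expspace} bound for the \ltt case $k=1$ (Lemma on $\ltt[1,d]$-separability) via one quantifier alternation, and it carries over verbatim to context-free inputs by Parikh's theorem, whereas your reachability-based certificate is specific to finite automata and to the fixed threshold $d=1$.
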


\begin{lem} \label{lem:ltt1} Given an NFA accepting two languages $L_1$ and
  $L_2$, deciding whether there exists $d \in \nat$ such that $L_1$ and $L_2$
  are $\ltt[1,d]$-separable is in {\sc Expspace}.
\end{lem}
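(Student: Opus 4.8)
The plan is to drop the pattern machinery for this lemma and argue directly with Parikh images, exploiting that for $k=1$ the \profile{1} of a position is just its label. First I would record an elementary reformulation: since $\ltt[1,d]$ is exactly the (finite) family of unions of \ltteq{1}{d}-classes and \ltteq{1}{d} has finite index, $L_1$ and $L_2$ are $\ltt[1,d]$-separable if and only if no \ltteq{1}{d}-class meets both of them; by Fact~\ref{lem:eqredef} (with $k=1$, where $\pi_1$ coincides with the ordinary commutative image) this means there are no $\bar x_1\in\pi_1(L_1)$ and $\bar x_2\in\pi_1(L_2)$ that are equal componentwise up to threshold $d$. Hence ``there exists $d$ such that $L_1,L_2$ are $\ltt[1,d]$-separable'' is the negation of the statement $(\star)$: ``for every $d\in\nat$ there exist $\bar x_1\in\pi_1(L_1)$, $\bar x_2\in\pi_1(L_2)$ equal up to threshold $d$''.

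Next I would simplify $(\star)$. If $\bar x_1,\bar x_2$ are equal up to threshold $d$, then the set $D\subseteq A$ of coordinates on which they differ consists only of coordinates where both vectors are $\geq d$. Since $A$ has finitely many subsets and these conditions only strengthen as $d$ grows, a pigeonhole argument shows that $(\star)$ is equivalent to $\bigvee_{D\subseteq A}\varphi_D$, where $\varphi_D$ reads: ``for every $d$ there exist $\bar x_1\in\pi_1(L_1)$ and $\bar x_2\in\pi_1(L_2)$ agreeing on all coordinates outside $D$ and having every coordinate in $D$ at least $d$ in both'' (for $D=\varnothing$ this is just $\pi_1(L_1)\cap\pi_1(L_2)\neq\varnothing$). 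So it suffices to decide $\exists D\subseteq A\ \varphi_D$ within {\sc Expspace}; the answer to the lemma is its negation, and {\sc Expspace} is closed under complement.

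Then I would turn this into a Presburger decision problem. By Theorem~\ref{prop:parikh} the sets $\pi_1(L_i)$ are effectively semilinear; moreover, for a language given by an NFA one can compute an existential Presburger formula $P_i(\bar x)$ defining its Parikh image of size polynomial in $|\As|$ (a standard strengthening of Parikh's theorem). Substituting these, each $\varphi_D$ becomes a Presburger sentence of polynomial size of the shape $\forall d\,\exists\bar x_1\,\exists\bar x_2\,(P_1(\bar x_1)\wedge P_2(\bar x_2)\wedge\cdots)$, so $\exists D\ \varphi_D$ is a Presburger sentence of polynomial size with a constant number of quantifier alternations. Such sentences can be decided in {\sc Expspace} by the known complexity of bounded–alternation Presburger arithmetic, which gives the algorithm.

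The main obstacle will be precisely this innermost $\forall d$ quantifier together with the complexity bookkeeping around it — it is also what makes the \ltt case costlier than the \lt case, where $d$ is pinned to $1$, the quantifier disappears, and merely guessing the relevant letter-set yields {\sc co-Np}. Two points need care: the Presburger description of each $\pi_1(L_i)$ must be kept of polynomial size (the naive semilinear description is exponential and would only yield {\sc 2-Expspace}), and one must invoke the correct bound for Presburger sentences with few alternations. Alternatively, one could try to eliminate $\forall d$ by hand, rephrasing $\varphi_D$ as the existence of a ``base pair'' $(\bar x_1,\bar x_2)\in\pi_1(L_1)\times\pi_1(L_2)$ agreeing outside $D$ together with a pumpable ``period pair'' $(\bar p_1,\bar p_2)$ agreeing outside $D$ and strictly positive on every coordinate of $D$, which is a purely existential — hence cheaper — condition.
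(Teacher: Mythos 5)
Your proposal matches the paper's proof in all essentials: both reduce non-separability to the one-alternation Presburger sentence ``for all $d$ there exist $\bar x_1\in\pi(L_1)$ and $\bar x_2\in\pi(L_2)$ equal componentwise up to threshold $d$,'' both rely on computing \emph{existential} Presburger formulas for the Parikh images of NFA languages in polynomial (in fact linear) time rather than the exponential naive semilinear description, and both conclude via the {\sc Expspace} decision procedure for Presburger sentences with a bounded number of quantifier alternations. The only deviation is your disjunction over difference sets $D$, which is superfluous --- ``equal up to threshold $d$'' is already expressible by a polynomial-size quantifier-free formula $\bigwedge_{a\in A}\bigl(x_1(a)=x_2(a)\vee(x_1(a)\geq d\wedge x_2(a)\geq d)\bigr)$, and written as a single sentence your $\bigvee_D\varphi_D$ would be exponentially large, though deciding each $\varphi_D$ separately would still keep the algorithm within {\sc Expspace}.
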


As we explained in Section~\ref{sec:ltt}, decidability follows from
Parikh's Theorem and decidability of Presburger arithmetic. However,
applying these results naively yields high complexity. We explain
here how to refine the argument in order to get {\sc co-Np} and {\sc
Expspace} complexities. 

Set $\pi(L_1), \pi(L_2)$ as the Parikh's images of $L_1$ and
$L_2$, respectively. As we explained in Section~\ref{sec:ltt}, non-\ltt-separability is
equivalent to the following Presburger property: ``for all $d \in \nat$ there
exist $\bar{x}_1 \in \pi(L_1)$, $\bar{x}_2 \in \pi(L_2)$ that are equal
componentwise up to threshold~$d$.'' By~\cite{SeidlSMH04}, existential
Presburger formulas for $\pi(L_1),\pi(L_2)$ can be computed in linear time
(see also~\cite{VSS05} for the same technique applied to context-free
grammars). Therefore, a Presburger formula for the property can also be
computed in linear time. Moreover, by definition, this property has exactly
one quantifier alternation, it then follows from~\cite{ReddyLoveland:1978} that it
can be decided in {\sc Expspace}.

The same construction can be done for $\lt$. However, the counting
threshold $d$ is fixed to~$1$. Therefore, the constructed formula is
existential. It is known that existential Presburger formuals can be
decided in {\sc Np} (see~\cite{BoroshTreybig:76,gathen78}). We conclude that 
$\lt[1]$-separability is in~{\sc co-Np}. 

In the case of \lt the problem is actually {\sc co-Np}-complete. This
means that Lemma~\ref{lem:lt1} cannot be improved and that improving
Proposition~\ref{prop:ult} would require improving the reduction.  For
\ltt, the situation is different, it is likely that a sharper analysis
of the Presburger formula would yield a better upper bound. Indeed,
while deciding Presburger formulas with only one quantifier
alternation is already very costly in general, the formula we consider
is very specific.

\subsection{Lower Bounds}

\makeatletter{}%
\tikzstyle{nop}=[minimum size=0.35cm,draw,circle,inner sep=0pt]
\tikzstyle{nod}=[minimum size=0.35cm,draw,circle,inner sep=2pt]
\tikzstyle{nof}=[minimum size=0.35cm,draw,circle,double]
\tikzstyle{sqd}=[draw,rectangle,thick]

\tikzstyle{ars}=[line width=0.7pt,->]

\tikzstyle{arr}=[line width=2.0pt,double,->]
\tikzstyle{wrd}=[line width=0.25cm,gray!50,-]
\tikzstyle{sep}=[line width=1.0pt,-]
\tikzstyle{zig}=[line width=1.0pt,snake=zigzag]
\tikzstyle{ond}=[line width=1.0pt,snake=snake]
\tikzstyle{acc}=[line width=1.0pt,snake=brace]
\tikzstyle{stp}=[midway,draw,thick,rounded rectangle,align=center,fill=white]

\tikzstyle{bag}=[inner sep=0pt]
\tikzstyle{box}=[rectangle]

\tikzstyle{acc}=[line width=1.0pt,snake=brace]

In this subsection, we prove \textsc{co-Np} lower bounds for both \lt and
\ltt-separability. The bounds hold when the input languages are given
as NFAs or DFAs.

\begin{prop} \label{prop:hard} Let $L_1,L_2$ be languages accepted by
  two input DFAs $\mathcal{A}_1,\mathcal{A}_2$, respectively. The two following problems
  are \textsc{co-Np}-hard:
  
  \begin{enumerate}
  \item\label{itm:lt} Are $L_1$ and $L_2$ \lt-separable?
  \item\label{itm:ltt} Are $L_1$ and $L_2$ \ltt-separable?
  \end{enumerate}
\end{prop}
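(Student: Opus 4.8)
The plan is to give a polynomial-time reduction from \textsc{3-Sat} showing that non-separability is \textsc{Np}-hard (equivalently, that separability is \textsc{co-Np}-hard), using the \emph{same} pair of DFAs for both the \lt and the \ltt question. Fix a $3$-CNF formula $\varphi$ over variables $x_1,\dots,x_n$ with clauses $C_1,\dots,C_m$. Work over the alphabet $A=\{\#\}\cup\{x_i,\bar x_i\mid 1\le i\le n\}$, where $x_i$ stands for ``$x_i$ true'' and $\bar x_i$ for ``$x_i$ false'', and write $\mathsf{var}(x_i)=\mathsf{var}(\bar x_i)=i$. Let $\mathcal{A}_1$ be a DFA accepting the words $\#^{+}m_1^{+}\#^{+}m_2^{+}\#^{+}\cdots\#^{+}m_m^{+}\#^{+}$ where each $m_j$ is a literal of $C_j$; such a word picks one satisfying literal per clause, so the set of literal letters it uses is a \emph{hitting set} of $\varphi$, and $\mathcal{A}_1$ needs only $O(m)$ states. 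Let $\mathcal{A}_2$ be a DFA accepting the words $\#^{+}\ell_1^{+}\#^{+}\ell_2^{+}\#^{+}\cdots\#^{+}\ell_p^{+}\#^{+}$ where each $\ell_s$ is a literal, $\mathsf{var}(\ell_1)\le\cdots\le\mathsf{var}(\ell_p)$, and $\mathsf{var}(\ell_s)=\mathsf{var}(\ell_{s+1})$ implies $\ell_s=\ell_{s+1}$; such a word uses at most one literal per variable, so the set of literal letters it uses is a \emph{consistent} set of literals, and $\mathcal{A}_2$ only has to remember the last literal read, so $O(n)$ states suffice. Both automata are deterministic (modulo a sink state) and are produced in polynomial time. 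Set $L_i=L(\mathcal{A}_i)$.

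If $\varphi$ is unsatisfiable, I would exhibit an \lt separator directly: the language $L=\{w\in A^*\mid x_i\text{ and }\bar x_i\text{ both occur in }w\text{ for some }i\}$, which is a boolean combination of languages of the form~\eqref{p4}, hence \lt. Every word of $L_1$ uses a hitting set of literals; since $\varphi$ is unsatisfiable this hitting set is not a consistent set, so some variable occurs in both polarities and the word lies in $L$, whence $L_1\subseteq L$. No word of $L_2$ uses both polarities of a variable, so $L\cap L_2=\varnothing$. Thus $L$ separates $L_1$ from $L_2$, and in particular $L_1,L_2$ are also \ltt-separable.

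If $\varphi$ is satisfiable, fix a satisfying assignment and for each $C_j$ choose a literal $m_j\in C_j$ made true by it. Then $S=\{m_1,\dots,m_m\}$ is simultaneously a hitting set and a consistent set, and for each $\ell\in S$ set $t_\ell=|\{j\mid m_j=\ell\}|\ge 1$, so $\sum_{\ell\in S}t_\ell=m$. Given any $k,d\in\nat$, pick $N>k$ and put $w_1=\#^{N}m_1^{N}\#^{N}m_2^{N}\#^{N}\cdots\#^{N}m_m^{N}\#^{N}\in L_1$. For $w_2$, list the literals of $S$ in strictly increasing variable order $\ell^{(1)},\dots,\ell^{(r)}$ and put $w_2=\#^{N}\prod_{s=1}^{r}\bigl(\,(\ell^{(s)})^{N}\#^{N}\,\bigr)^{t_{\ell^{(s)}}}\in L_2$; then $w_1$ and $w_2$ have the same multiset of literal blocks, only in a different order. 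Since all blocks have length $N>k$, any infix of length at most $k$ of either word lies inside a single $\#$-block, inside a single literal block, or straddles exactly one $\#$-block/literal-block boundary; a direct count shows each such infix occurs equally often in $w_1$ and in $w_2$, and both words have prefix and suffix $\#^{k-1}$. Hence $w_1\ltteq{k}{d}w_2$, so $w_2\in\lttclos{L_1}{k}{d}$; as this holds for every $k$ and $d$, no $\ltt[k,d]$-language can separate $L_1$ from $L_2$, so (since $\ltt=\bigcup_{k,d}\ltt[k,d]$) the pair $L_1,L_2$ is not \ltt-separable, and a fortiori not \lt-separable.

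The crux, and the main obstacle, is the design of $\mathcal{A}_2$: a polynomial-size automaton cannot check ``$\varphi$-satisfaction'', nor even ``the set of letters used is consistent'', since the latter would require remembering an arbitrary set of variables. The fix is to force the literal letters of an accepted word to appear grouped in non-decreasing variable order, which a small deterministic automaton checks trivially and which makes consistency automatic; and to allow each literal to span several consecutive blocks, which lets an $\mathcal{A}_2$-word match the clause-by-clause choices of an $\mathcal{A}_1$-word with the correct \emph{multiplicities} — this is exactly what is needed to get $\ltteq{k}{d}$-equivalence for every threshold $d$, and hence to rule out \ltt-separation rather than only \lt-separation. The remaining work — checking that $\mathcal{A}_1,\mathcal{A}_2$ are deterministic and of polynomial size, verifying the infix counts in the satisfiable case, and concluding \textsc{co-Np}-hardness of both decision problems from the reduction (\textsc{3-Sat} being \textsc{Np}-complete) — is routine.
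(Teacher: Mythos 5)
Your reduction is correct and is essentially the paper's own argument: both reduce from \textsc{3-Sat} to non-separability via a pair of polynomial-size DFAs, one enforcing a choice of one literal per clause and the other enforcing consistency of the chosen literals, with loops/block-powers used to pump two words encoding the same satisfying choice into $\ltteq{k}{d}$-equivalence for every $k,d$, and a purely alphabetic (hence \lt) separator in the unsatisfiable case. The only differences are cosmetic: you swap which automaton plays which role and enforce consistency by sorting literals by variable index, whereas the paper's assignment automaton lists all $n$ variables in order, but the key ideas coincide.
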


\begin{proof}
  We only do the proof for \ltt. The reduction is identical for the
  \lt case.  We prove that testing whether $L_1$ and $L_2$ are
  not \ltt-separable is {\sc Np}-hard. The proof is by reduction of {\bf
    3-SAT}. From an instance of {\bf 3-SAT}, we construct two DFAs and
  prove that the corresponding languages are not \ltt-separable if and
  only if the {\bf
    3-SAT} instance is satisfiable.

  Let $\Cs=\{C_1,\dots,C_m\}$ be a set of $3$-clauses over the set of
  variables $\{x_1,\ldots,x_n\}$. We construct DFAs $\As_1$ and $\As_2$
  over the alphabet $A := \{ \#, x_1, \ldots ,x_n, \neg x_1, \ldots,
  \neg x_n\}$. Given $w \in A^*$, we say that $w$ \emph{encodes an
    assignment of truth values} if for all $i \leq n$, $w$ contains either
  the label $x_i$ or the label $\neg x_i$, but not both.
  It is straightforward to see that an assignment of truth values for the
  variables can be uniquely defined from such a word. Moreover, by
  definition of \ltt, we have the following fact.

  \begin{fact} \label{fct:assign}
    The language of correct assignments is \ltt.
  \end{fact}

  \noindent
  Intuitively, we want our DFAs $\As_1,\As_2$ to verify the following
  three properties:

  \begin{enumerate}
  \item\label{item:code1} all words accepted by $\As_1$ encode assignments, and for each
    assignment of variables, a word coding that assignment is
    accepted by $\As_1$.
  \item\label{item:code2} all assignments accepted by $\As_2$ satisfy \Cs, and for each
    assignment of variables satisfying \Cs, a word coding that
    assignment is accepted by $\As_2$.
  \item\label{item:code3} if there exist $w_1,w_2\in A^*$ accepted by
    $\As_1,\As_2$ that encode the same assignment of variables, then for
    all $k,d \in \nat$, $w_1,w_2$ can be chosen such that $w_1 \kdltteq
    w_2$.
  \end{enumerate}

\noindent Conditions~\eqref{item:code1} and~\eqref{item:code2} are simple to enforce. Indeed, for Condition~\eqref{item:code1}, it
  suffices to construct a DFA recognizing the words $t_1 \dots t_n$
  such that for all $i \leq n$, $t_i=x_i$ or $t_i= \neg x_i$. For
  Condition~\eqref{item:code2} it suffices to construct a DFA recognizing the words $t_1
  \dots t_mu$ where for all $i \leq m$, $t_i$ is a literal of $C_i$ and
  $u$ is an arbitrary word ($u$ is necessary since some variables might
  not appear in the prefix, preventing the word from coding an
  assignment). The problem with these two template DFAs is that they
  do not verify Condition~\eqref{item:code3}. To solve this issue, we add loops on their
  states in order to generate as many copies of infixes as necessary to
  make two words coding the same assignment indistinguishable wrt.~the class~\ltt.

  We begin by giving $\As_1$, of which a graphical representation is
  shown in Figure~\ref{fig:a1}. It recognizes $L_1$, with the marked initial and final states.

  \begin{figure}[h]
    \begin{center}
      \begin{tikzpicture}
        \tikzset{every loop/.style={min distance=2.5mm,out=120,in=60,looseness=10}}

        \node[nod] (A10) at (0.0,+0.0) {};
        \node[nod] (A11) at (1.5,+0.7) {};
        \node[nod] (A12) at (1.5,-0.7) {};

        \node[nod] (A20) at (3.0,+0.0) {};
        \node[nod] (A21) at (4.5,+0.7) {};
        \node[nod] (A22) at (4.5,-0.7) {};

        \node[nod] (A30) at (6.0,+0.0) {};

        \node[nod] (A40) at (8.0,+0.0) {};
        \node[nod] (A41) at (9.5,+0.7) {};
        \node[nod] (A42) at (9.5,-0.7) {};

        \node[nof] (A50) at (11.0,+0.0) {};

        \draw[ars] ($(A10)-(0.5,0.0)$) to (A10);

        \draw[ars] (A10) to node[sloped,above] {$x_1$} (A11);
        \draw[ars] (A10) to node[sloped,below] {$\neg x_1$} (A12);

        \draw[ars] (A11) to node[sloped,above] {$\#$} (A20);
        \draw[ars] (A12) to node[sloped,below] {$\#$} (A20);

        \draw[ars] (A20) to node[sloped,above] {$x_2$} (A21);
        \draw[ars] (A20) to node[sloped,below] {$\neg x_2$} (A22);

        \draw[ars] (A21) to node[sloped,above] {$\#$} (A30);
        \draw[ars] (A22) to node[sloped,below] {$\#$} (A30);

        \draw[ars] (A40) to node[sloped,above] {$x_n$} (A41);
        \draw[ars] (A40) to node[sloped,below] {$\neg x_n$} (A42);

        \draw[ars] (A41) to node[sloped,above] {$\#$} (A50);
        \draw[ars] (A42) to node[sloped,below] {$\#$} (A50);

        \node at ($(A30)!1/2!(A40)$) {$\ldots$};

        \draw[ars] (A10) to [out=105,in=75,loop] node[above] {$\#$} (A10);
        \draw[ars] (A20) to [out=105,in=75,loop] node[above] {$\#$} (A20);
        \draw[ars] (A30) to [out=105,in=75,loop] node[above] {$\#$} (A30);
        \draw[ars] (A40) to [out=105,in=75,loop] node[above] {$\#$} (A40);
        \draw[ars] (A50) to [out=105,in=75,loop] node[above] {$\#$} (A50);

        \draw[ars] (A11) to [out=+105,in=+75,loop] node[above] {$\#,x_1$} (A11);
        \draw[ars] (A21) to [out=+105,in=+75,loop] node[above] {$\#,x_2$} (A21);

        \draw[ars] (A41) to [out=+105,in=+75,loop] node[above] {$\#,x_n$} (A41);

        \tikzset{every loop/.style={min distance=2.5mm,in=-120,out=-60,looseness=10}}        
        \draw[ars] (A12) to [out=-105,in=-75,loop] node[below] {$\#,\neg x_1$} (A12);
        \draw[ars] (A22) to [out=-105,in=-75,loop] node[below] {$\#,\neg x_2$} (A22);
        \draw[ars] (A42) to [out=-105,in=-75,loop] node[below] {$\#,\neg x_n$} (A42);
      \end{tikzpicture}
    \end{center}
    \caption{Representation of $\As_1$}
    \label{fig:a1}
  \end{figure}

  By definition, $\As_1$ verifies Condition~\eqref{item:code1}. We now define $\As_2$ as a sequence
  of $m$ subautomata, each one corresponding to a clause $C$ in
  $\Cs$. Intuitively, if $C=x_i \vee x_j \vee \neg x_k$, the subautomata
  selects a label within $\{x_i,x_j,\neg x_k\}$. This means that if this word
  encodes an assignment, it must satisfy all clauses in \Cs. We give a
  graphical representation of $\As_2$ in Figure~\ref{fig:a2} (it recognizes
  $L_2$, with the marked initial and final states).

  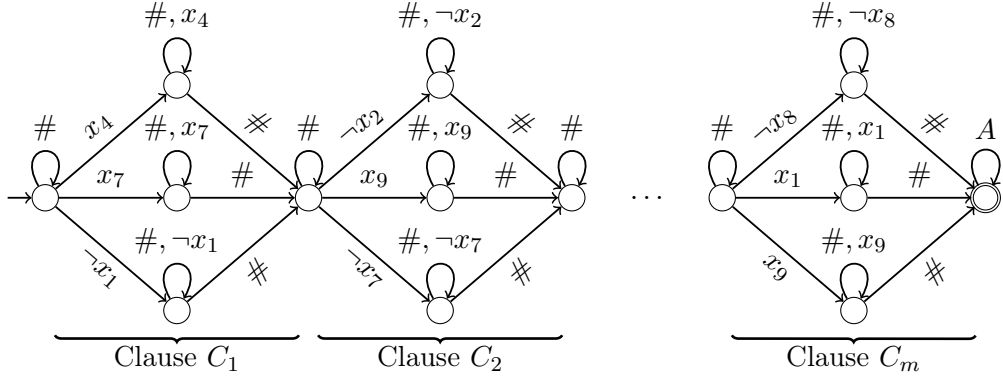
\begin{figure}[h]
    \begin{center}
      \begin{tikzpicture}
        \tikzset{every loop/.style={min distance=2.5mm,out=120,in=60,looseness=10}}
        
        \node[nod] (B1) at (0.0,0.0) {};
        \node[nod] (B2) at (3.5,0.0) {};
        \node[nod] (B3) at (7.0,0.0) {};
        \node[nod] (B4) at (9.0,0.0) {};
        \node[nof] (B5) at (12.5,0.0) {};

        \draw[ars] ($(B1)-(0.5,0.0)$) to (B1);

        \node at ($(B3)!1/2!(B4)$) {$\ldots$};

        \draw[ars] (B1) to [out=105,in=75,loop] node[above] {$\#$} (B1);
        \draw[ars] (B2) to [out=105,in=75,loop] node[above] {$\#$} (B2);
        \draw[ars] (B3) to [out=105,in=75,loop] node[above] {$\#$} (B3);
        \draw[ars] (B4) to [out=105,in=75,loop] node[above] {$\#$} (B4);
        \draw[ars] (B5) to [out=105,in=75,loop] node[above] {$A$} (B5);

        \node[nod] (C2) at ($(B1)!1/2!(B2)$) {};
        \node[nod] (C1) at ($(C2)+(0.0,1.5)$) {};
        \node[nod] (C3) at ($(C2)-(0.0,1.5)$) {};

        \draw[ars] (B1) to node[sloped,above] {$x_4$} (C1);
        \draw[ars] (B1) to node[sloped,above] {$x_7$} (C2);
        \draw[ars] (B1) to node[sloped,below] {$\neg x_1$} (C3);

        \draw[ars] (C1) to node[sloped,above] {$\#$} (B2);
        \draw[ars] (C2) to node[sloped,above] {$\#$} (B2);
        \draw[ars] (C3) to node[sloped,below] {$\#$} (B2);

        \draw[ars] (C1) to [out=+105,in=+75,loop] node[above] (R1) {$\#,x_4$} (C1);
        \draw[ars] (C2) to [out=+105,in=+75,loop] node[above] {$\#,x_7$} (C2);
        \draw[ars] (C3) to [out=+105,in=+75,loop] node[above] {$\#,\neg x_1$}
        (C3);

        \node[nod] (D2) at ($(B2)!1/2!(B3)$) {};
        \node[nod] (D1) at ($(D2)+(0.0,1.5)$) {};
        \node[nod] (D3) at ($(D2)-(0.0,1.5)$) {};

        \draw[ars] (B2) to node[sloped,above] {$\neg x_2$} (D1);
        \draw[ars] (B2) to node[sloped,above] {$x_9$} (D2);
        \draw[ars] (B2) to node[sloped,below] {$\neg x_7$} (D3);

        \draw[ars] (D1) to node[sloped,above] {$\#$} (B3);
        \draw[ars] (D2) to node[sloped,above] {$\#$} (B3);
        \draw[ars] (D3) to node[sloped,below] {$\#$} (B3);

        \draw[ars] (D1) to [out=+105,in=+75,loop] node[above] (R1) {$\#,\neg x_2$} (D1);
        \draw[ars] (D2) to [out=+105,in=+75,loop] node[above] {$\#,x_9$} (D2);
        \draw[ars] (D3) to [out=+105,in=+75,loop] node[above] {$\#,\neg x_7$} (D3);

        \node[nod] (F2) at ($(B4)!1/2!(B5)$) {};
        \node[nod] (F1) at ($(F2)+(0.0,1.5)$) {};
        \node[nod] (F3) at ($(F2)-(0.0,1.5)$) {};

        \draw[ars] (B4) to node[sloped,above] {$\neg x_8$} (F1);
        \draw[ars] (B4) to node[sloped,above] {$x_1$} (F2);
        \draw[ars] (B4) to node[sloped,below] {$x_9$} (F3);

        \draw[ars] (F1) to node[sloped,above] {$\#$} (B5);
        \draw[ars] (F2) to node[sloped,above] {$\#$} (B5);
        \draw[ars] (F3) to node[sloped,below] {$\#$} (B5);

        \draw[ars] (F1) to [out=+105,in=+75,loop] node[above] (R1) {$\#,\neg x_8$} (F1);
        \draw[ars] (F2) to [out=+105,in=+75,loop] node[above] {$\#,x_1$} (F2);
        \draw[ars] (F3) to [out=+105,in=+75,loop] node[above] {$\#,x_9$} (F3);

        \draw[acc] ($(B2.south west)-(0.0,1.7)$)  to node[below,sloped] {Clause $C_1$} ($(B1.south east)-(0.0,1.7)$);

        \draw[acc] ($(B3.south west)-(0.0,1.7)$)  to node[below,sloped] {Clause $C_2$} ($(B2.south east)-(0.0,1.7)$);

        \draw[acc] ($(B5.south west)-(0.0,1.7)$)  to node[below,sloped] {Clause $C_m$} ($(B4.south east)-(0.0,1.7)$);

      \end{tikzpicture}
    \end{center}
    \caption{Representation of $\As_2$, for $C_1=x_4 \vee x_7 \vee
      \neg x_1$, $C_2=\neg x_2 \vee x_9 \vee \neg x_7$, $C_m=\neg x_8 \vee x_1 \vee x_9$}
    \label{fig:a2}
  \end{figure}

  By definition, $\As_2$ verifies Condition~\eqref{item:code2}. It remains to verify that
  Condition~\eqref{item:code3} holds.

  \begin{lem} \label{lem:reduc-sat} If there are words $w_1,w_2$
    accepted by $\As_1,\As_2$ that encode the same assignment, then
    for all $k,d \in \nat$, they can be chosen such that $w_1 \kdltteq
    w_2$.
  \end{lem}

  \begin{proof}
    This is done by using the loops in $\As_1,\As_2$ to generate as many
    copies of the \kprofiles in $w_1,w_2$ as needed in order to get words
    that are $\kdltteq$-equivalent.
  \end{proof}

  We finish by proving that \Cs is satisfiable if and only if $L_1,L_2$
  are not \ltt-separable. Assume first that $L_1$ and
  $L_2$ are not \ltt-separable. By Fact~\ref{fct:assign},
  this means that there exist $w_1 \in L_1$ and $w_2 \in
  L_2$ sharing the same alphabet. By Condition~\eqref{item:code1}, $w_1$ encodes an
  assignment. Therefore, $w_2$ (which has the same alphabet) encodes the
  same assignment which satisfies \Cs by Condition~\eqref{item:code2}. Hence $\Cs$ is
  satisfiable.

  Conversely, assume that \Cs is satisfiable. We prove that for all $k,d
  \in \nat$,  $L_1,L_1$ are not $\ltt[k,d]$-separable. Set $k,d
  \in \nat$ and consider an assignment of truth values satisfying
  \Cs. By Conditions~\eqref{item:code1} and~\eqref{item:code2}, there must exist $w_1,w_2$ accepted by $\As_1,
  \As_2$ that both encode this assignment. It follows from
  Lemma~\ref{lem:reduc-sat} that $w_1,w_2$ can be chosen such that $w_1  
  \kdltteq w_2$. Therefore, $L_1,L_2$ are not
  $\ltt[k,d]$-separable, which terminates the proof.
\end{proof}

\section{The Case of Context-Free Languages}
\label{sec:cf}
\makeatletter{}%
In order to prove decidability of \ltt-separability for regular
languages, we needed three ingredients: Parikh's Theorem,
decidability of Presburger Arithmetic and Item~\eqref{item:b3} in
Theorem~\ref{thm:sepltt}. Since Parikh's Theorem holds not
only for regular languages but also for context-free languages, we
retain at least two of the ingredients in the context-free setting.

In particular, we can reuse the argument of Section~\ref{sec:ltt} to
prove that once the size $k$ of the infixes is fixed, separability by
\ltt is decidable for context-free languages. For any fixed $k \in
\nat$, we write $\ltt[k] = \bigcup_{d \in \nat} \ltt[k,d]$.

\begin{thm} \label{thm:lttkdecidcf}
  Let $L_1,L_2$ be context-free languages and $k \in \nat$. It is
  decidable whether $L_1,L_2$ are $\ltt[k]$-separable.
\end{thm}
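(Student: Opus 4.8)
The plan is to reuse, essentially verbatim, the argument of Section~\ref{sec:ltt} that reduced \ltt-separability of regular languages to Presburger arithmetic: once the profile size $k$ is fixed, that argument never used regularity, only Parikh's Theorem (Theorem~\ref{prop:parikh}) and the decidability of Presburger arithmetic, and both remain available for context-free languages. Concretely, I would first reduce $\ltt[k]$-separability to a statement about \kimages. Since $\ltt[k]=\bigcup_{d\in\nat}\ltt[k,d]$ and $\lttclos{L_1}{k}{d}$ is by definition the smallest $\ltt[k,d]$-language containing~$L_1$, any $\ltt[k,d]$-separator of $(L_1,L_2)$ must contain $\lttclos{L_1}{k}{d}$; hence $L_1,L_2$ are $\ltt[k]$-separable if and only if there is some $d\in\nat$ with $\lttclos{L_1}{k}{d}\cap L_2=\varnothing$, i.e., such that no words $w_1\in L_1$, $w_2\in L_2$ satisfy $w_1\kdltteq w_2$. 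This equivalence holds for arbitrary languages $L_1,L_2$. By Fact~\ref{lem:eqredef}, $w_1\kdltteq w_2$ is equivalent to $\pi_k(w_1)$ and $\pi_k(w_2)$ being equal componentwise up to threshold~$d$; therefore $L_1,L_2$ are $\ltt[k]$-separable if and only if
\[
\exists\, d\in\nat\ :\ \neg\bigl(\exists\,\bar x_1\in\pi_k(L_1)\ \ \exists\,\bar x_2\in\pi_k(L_2)\ :\ \bar x_1,\bar x_2\ \text{equal componentwise up to threshold}\ d\bigr).
\]

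Next I would note that every ingredient of this sentence is effectively Presburger. ``Equal componentwise up to threshold~$d$'' unfolds, over the $|A_k|$ coordinates indexed by \kprofiles, into the conjunction over all coordinates~$c$ of $(\bar x_1)_c=(\bar x_2)_c\ \vee\ \bigl((\bar x_1)_c\ge d\ \wedge\ (\bar x_2)_c\ge d\bigr)$, which is a Presburger formula with free variables $\bar x_1,\bar x_2,d$. By Theorem~\ref{prop:parikh}---whose statement and proof already cover context-free languages---the sets $\pi_k(L_1)$ and $\pi_k(L_2)$ are semilinear, and Presburger formulas defining them can be computed from $L_1,L_2$. Substituting these formulas for the predicates $\bar x_i\in\pi_k(L_i)$ produces a closed Presburger sentence that holds exactly when $L_1,L_2$ are $\ltt[k]$-separable. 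Decidability then follows from decidability of Presburger arithmetic.

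I do not expect a real obstacle: the proof is short. The one genuine difference from the regular case---and the reason the statement fixes~$k$---is that Item~\eqref{item:b3} of Theorem~\ref{thm:sepltt}, which for regular inputs bounds the required profile size by $4(|M|+1)$, has no context-free analogue: context-free languages have no finite recognizing monoid, and no uniform bound on~$k$ can be expected. So this method decides $\ltt[k]$-separability for each \emph{fixed}~$k$, but leaves open whether full \ltt-separability of context-free languages is decidable, which would additionally require controlling~$k$.
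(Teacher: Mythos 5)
Your proposal is correct and follows essentially the same route as the paper: Section~\ref{sec:cf} proves Theorem~\ref{thm:lttkdecidcf} precisely by reusing the Presburger argument of Section~\ref{sec:ltt}, observing that once $k$ is fixed the only ingredients are Fact~\ref{lem:eqredef}, the extension of Parikh's Theorem to \kimages (Theorem~\ref{prop:parikh}, already stated for context-free languages), and decidability of Presburger arithmetic. Your closing remark about the missing context-free analogue of Item~\eqref{item:b3} also matches the paper's own discussion of why full \ltt-separability fails to follow.
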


An interesting consequence of Theorem~\ref{thm:lttkdecidcf} is that
$\ltt[1]$-separability of context-free languages is decidable. A
language is $\ltt[1]$ if and only if it can be defined by a
first-order logic formula that can only test equality between
positions, but not ordering. This result is surprising since
membership of a context-free language in this class is undecidable. We
give a proof of this fact below, which is a simple
adaptation of the proof of Greibach's Theorem (which is in particular
used to prove that regularity of a context-free language is~undecidable).

\begin{thm} \label{thm:lttkdecidcf2} Let $L$ be a context-free
  language. It is undecidable whether $L \in \ltt[1]$.
\end{thm}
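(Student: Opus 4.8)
The plan is to give a computable reduction from the blank-tape halting problem for deterministic Turing machines to the question ``$L\in\ltt[1]$?''. Fix a deterministic Turing machine $M$. I will use the standard construction underlying Greibach's theorem, and in particular the proof that regularity of context-free languages is undecidable: from $M$ one effectively builds a context-free grammar generating the language $\mathrm{INV}_M\subseteq\Sigma_M^*$ of all words over a suitable finite alphabet $\Sigma_M=\Gamma\cup Q\cup\{\#\}$ that do \emph{not} encode a valid halting computation of $M$ on the empty input. (A computation history is written $\#C_0\#C_1^{R}\#C_2\#\cdots\#$, with alternate configurations reversed, so that a single pushdown stack can verify local consistency of consecutive configurations; a word is invalid as soon as it has the wrong shape, a wrong initial or final configuration, or \emph{some} pair of consecutive configurations that is inconsistent, and it is the ``some'' that makes the complement context-free.) Only two facts about $\mathrm{INV}_M$ are needed: (i) if $M$ does not halt on the empty input then no valid halting computation exists, so $\mathrm{INV}_M=\Sigma_M^*$; (ii) if $M$ halts on the empty input then, $M$ being deterministic, there is a \emph{unique} halting computation, so $\mathrm{INV}_M=\Sigma_M^*\setminus\{w_0\}$ for a single word $w_0$, and $w_0$ uses at least two distinct letters (it contains $\#$ together with tape and state symbols), hence is not a power of a single letter.

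Next I would check that the two cases are separated by membership in $\ltt[1]$. In case (i), $\Sigma_M^*$ is the union of all $\ltteq{1}{1}$-classes, so $\Sigma_M^*\in\ltt[1,1]\subseteq\ltt[1]$. In case (ii), since $w_0$ is not a single-letter power it admits a permutation $w_0'\neq w_0$; as a $1$-profile records exactly the letter at a position, $w_0'$ and $w_0$ have the same $1$-image, whence $w_0'\ltteq{1}{d}w_0$ for \emph{every} $d\in\nat$ by Fact~\ref{lem:eqredef}. But $w_0'\in\mathrm{INV}_M$ (as $w_0'\neq w_0$) while $w_0\notin\mathrm{INV}_M$, so $\mathrm{INV}_M$ is not a union of $\ltteq{1}{d}$-classes, i.e.\ $\mathrm{INV}_M\notin\ltt[1,d]$; as $d$ is arbitrary, $\mathrm{INV}_M\notin\ltt[1]$. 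Combining the two cases, the computable map sending $M$ to a grammar for $\mathrm{INV}_M$ satisfies: $\mathrm{INV}_M\in\ltt[1]$ if and only if $M$ does not halt on the empty input. Since the latter is undecidable, membership in $\ltt[1]$ for context-free languages is undecidable (in fact its ``yes'' instances are not even recursively enumerable).

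The only routine part is the construction of the context-free grammar for $\mathrm{INV}_M$, which I would simply cite, since it is verbatim the one appearing in Greibach-style arguments. The genuinely new observation, and the point I expect to be the crux, is the one used in case (ii): a \emph{cofinite} language $\Sigma_M^*\setminus\{w_0\}$ — which \emph{is} regular — still fails to be in $\ltt[1]$ as soon as $w_0$ is not a single-letter power; so the only thing to make sure of is that the chosen encoding of computations forces the unique halting history $w_0$ to use at least two letters, which it does automatically. I do not expect any further obstacle.
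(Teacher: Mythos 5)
Your proof is correct, but it takes a genuinely different route from the paper's. The paper reduces \emph{universality} of context-free languages to membership in $\ltt[1]$: from a context-free $L$ over $A$ it builds $L_1=(K\cdot\#\cdot A^{*})\cup(A^{*}\cdot\#\cdot L)$ for some fixed context-free $K\notin\ltt[1]$, and shows $L=A^{*}$ iff $L_1\in\ltt[1]$, the key lemma being that $\ltt[1]$ is closed under right residuals (so $L\neq A^{*}$ would force $K=L_1(\#w)^{-1}\in\ltt[1]$, a contradiction). You instead reduce the halting problem directly through the Hartmanis invalid-computation language, and your discriminating observation is that every $\ltt[1]$ language is commutative --- since the $1$-profile of a position is just its letter, two words with the same Parikh image are $\ltteq{1}{d}$-equivalent for every $d$ --- so a cofinite language $\Sigma_M^{*}\setminus\{w_0\}$ with $w_0$ not a power of a single letter cannot lie in $\ltt[1]$. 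The paper's argument is shorter once undecidability of CFL universality is taken as a black box, and it generalizes to any class of candidate separators that is closed under residuals, contains $A^{*}\#A^{*}$, and omits some context-free language; yours avoids the residual lemma, isolates the precise reason $\ltt[1]$ fails (commutativity), and yields the extra information that the positive instances are not even recursively enumerable. Both ultimately rest on the same Greibach-style machinery, which the paper cites rather than unfolds. The one detail to pin down in your write-up is a canonical encoding of configurations (e.g.\ no blank-padding), so that determinism really gives a \emph{unique} halting history $w_0$; this is routine.
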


\begin{proof}
  We reduce universality of context-free languages to this membership
  problem. Let $L$ be a context-free language over $A$ and let $\#
  \not\in A$. Let $K \not\in \ltt[1]$ be some context-free language
  and set $L_1 = (K \cdot \# \cdot A^{*}) \cup (A^{*} \cdot \# \cdot
  L)$. Clearly, a context-free grammar for $L_1$ can be computed from
  a context-free grammar for $L$. We show that $L=A^{*}$ if and only
  if~$L_1 \in \ltt[1]$.

  If $L=A^{*}$, then $L_1= A^{*} \cdot \# \cdot A^{*} \in
  \ltt[1]$. Conversely, assume that $L_1 \in \ltt[1]$, and suppose by
  contradiction that $L \neq A^{*}$. Pick $w \in A^{*}$ such that $w
  \not\in L$. By definition, $K=\{u \mid u \# w \in L_1\}$. One can
  verify that $\ltt[1]$ is closed under right residuals. Therefore,
  $K=L_1(\#w)^{-1} \in \ltt[1]$ which is a contradiction by definition
  of $K$.
\end{proof}

These two results may seem contradictory. Indeed in the setting of
regular languages, membership can be reduced to separability (a
language belongs to a class if the class can separate it from its
complement). However, context-free languages are not closed under
complement, which makes the reduction false in this larger setting.

An interesting question is whether decidability extends to full \lt-
and \ltt-separability of context-free languages. This would also be
surprising since membership of a context-free language in \lt or \ltt
are undecidable problems. Such a result would require to
generalize our third ingredient,  Item~\eqref{item:b3} in
Theorem~\ref{thm:sepltt}, to context-free languages. This means that
we would need a method for computing a bound on the size of the
infixes that a potential separator has to consider. It turns out that
this is not possible. 

\begin{thm} \label{thm:lttkdecidcf3}
  Let $L_1,L_2$ be context-free languages. It is undecidable 
  whether $L_1,L_2$ are \lt-separable. It is undecidable 
  whether
  $L_1,L_2$ are \ltt-separable.
\end{thm}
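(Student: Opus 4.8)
The plan is to reduce an undecidable problem — I would use the halting problem for Turing machines via the classical "valid computation histories" encoding, though the Post Correspondence Problem would serve equally well — to \lt-separability of context-free languages, arranging the reduction so that it settles \ltt-separability at the same time. From a Turing machine $M$ (run, say, on empty input) I would effectively produce two context-free languages $L_1,L_2$ over a common alphabet, together with grammars for them, with two properties, depending on whether $M$ halts.

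\emph{Positive case.} If $M$ halts, then $L_1\cap L_2\neq\varnothing$. The two languages are the two halves of the verification of "accepting computation of $M$", each checkable with one pushdown store (the standard trick of reversing every second configuration block), so their intersection is exactly the set of accepting computation histories. A word in that intersection cannot be put inside a separator and excluded from it simultaneously, so $L_1,L_2$ are separable by \emph{no} language at all, in particular by no \ltt and no \lt language.

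\emph{Negative case.} If $M$ does not halt, I claim $L_1,L_2$ are \lt-separable; concretely, $\ltclos{L_1}{k}$ separates them for a fixed small $k$. Here the encoding must be designed with care: I would make the legal words \emph{locally rigid}, in the sense that their block structure, delimiters, and — crucially — the length of every configuration block and all padding are forced to coincide on the two sides, so that the only way a word passing $L_1$'s check can fail $L_2$'s check is an honest \emph{letter} mismatch inside some block. Such a mismatch surfaces as a forbidden infix of bounded length (a pair of incompatible adjacent cells), exactly what an \lt test of radius $k$ detects. Then one argues: if some $w_2\in L_2$ were $\klteq$ to some $w_1\in L_1$, rigidity forces $w_1$ and $w_2$ to agree cell by cell, so $w_1=w_2\in L_1\cap L_2$ would encode a halting computation — contradiction. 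Hence $\ltclos{L_1}{k}\cap L_2=\varnothing$; since $\ltclos{L_1}{k}\supseteq L_1$ is an \lt (hence \ltt) language, this gives both \lt- and \ltt-separability, and combined with the positive case it yields undecidability of both problems.

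The main obstacle is precisely this rigidity. A naive encoding such as $L_i=\{\,w\,\#\,h_i(w)\,\}$ for homomorphisms $h_1,h_2$ in a PCP-style reduction fails: two such words can realize the same set of short infixes and the same bounded prefix and suffix even when $h_1(w)\neq h_2(w)$, because discrepancies of \emph{length} and of \emph{periodic structure} are invisible to \lt — for instance the words coincide on all infixes up to length $k$ as soon as the differing region is a long block of a single letter, so the languages would be non-separable regardless of $M$. The construction must therefore leave no such slack: pad both tracks to a common length over a fixed finite padding alphabet, never allow a block to be silently lengthened, and force every discrepancy to appear as a local pattern. Once this rigidity is in place the remaining points — context-freeness of $L_1,L_2$, computability of the grammars from $M$, and the two implications above — are routine. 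As a byproduct, undecidability rules out any computable bound on the parameters of a separator, so this also shows, as announced before the statement, that Item~\eqref{item:b3} of Theorem~\ref{thm:sepltt} does not extend to context-free input languages.
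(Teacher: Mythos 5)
Your reduction runs in the opposite direction from the one that can be made to work, and the step you yourself flag as ``the main obstacle'' is not an engineering detail but a genuine impossibility. In your negative case you need: if $M$ does not halt, then for some fixed $k$ no $w_1\in L_1$ and $w_2\in L_2$ satisfy $w_1\klteq w_2$; and your mechanism for this is that any violation of $L_2$'s check by a word passing $L_1$'s check ``surfaces as a forbidden infix of bounded length''. But the check that a configuration $c_{i+1}$ follows from $c_i$ compares cell $j$ of one block with cell $j$ of the adjacent block, i.e.\ positions at distance up to $|c_i|$ apart, which is unbounded; no padding over a finite alphabet turns this into a bounded-radius test. Indeed, if every failure of the transition check were witnessed by a forbidden infix from a finite computable set $F$ of words of length at most $k$, then ``$L_1\cap L_2=\varnothing$'' would reduce to the decidable inclusion of the context-free language $L_1$ in the regular language $\bigcup_{f\in F}B^*fB^*$, and halting would be decidable. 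Concretely, a word of $L_2$ that cheats on a single transition need not introduce any new short infix, so it can share all its $k$-profiles with an honest word of $L_1$, and $\ltclos{L_1}{k}$ fails to separate for every fixed $k$. Your positive case (halting implies $L_1\cap L_2\neq\varnothing$ implies non-separability) is fine, but the reduction collapses without the other half.

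The paper's proof (following Szymanski--Williams) orients the reduction the other way precisely to avoid this. There, \emph{non-halting} is the non-separable case: the genuine infinite run yields, for every $k$, a pair of words in $L_1\times L_2$ agreeing on a huge common prefix and differing only in the number of trailing $\gamma$'s ($\gamma^k$ versus $\gamma^{2k}$), and such pairs are \kdltteq-equivalent for all $k,d$, which defeats every \lt, \ltt, and indeed every regular separator. \emph{Halting} is the separable case: if $M$ halts in $\ell$ steps, then any sufficiently long word of $L_1$ and any sufficiently long word of $L_2$ already differ within their prefixes of length $O(\ell^2)$ (Lemma~\ref{lem:reduc}), so $K_1\cup K_2\cdot B^*$, with $K_1$ the short words of $L_1$ and $K_2$ the finite set of their length-$O(\ell^2)$ prefixes, is an \lt separator. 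The finite bound $\ell$ supplied by halting is exactly the resource your version lacks in the case where it must exhibit a separator.
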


It was already known~\cite{szygram} that separability by a regular
language is undecidable for context-free languages. The proof of
Theorem~\ref{thm:lttkdecidcf3} is essentially the same since the
reduction provided in~\cite{szygram} actually works for any class of
regular separators that contains all languages of the form $K_1A^{*}\cup
K_2$ where $K_1,K_2$ are finite languages. Since this is
clearly the case for both \lt and \ltt, Theorem~\ref{thm:lttkdecidcf3}
follows. For the sake of completeness, we provide a version of this proof
tailored to \lt and \ltt below.

\begin{proof}[Proof of Theorem~\ref{thm:lttkdecidcf3}]
  The proof is done by reduction of the halting problem on Turing
  machines to \lt-separability and \ltt-separability. The reduction is
  the same for both \lt and \ltt and is essentially a rewriting of a
  proof of~\cite{szygram}.

  Consider a deterministic Turing machine $\mathcal{M}$. We prove that it is
  possible to compute context-free languages $L_1,L_2$ from $\mathcal{M}$ such
  that $\mathcal{M}$ halts on the empty input if and only if $L_1,L_2$ are
  \lt-separable, if and only if $L_1,L_2$ are \ltt-separable.

  Let $A$ be the alphabet of $\mathcal{M}$, let $Q$ be its set of
  states, and let $B=A \cup (A \times Q) \cup \{\#,\gamma\}$ where
  $\#,\gamma \not\in A$. As usual, we encode configurations of
  $\mathcal{M}$ as words in $A^{*} \cdot (A \times Q) \cdot A^{*}
  \subseteq B^*$: the word $(u,(q,a),v)$ means that $\mathcal{M}$ is
  in state $q$, the tape holds $u\cdot a\cdot v$, and the head
  currently scans the distinguished $a$ position. Finally, if $w \in
  B^*$, we denote by $w^{R}$ the mirror image of $w$. We can now
  define the context-free languages $L_1,L_2$ over $B$. The language $L_1$
  contains all words of the form:
  \[
  c^{}_1\#c_2^{R}\#c^{}_3\#c_4^R \cdots \#c^{}_{2k-1}\#c_{2k}^R \# \gamma^k
  \]
  such that $c_1,\dots,c_{2k}$ are encodings of configurations of $\mathcal{M}$,
  and for all $i \leq k$, $c_{2i-1}\vdash_{\mathcal{M}} c_{2i}$ (\emph{i.e.},
  $c_{2i}$ is the configuration of $\mathcal{M}$ reached after one computation
  step from configuration $c_{2i-1}$). Similarly, $L_2$ contains all
  words of the form:
  \[
  c^{}_1\#c_2^{R}\#c^{}_3\#c_4^R \cdots \#c^{}_{2k-1}\#c_{2k}^R \# \gamma^{2k}
  \]
  such that $c_1,\dots,c_{2k}$ are encodings of configurations of
  $\mathcal{M}$, $c_1$ is the initial configuration of $\mathcal{M}$
  starting with an empty input and for all $i \leq k-1$,
  $c_{2i}\vdash_{\mathcal{M}}c_{2i+1}$. It is classical that $L_1,L_2$
  are indeed context-free and that grammars for $L_1,L_2$ can be
  computed from $\mathcal{M}$. We now make a simple observation about
  prefixes that are common to both languages. Let
  $c_1,c_2,c_3,c_4,\ldots,c_{i-1},c_{i}$ be a sequence of
  configurations and let $w\in B^{*}$ be the word
  \[
  \begin{array}{lcll}
    w & = & c^{}_1\#c_2^{R}\#c^{}_3\#c_4^R \cdots \#c^{}_{2k-1}\#c_{2k}^R\#
    & \text{(if $i=2k$),} \\[.8ex]
    w & = & c^{}_1\#c_2^{R}\#c^{}_3\#c_4^R \cdots
    \#c^{}_{2k-1}\#c_{2k}^{R}\#c^{}_{2k+1}\# & \text{(if $i=2k+1$).}
  \end{array}
  \]
  By definition of $L_1,L_2$, we have the following fact:

   \begin{fact} \label{fct:reduc} If $w$ is both a prefix of some word
    in $L_1$ and some word in $L_2$, then $c_1,c_2,\ldots,c_{i}$ are
    the first $i$ configurations of the run of $\mathcal{M}$ starting from the
    empty input. Moreover, if $i=2k$ and
    \[
    c^{}_1\#c_2^{R}\#c^{}_3\#c_4^R \cdots \#c^{}_{2k-1}\#c_{2k}^R\#c\#
    \]
  is a prefix of a word in $L_2$, then $c$ is configuration $(i+1)$ in
  the run. Symmetrically, if $i=2k+1$ and
  \[
  c^{}_1\#c_2^{R}\#c^{}_3\#c_4^R \cdots \#c^{}_{2k-1}\#c_{2k}^{R}\#c^{}_{2k+1}\#c^{R}\#
  \]
  is a prefix of a word in $L_1$, then $c$ is configuration $(i+1)$ in
  the run.
\end{fact}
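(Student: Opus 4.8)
The plan is to exploit the complementary roles of the two languages: $L_1$ verifies every \emph{odd-to-even} transition $c_{2j-1}\vdash_{\mathcal{M}}c_{2j}$, whereas $L_2$ verifies every \emph{even-to-odd} transition $c_{2j}\vdash_{\mathcal{M}}c_{2j+1}$ and in addition pins $c_1$ to the initial configuration. A word that is simultaneously a prefix of a word in $L_1$ and of a word in $L_2$ therefore inherits \emph{all} consecutive transitions together with the correct starting point, and determinism of $\mathcal{M}$ forces the blocks to be the genuine run. The only delicate point will be the bookkeeping forced by the alternating forward/reversed convention and by the parity split of the transitions across $L_1$ and $L_2$; once the unambiguous $\#$-parsing is in place, the rest is direct.

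First I would fix the block structure. Configuration encodings lie in $A^{*}\cdot(A\times Q)\cdot A^{*}$ and thus contain neither $\#$ nor $\gamma$, so the symbol $\#$ delimits blocks unambiguously and the $\gamma$'s occur only after the final $\#$. Writing $w$ in the stated form (ending in $\#$ after exactly $i$ configuration blocks, odd-indexed blocks written forward and even-indexed blocks reversed), suppose $w$ is a prefix of $W_1\in L_1$. Then $i$ cannot exceed the number of configuration blocks of $W_1$, since otherwise $w$ would reach into the trailing $\gamma$'s, which no configuration block contains; hence the first $i$ blocks of $W_1$ coincide with those of $w$. Consequently $c_1,\dots,c_i$ are literally the first $i$ configuration encodings of $W_1$, and the defining constraints of $L_1$ yield $c_{2j-1}\vdash_{\mathcal{M}}c_{2j}$ for all $2j\le i$. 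Applying the same argument to a word $W_2\in L_2$ having $w$ as a prefix gives $c_{2j}\vdash_{\mathcal{M}}c_{2j+1}$ for all $2j+1\le i$, together with the fact that $c_1$ is the initial configuration.

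Combining the two sources, every consecutive pair $c_j,c_{j+1}$ with $j+1\le i$ satisfies $c_j\vdash_{\mathcal{M}}c_{j+1}$ (the odd indices coming from $L_1$, the even ones from $L_2$), and $c_1$ is the initial configuration. Since $\mathcal{M}$ is deterministic, each configuration has a unique successor, so by induction $c_1,\dots,c_i$ are exactly the first $i$ configurations of the run from the empty input, which proves the first assertion.

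For the ``moreover'' part I would simply read off one further transition. If $i=2k$ and $c_1\#c_2^{R}\#\cdots\#c_{2k}^{R}\#c\#$ is a prefix of a word in $L_2$, then in this longer prefix $c$ occupies block $2k+1$ (odd, hence not reversed), so the even-to-odd constraint of $L_2$ at index $2k$ gives $c_{2k}\vdash_{\mathcal{M}}c$; as $c_{2k}$ is configuration $i$ of the run, $c$ is configuration $i+1$. The case $i=2k+1$ is symmetric: the extra block $c^{R}$ sits at the even position $2k+2$, and being a prefix of a word in $L_1$ we invoke the odd-to-even constraint of $L_1$ on the pair $(c_{2k+1},c)$, obtaining $c_{2k+1}\vdash_{\mathcal{M}}c$ and hence that $c$ is configuration $i+1$.
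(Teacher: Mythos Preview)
Your proof is correct and is precisely the argument one would expect; the paper itself does not give a proof of this fact, stating it simply as an immediate consequence of the definitions of $L_1$ and $L_2$. You have spelled out what the paper leaves implicit: the unambiguous $\#$-parsing, the complementary parity constraints contributed by $L_1$ and $L_2$, and the appeal to determinism of $\mathcal{M}$ to conclude.
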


It remains to prove that this is indeed a reduction, \emph{i.e.}, that
$\mathcal{M}$ halts on the empty input if and only if $L_1,L_2$ are \lt-separable, if
and only if $L_1,L_2$ are \ltt-separable.
Assume first that $\mathcal{M}$ does not halt on empty input. This means that
the run of $\mathcal{M}$ is an infinite sequence of configurations
$c_1,c_2,c_3,\dots$. By definition of $L_1,L_2$, for all $k \in \nat$:
\[
\begin{array}{ll}
  c^{}_1\#c_2^{R}\#c^{}_3\#c_4^R \cdots \#c^{}_{2k-1}\#c_{2k}^R \# \gamma^k &\in L_1, \\[.8ex]
  c^{}_1\#c_2^{R}\#c^{}_3\#c_4^R \cdots \#c^{}_{2k-1}\#c_{2k}^R \# \gamma^{2k} &\in L_2.
\end{array}
\]
It is then easy to deduce that $L_1,L_2$ cannot be separated by a \lt
or \ltt language (actually not even by a regular language).

\smallskip

Conversely, assume that $\mathcal{M}$ halts on the empty input within $\ell$
 steps, \emph{i.e.},
$c_1\vdash_{\mathcal{M}}c_2\vdash_{\mathcal{M}}\ldots\vdash_{\mathcal{M}}c_\ell$
and $c_\ell$ is the halting configuration. Before defining an \lt
separator, we observe that sufficiently long prefixes of sufficiently long words of $L_1,L_2$
are~distinct.

\begin{lem} \label{lem:reduc} Let $w_1 \in L_1$ and $w_2\in L_2$ of length
  greater than $(\ell+1)^2+2(\ell+1)$ and let $u_1,u_2$ be the prefixes of length
  $\ell(\ell+1)+2\ell+1$ of $w_1,w_2$, respectively. Then $u_1 \neq u_2$.
\end{lem}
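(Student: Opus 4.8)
The plan is to argue by contradiction: suppose $u_1 = u_2 =: u$, and derive a contradiction from $|u| = \ell(\ell+1) + 2\ell + 1 =: L$. First I would record that, since $|w_1|, |w_2| > (\ell+1)^2 + 2(\ell+1) > L$, the word $u$ is a proper prefix of both $w_1$ and $w_2$ — this is the only use of the precise length hypothesis. Let $c_1 \# c_2^R \# c_3 \# \cdots \# c_m^{(R)} \#$ denote the prefix of $u$ ending at the last $\#$ occurring in $u$ (odd-indexed configurations written, even-indexed ones mirrored, exactly as in the definitions of $L_1$ and $L_2$), and write $u = c_1 \# \cdots c_m^{(R)} \# \, t$, where the ``overflow'' $t$ contains no $\#$. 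Since the first configuration of $w_2 \in L_2$ is the initial configuration of $\mathcal{M}$, of length at most $\ell$, and $L > \ell$, the prefix $u$ extends past the first $\#$, so $m \geq 1$. Applying Fact~\ref{fct:reduc} to $c_1 \# \cdots c_m^{(R)} \#$ — a prefix both of $w_1 \in L_1$ and of $w_2 \in L_2$ — shows that $c_1, \dots, c_m$ are the first $m$ configurations of the run of $\mathcal{M}$ on the empty input. That run has exactly $\ell$ configurations, so $m \le \ell$, and each $c_j$ has length at most $\ell$ (the tape grows by at most one cell per step), whence $|c_1 \# \cdots c_m^{(R)} \#| \leq m(\ell+1) \leq \ell(\ell+1)$.

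The bulk of the argument is to bound $|t|$. The overflow $t$ is empty, begins with a configuration symbol, or begins with $\gamma$ (it cannot begin with $\#$, since configuration encodings are nonempty). If $t = \varepsilon$, then $|u| \leq \ell(\ell+1) < L$, a contradiction. If $t$ begins with a configuration symbol, then $w_1$ and $w_2$ both carry a complete $(m{+}1)$-st configuration-block; here I would invoke the asymmetry between $L_1$ and $L_2$ — if $m$ is even, the transition constraint of $L_2$ forces the $(m{+}1)$-st configuration of $w_2$ to be the successor of $c_m$, and if $m$ is odd, the constraint of $L_1$ forces the same for $w_1$. That successor is the $(m{+}1)$-st configuration of the run, which exists only for $m \leq \ell - 1$: if $m = \ell$, it would be a successor of the halting configuration $c_\ell$, impossible. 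So $m \leq \ell - 1$, and by maximality of $m$ the overflow $t$ is an initial segment of that $(m{+}1)$-st configuration-block not reaching its trailing $\#$, hence $|t| \leq |c_{m+1}| \leq \ell$ and $|u| \leq (\ell-1)(\ell+1) + \ell = \ell^2 + \ell - 1 < L$.

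The remaining case is $t$ beginning with $\gamma$. A $\gamma$ occurs in a word of $L_1$ (resp.\ $L_2$) only after every configuration-block, so $u$ has by then consumed all configuration-blocks of $w_1$ and of $w_2$; thus $m$ is exactly their number — in particular $m$ is even — the $\gamma$-suffix of $w_1$ is $\gamma^{m/2}$, and $t$ is a prefix of it, so $|t| \leq m/2 \leq \ell/2$ and $|u| \leq \ell(\ell+1) + \ell/2 < L$. In every case $|u| < L$ contradicts $|u| = L$, proving $u_1 \neq u_2$. I expect the middle case to be the only delicate point: it requires exploiting that $L_1$ tracks odd-to-even transitions while $L_2$ tracks even-to-odd ones in order to identify the $(m{+}1)$-st configuration visible in the common prefix with an actual configuration of the run, and then using that $c_\ell$ has no successor. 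Everything surrounding it is routine length bookkeeping.
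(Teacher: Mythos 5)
Your proof is correct and follows essentially the same route as the paper's: take the maximal block-aligned prefix $c_1\#\cdots c_m^{(R)}\#$ of the common prefix, use Fact~\ref{fct:reduc} to identify $c_1,\dots,c_m$ with the first $m$ configurations of the (length-$\ell$) run, and then bound the total length below $\ell(\ell+1)+2\ell+1$ by a case analysis on the overflow ($\gamma$-prefix versus configuration-prefix). Your middle case merely spells out explicitly the parity argument that the paper packages into the ``Moreover/Symmetrically'' clauses of Fact~\ref{fct:reduc}, and your length bounds are slightly tighter than needed; there is no substantive difference.
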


\begin{proof}
  We proceed by contradiction. Assume $u_1 = u_2$ and let $u$ be the
  largest prefix of $u_1=u_2$ of the form
  \[
  \begin{array}{l@{\;}ll}
    u=&c^{}_1 \# c_2^{R} \# c^{}_3 \# c_4^R \cdots \# c^{}_{i-1} \#
    c_{i}^R \#,&\text{or} \\[.8ex]
    u=&c^{}_1 \# c_2^{R} \# c^{}_3 \# c_4^R \cdots \# c_{i-1}^{R} \#
    c_{i} \#&\text{depending on whether $i$ is even or not.}
  \end{array}
  \]
  By definition $u_1=u_2=u \cdot v$, where $v$ can be of the form
  $\gamma^{j}$ with $j \leq i$ or is the prefix of $c$ or $c^R$ for
  some configuration $c$ of $\mathcal{M}$.

  Assume first that $v=\gamma^j$ with $j\leq i$. By
  Fact~\ref{fct:reduc}, $c_1,c_2,c_3,c_4,\ldots,c_{i-1},c_{i}$ are the
  first $i$ configurations of the run of $\mathcal{M}$ starting from the empty
  input. Since $\mathcal{M}$ halts in $\ell$ steps, this means that $i \leq \ell$ and
  that each configuration $c_i$ is of length at most $\ell+1$. It
  follows that $u$ is of length at most $\ell(\ell+1) + \ell$. Therefore,
  $u_1=u_2$ is of length at most $\ell(\ell+1)+ \ell+j \leq\ell(\ell+1)+ \ell+i \leq
  \ell(\ell+1) + 2\ell$, in contradiction with the definition of $u_1,u_2$.

  Assume finally that $v$ is the prefix of $c$ or $c^R$ for some configuration $c$ of
  $\mathcal{M}$. By Fact~\ref{fct:reduc}, $c$ is so that
  $c_1,c_2,c_3,c_4,\ldots,c_{i-1},c_{i},c$ are the first $i+1$
  configurations of the run of $\mathcal{M}$ starting from the empty
  input. Since $\mathcal{M}$ halts in $\ell$ steps, this means that $i+1 \leq \ell$
  and that each configuration is of length at most $\leq \ell+1$. It
  follows that $u_1=u_2$ is of length at most $\ell(\ell+1)+ \ell$ which is
  again a contradiction.
\end{proof}

Let $K_1$ be the language of words of $L_1$ of length less than
$\ell(\ell+1)+2\ell+1$. Similarly, let $K_2$ be the set of prefixes of length
$\ell(\ell+1)+2\ell+1$ of words in $L_1$. Finally, set $L= K_1 \cup K_2 \cdot
B^{*}$. By definition, $K_1,K_2$ are finite languages, hence $L$ is
clearly \lt (and therefore \ltt). It remains to prove that $L$ is a
separator.

By definition, we have $L_1 \subseteq L$. Let us prove that $L \cap
L_2 = \varnothing$. We proceed by contradiction: assume that there
exists $w \in L \cap L_2$. If $w \in K_1$, then the contradiction is
immediate because, by definition of $L_1$, $w$ must be in of the form
\[
c^{}_1\#c_2^{R}\#c^{}_3\#c_4^R \cdots \#c^{}_{2k-1}\#c_{2k}^R \# \gamma^k
\]
Now, belonging to $L_2$ would require twice as many letters $\gamma$ at the end
of the word. On the other hand, if $w \in K_2 \cdot B^{*}$, then, by definition of $K_2$
there exists some word of length $\ell(\ell+1)+2\ell+1$ that is both a prefix
of a word in $L_1$ and a prefix of a word in $L_2$. This contradicts
Lemma~\ref{lem:reduc}.

We deduce that $L_1,L_2$ are \lt-separable, and therefore also
\ltt-separable, which concludes the proof of
Theorem~\ref{thm:lttkdecidcf3}.
\end{proof}

\section{Conclusion}
\label{sec:conc}
\makeatletter{}%
We have shown separation theorems for both \lt and \ltt. In both cases,
we provide a decision procedure to test separability,
running in {\sc co-Nexptime} and {\sc 2-Expspace}
respectively. Another contribution is a description of possible
separators, given by bounds defining them.

Several questions remain open in this line of research. A first one is
to obtain tight complexity bounds for both classes. While we have {\sc
  co-Nexptime} and {\sc 2-Expspace} upper bounds for \lt and \ltt
respectively, we have only {\sc co-NP} lower bounds.  The upper bounds
rely on a reduction to the case $k=1$, \emph{i.e.}, a translation to
the special case when the size of infixes is fixed to $1$. This translation is exponential wrt.\
the size of the input automata. Improving the upper bounds would likely require improving
this reduction.

Another question is to generalize our techniques to other settings, and to
obtain transfer results. First, one can consider other fragments for
separability.  For instance, if separation is decidable for some fragment of
first-order logic, is it still decidable when adding some predicate to the
logic, such as the successor? In the present paper, this is what we do for the
specific case of \ltt, whose corresponding logic, \fos, is obtained by adding the successor to
first-order logic with only equality on positions and alphabetic
predicates. Another natural example is adding modulo predicates, which would
allow to treat \lttm for instance, the generalization of \ltt in which infixes
can now also be counted modulo constants. Generalizing our results to more
complex structures such as trees would also be interesting. However, in the
setting of trees, while decidable characterizations are known for both \lt and
\ltt~\cite{bsltt,pslt}, no delay theorem is known. This makes separation a
challenging problem as our techniques rely on a generalization of this
theorem.

\mypar{Acknowledgement.} We thank the referees for their careful reading and
helpful suggestions.

\bibliographystyle{abbrv}%
%

 %


\end{document}